\newcommand{\red}{\color{red}}
\newtheorem{theorem}{Theorem}
\newtheorem{lemma}[theorem]{Lemma}
\newtheorem{remark}[theorem]{Remark}
\newtheorem{proposition}[theorem]{Proposition}
\newtheorem{example}[theorem]{Example}
\newcommand{\ord}{{\mathrm{ord}}}
\newcommand{\lcm}{{\mathrm{lcm}}}
\newcommand{\gf}{{\mathrm{GF}}}
\newcommand{\C}{{\mathcal{C}}}
\begin{document}
%
% paper title
% can use linebreaks \\ within to get better formatting as desired
\title{Two classes of narrow-sense BCH codes and their duals
\thanks{The work of Chengju Li was supported by the National
Natural Science Foundation of China (12071138), Shanghai Rising-Star Program (22QA1403200), the open research fund of National Mobile Communications Research Laboratory of Southeast University (2022D05), and the Shanghai Trusted Industry Internet Software Collaborative Innovation Center.}
}

\author{Xiaoqiang Wang, Jiaojiao Wang, Chengju Li*, Yansheng Wu{\thanks{Corresponding author*.
\newline \indent ~~Xiaoqiang Wang is with Hubei Key Laboratory of Applied Mathematics, Faculty of Mathematics and Statistics, Hubei University, Wuhan 430062, China (E-mail: waxiqq@163.com).
\newline \indent ~~ Jiaojiao Wang is with Data Science and Information Technology Research Center, Tsinghua-Berkeley Shenzhen Institute, Tsinghua Shenzhen Inter national Graduate School, Shenzhen, China. (E-mail:  wjj22@mails.tsinghua.edu.cn).
\newline \indent ~~ Chengju Li is the Shanghai Key Laboratory of Trustworthy Computing, East China Normal University,
Shanghai, 200062, China; and is also with the National Mobile Communications Research Laboratory, Southeast University, Nanjing 210096, China  (E-mail:  cjli@sei.ecnu.edu.cn).
\newline \indent ~~Yansheng Wu is with School of Computer Science, Nanjing University of Posts and Telecommunications, Nanjing 210023, China (E-mail: yanshengwu@njupt.edu.cn).}}

}

\maketitle

\begin{abstract}
BCH codes and their dual codes are two special subclasses of cyclic codes and are the best linear codes in many cases. A lot of progress on the study
of BCH cyclic codes has been made, but little is known about the minimum distances of the duals of BCH codes.  Recently, a new concept called dually-BCH code was introduced to
investigate the duals of BCH codes and the lower bounds on their minimum distances in \cite{GDL21}. For a prime power $q$ and an integer $m \ge 4$, let $n=\frac{q^m-1}{q+1}$ \ ($m$ even), or $n=\frac{q^m-1}{q-1}$ \ ($q>2$).
 In this paper, some sufficient and necessary conditions in terms of the designed distance will be given to ensure that the narrow-sense BCH codes of length $n$ are dually-BCH codes, which extended the results in \cite{GDL21}. Lower bounds on the minimum distances of their dual codes are developed for $n=\frac{q^m-1}{q+1}$ \ ($m$ even). As byproducts, we present the largest coset leader $\delta_1$ modulo $n$ being of two types, which proves a conjecture in \cite{WLP19} and partially solves an open problem in \cite{Li2017}. We also investigate the parameters of the narrow-sense BCH codes of length $n$ with design distance $\delta_1$. The BCH codes presented in this paper have good parameters in general.
\end{abstract}

%The question as to whether the dual code of a BCH code is still a BCH code very hard to answer in general. We call a BCH code is a dually-BCH code if its dual is also a BCH code.
% Note that keywords are not normally used for peerreview papers.
\begin{IEEEkeywords}
BCH code,  \and cyclic code, \and dually-BCH code, \and dual code.
\end{IEEEkeywords}

% For peer review papers, you can put extra information on the cover
% page as needed:
% \ifCLASSOPTIONpeerreview
% \begin{center} \bfseries EDICS Category: 3-BBND \end{center}
% \fi
%
% For peerreview papers, this IEEEtran command inserts a page break and
% creates the second title. It will be ignored for other modes.
\IEEEpeerreviewmaketitle

\section{Introduction}\label{sec-auxiliary}

Let $\gf(q)$ be the finite field with $q$ elements, where $q$ is a prime power. Let $n$, $k$ be two positive integers such that $k\leq n$. An $[n, k, d]$ linear code $\mathcal{C}$ over the finite field $\gf(q)$ is a $k$-dimensional linear subspace of $\gf(q)^n$ with minimum (Hamming) distance $d$. The dual code of $\mathcal{C}$, denoted by $\mathcal{C}^{\perp}$, is defined by
$$\mathcal{C}^{\perp}=\{\mathbf{b} \in \gf(q)^n\,:\,\mathbf{b}\mathbf{c}^{T}=0 \,\,\text{for any $\mathbf{c} \in \mathcal{C}$}\}, $$
where $\mathbf{b}\mathbf{c}^{T}$ is the standard inner product of two vector $\mathbf{b}$ and $\mathbf{c}$. If the code $\mathcal{C}$ is closed under the cyclic shift, i.e., if
$(c_0,c_1, c_2, \cdots, c_{n-1}) \in \C$ implies $(c_{n-1}, c_0, c_1, c_2,\cdots, c_{n-2})
\in \C$, then $\mathcal{C}$ is called a {\it cyclic code}.
 By identifying any vector $(c_0,c_1,\cdots,c_{n-1}) \in \gf(q)^n$ corresponds to a polynomial
$$c_0+c_1x+c_2x^2+\cdots+c_{n-1}x^{n-1} \in \gf(q)[x]/(x^n-1),$$
then any cyclic code $\C$ of length $n$ over $\gf(q)$ corresponds to a subset of the quotient ring
$\gf(q)[x]/(x^n-1)$.
  Since every ideal of $\gf(q)[x]/(x^n-1)$ must be principal, $\C$ can be expressed as $\C=\langle g(x)\rangle$, where
$g(x)$ is a monic polynomial with the smallest degree and is called the {\it generator polynomial}. Let $h(x)=(x^n-1)/g(x)$, then $h(x)$ is referred to as the {\em check polynomial} of $\C$.
The zeros of $g(x)$ and $h(x)$ are called zeros and non-zeros of $\C$.

 An $[n, k, d]$ linear
code over $\gf(q)$ is said to be \textit{distance-optimal} (respectively,
\textit{dimension-optimal} and \textit{length-optimal}) if there does not exist $[n, k, d' \geq d+1]$ (respectively,
$[n, k' \geq k+1, d]$ and $[n' \leq n-1, k, d]$) linear code over $\gf(q)$. A code is called optimal code if it is
 length-optimal, or dimension-optimal, or distance-optimal, or meets a bound for linear codes.

Let $n$ be a positive integer with $\gcd(n,q)=1$. Let $\ell=\ord_{n}(q)$ be the order of $q$ modulo $n$, and
let $\alpha$ be a generator of the group $\gf(q^\ell)^*$. Put $\gamma=\alpha^{(q^\ell-1)/n}$,
then $\gamma$ is a primitive $n$-th root of unity in the finite field $\gf(q^\ell)$. For any $i$ with $0\leq i\leq q^\ell-2$,
let $m_i(x)$ denote the minimal polynomial of $\gamma^i$ over $\gf(q)$. For any $\delta$
 with $2 \leq \delta \leq n$, let
\begin{eqnarray*}
g_{(q,n,\delta,b)}=\lcm \left(m_{b}(x), m_{b+1}(x), \cdots, m_{b+\delta-2}(x)\right),
\end{eqnarray*}
where $b$ is an integer and lcm denotes the least common multiple of these minimal polynomials. Let $\C_{(q,n,\delta,b)}$ denote
the cyclic code of length $n$ with generator polynomial $g_{(q,n,\delta,b)}$, then $\C_{(q,n,\delta,b)}$ is called a
 BCH code with {\it designed distance} $\delta$. If $b=1$, then $\C_{(q,n,\delta,1)}$ is called a {\it narrow-sense BCH code}. In this case, for convenience we will use $\mathcal{C}_{(q,n,\delta)}$ in the sequel. A BCH code is called {\it dually-BCH code} if both the BCH code and its dual are a BCH code with respect to an $n$-th primitive root of unity $\gamma$. This concept was introduced in \cite{GDL21} to
investigate the duals of BCH codes and the lower bounds on their minimum distances.

BCH codes were invented in 1959 by Hocquenghem \cite{Hocquenghem59}, and independently in 1960  Bose and Ray-Chaudhuri \cite{Bose62}. They were extended to BCH
codes over finite fields by Gorenstein and Zierler in 1961 \cite{Gorenstein61}. In the past several decades,  BCH codes have been widely studied and are treated in almost every book on coding theory since
they are a special class of cyclic codes with interesting properties and applications and are usually among the best cyclic codes.
The reader is referred to, for example, \cite{Aly07,Augot94,Charpin90,Ding2015,Ding15,Ding17,Lid017,Lid17,Liu17, Yue15,Dianwu96} for more information.

In \cite{WLP19}, let $m\geq 6$ be even, the authors described the largest coset leader $\delta_1$ modulo $n=\frac{q^m-1}{q+1}$ for $q=2,3$ and studied the parameters of the narrow-sense BCH codes with design distance $\delta_1$. In addition, the authors gave a conjecture on the largest coset leader  modulo $n=\frac{q^m-1}{q+1}$ for $q>3$. In \cite{GDL21}, the authors obtained a sufficient and necessary condition for the code $\mathcal{C}_{(q,\frac{q^m-1}{q-1},\delta)}$ being a dually-BCH code in the ternary case, while the case $q\geq 4$ is still open.

Below we always assume that $m\geq 4$ is an integer and $q$ is a prime power, $m$ is even if $n=\frac{q^m-1}{q+1}$ and $q>2$  if $n=\frac{q^m-1}{q-1}$.
The main objective
 of this paper is to give several sufficient and necessary conditions in terms of the designed distance to ensure that the BCH codes with length $n$ are dually-BCH codes
 and develop lower bounds on the minimum distances of the dual codes, where $n=\frac{q^m-1}{q+1}$.
This paper generalizes some results in \cite{GDL21} when $n=\frac{q^m-1}{q+1}$.
 As byproducts, we present the largest coset leader $\delta_1$ modulo $n$, which proves a conjecture in \cite{WLP19} and partially solves an open problem in \cite{Li2017}.
 We also investigate the parameters of the narrow-sense BCH codes of length $n$ with designed distance $\delta_1$, where $n=\frac{q^m-1}{q+1}$.
 To investigate the optimality
 of the codes studied in this paper, we compare them with the tables of the best known linear codes maintained in \cite{Grassl2006}, and
the best known cyclic codes  maintained in \cite{Dingbook15}, and some of  the proposed codes are optimal or almost optimal.

The rest of this paper is organized as follows. Section \ref{sec-auxiliary} contains some preliminaries. Sections \ref{sec-plus} and \ref{sec-minus} give the sufficient and necessary conditions in terms of the designed distance to ensure that the BCH codes with length $n$ are dually-BCH codes
 and develop lower bounds on the minimum distances of the dual codes for $n=\frac{q^m-1}{q+1}$. Section \ref{sec-finals} concludes the paper.

\section{Preliminaries}\label{sec-auxiliary}

In this section, we introduce some
basic concepts and known results on BCH codes, which will be used later.
Starting from now on, we adopt the following notation unless otherwise stated:
\begin{itemize}
\item $\gf(q)$ is the finite field with $q$ elements.
\item $\alpha$ is a primitive element of $\gf(q^m)$ and $\beta=\alpha^n$ is a primitive $n$-th root of unity, where $n \, |\, q^m-1$.
\item $m_i(x)$ denotes the minimal polynomial of $\beta^i$ over $\gf(q)$.
\item $g_{(q,n,\delta)}=\lcm \left(m_{1}(x), m_{2}(x), \cdots, m_{\delta-1}(x)\right)$ denotes the least common multiple of these minimal polynomials.
\item $\mathcal{C}_{(q,n,\delta)}$ denotes the BCH code with generator polynomial $g_{(q,n,\delta)}$.
\item $T=\{0\leq i\leq n-1: g(\beta^i)=0\}$ is the defining set of $\mathcal{C}_{(q,n,\delta)}$ with respect to $\beta$.
\item $T^{-1}=\{n-i\,:\,i\in T\}$.
\item $T^{\perp}$ is the defining set of the dual code $\mathcal{C}^{\perp}_{(q,n,\delta)}$ with respect to $\beta$.
\item $\lceil x \rceil$ denotes the smallest integer larger than or equal to $x$.
\item $\lfloor x \rfloor$ denotes the largest integer less than or equal to $x$.
\item $\overline{x}_t$ denotes $x \pmod{t}$ and $0\leq\overline{x}_t\leq t-1$, where $t$ is a positive integer.
\end{itemize}

 For two positive integers $0<a,\ b<q^m-1$, let $a=\sum\limits_{i=0}^{m-1}a_iq^i$ and $b=\sum\limits_{i=0}^{m-1}b_iq^i$ be the $q$-adic expansion of $a$ and $b$, respectively.
Write $\overline{a}=(a_{m-1},a_{m-2},\cdots,a_{0})_q$ and $\overline{b}=(b_{m-1},b_{m-2},\cdots,b_{0})_q$. Define $\overline{a}>\overline{b}$ if $a_{m-1}>b_{m-1}$ or there exists an integer $0\leq i\leq m-2$ such that $a_i> b_i$ and $a_j=b_j$ for $j\in [i+1,m-1]$. Then we have the following result.
\begin{lemma}\label{lem1b21}
Let $a,b$ be given as above. Then $a>b$ if and only if $\overline{a}>\overline{b}$.
\end{lemma}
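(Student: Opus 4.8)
The plan is to reduce the stated equivalence to a single implication by exploiting that both orders are trichotomous, and then to settle that implication by a direct geometric-series estimate. Since the range $0<a,b<q^m-1$ forces $a,b\in\{1,\dots,q^m-2\}$, each of $a,b$ has a unique $q$-adic expansion with $m$ digits in $[0,q-1]$, so $\overline{a}=\overline{b}$ is equivalent to $a=b$, and for any two such integers exactly one of $\overline{a}>\overline{b}$, $\overline{a}=\overline{b}$, $\overline{b}>\overline{a}$ holds. It therefore suffices to prove the one-directional statement $\overline{a}>\overline{b}\Rightarrow a>b$: the converse then follows formally, because $a>b$ excludes $\overline{a}=\overline{b}$ (which would give $a=b$) and excludes $\overline{b}>\overline{a}$ (which by the forward implication would give $b>a$), leaving only $\overline{a}>\overline{b}$.

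To establish the forward implication, I would unpack the definition of $\overline{a}>\overline{b}$ as the assertion that, at the \emph{largest} index $i\in[0,m-1]$ where the two digit sequences differ, one has $a_i>b_i$ and $a_j=b_j$ for all $j\in[i+1,m-1]$ (the first clause of the definition handles $i=m-1$, where this interval is empty, and the second clause handles $i\le m-2$). The high-order terms then cancel in the difference, so that $a-b=\sum_{j=0}^{m-1}(a_j-b_j)q^j=\sum_{j=0}^{i}(a_j-b_j)q^j$. The leading surviving term obeys $(a_i-b_i)q^i\ge q^i$ because $a_i-b_i\ge 1$, while the trailing terms are bounded below via $a_j-b_j\ge-(q-1)$ and the geometric sum $\sum_{j=0}^{i-1}(q-1)q^j=q^i-1$, giving $\sum_{j=0}^{i-1}(a_j-b_j)q^j\ge-(q^i-1)$. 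Adding the two bounds yields $a-b\ge q^i-(q^i-1)=1>0$, hence $a>b$, as required.

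There is no genuine obstacle in this argument; the only point demanding care is the lower bound on the trailing digits. One must use that every digit lies in $[0,q-1]$, so the worst case $a_j=0$, $b_j=q-1$ contributes exactly $-(q-1)q^j$, and that the resulting geometric series telescopes to $q^i-1$, which is strictly less than the gain $q^i$ coming from the most significant differing digit. This one-unit margin is precisely what produces the strict inequality and thereby certifies that lexicographic dominance of the digit tuples matches numerical dominance.
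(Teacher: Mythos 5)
Your proof is correct: the reduction to the single implication $\overline{a}>\overline{b}\Rightarrow a>b$ via trichotomy is sound (uniqueness of the $m$-digit $q$-adic expansion on $\{1,\dots,q^m-2\}$ gives $\overline{a}=\overline{b}\Leftrightarrow a=b$), and the estimate $a-b\ge (a_i-b_i)q^i-\sum_{j=0}^{i-1}(q-1)q^j\ge q^i-(q^i-1)=1$ at the most significant differing digit is exactly the right computation. The paper itself offers no proof of this lemma, treating it as the standard fact that numerical order coincides with lexicographic order on fixed-length base-$q$ digit strings, so your write-up simply supplies the canonical argument the authors left to the reader; there is nothing to compare and no gap to report.
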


Let $\mathbb{Z}_n$ denote the ring of integers modulo $n$.
Let $s$ be an integer with $0\leq s<n$. The {\it $q$-cyclotomic coset} of $s$ is defined by
$$C_s=\{s,sq,sq^2,\cdots,sq^{\ell_{s}-1}\}\,\, {\text\,\,{\rm mod} \,\,n\subseteq \mathbb{Z}_n, }$$
where $\ell_s$ is the smallest positive integer such that $s\equiv sq^{\ell_s} \pmod n$, and is the size of the $q$-cyclotomic coset.
 The smallest integer in $C_s$ is called the {\it coset leader} of $C_s$. The following lemma on coset leaders modulo $q^m-1$ will play an important role in proving the conjecture documented in \cite{WLP19}.
\begin{lemma}\label{lem:qm1q1}
The first three largest $q$-cyclotomic coset leaders modulo $n=q^m-1$ are given as follows:
$$\delta_1=(q-1)q^{m-1}-1,\,\,\delta_2= (q-1)q^{m-1}-q^{\lfloor\frac{m-1}{2}\rfloor}-1,\,\,\delta_3= (q-1)q^{m-1}-q^{\lfloor\frac{m+1}{2}\rfloor}-1.$$
\end{lemma}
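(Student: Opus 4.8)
The plan is to translate the statement about $q$-cyclotomic cosets modulo $q^m-1$ into a combinatorial statement about $q$-adic digit strings. First I would record the standard observation that, since $q\cdot q^{m-1}\equiv 1\pmod{q^m-1}$, multiplication by $q$ modulo $n=q^m-1$ acts on the $q$-adic expansion $s=\sum_{i=0}^{m-1}a_iq^i$ by a cyclic shift of the digit vector $(a_{m-1},\dots,a_0)$. Hence each coset $C_s$ is exactly the set of cyclic rotations of this vector, and by Lemma~\ref{lem1b21} the coset leader is the rotation that is lexicographically smallest when read from the most significant digit; in particular the leading digit $a_{m-1}$ of any coset leader equals the minimum digit of its string.

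Next I would restrict attention to the range of values that can contain the three largest leaders. A leader with $a_{m-1}=q-1$ must have all digits equal to $q-1$, i.e. $s=q^m-1\equiv 0$, which is not a genuine leader; and a short computation shows that a leader with $a_{m-1}\le q-3$ has value at most $(q-2)q^{m-1}-1<\delta_3$ for $m\ge 4$. Thus every leader among the three largest has leading digit $q-2$, and (the leading digit being the minimum) all of its digits lie in $\{q-2,q-1\}$. Writing $a_i=(q-2)+b_i$ with $b_i\in\{0,1\}$ and $b_{m-1}=0$, and letting $P=\{i:b_i=0\}$ be the set of ``zero positions'', the value collapses to the clean formula $s=(q^m-1)-\sum_{p\in P}q^p$ with $m-1\in P$. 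Maximising $s$ then amounts to minimising $\sum_{p\in P}q^p$ subject to the string $(b_{m-1},\dots,b_0)$ being a lexicographically minimal rotation.

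I would then organise the search by $|P|$. For $|P|=1$ one is forced to take $P=\{m-1\}$, which is lex-minimal and yields $s=\delta_1$. For $|P|=2$, say $P=\{m-1,p\}$, the string is $0\,1^{m-2-p}\,0\,1^{p}$; the only competing rotation beginning with a $0$ starts at position $p$ and reads $0\,1^{p}\,0\,1^{m-2-p}$, so comparing the initial runs of $1$'s shows lex-minimality is equivalent to $m-2-p\le p$, i.e. $p\ge\lfloor\frac{m-1}{2}\rfloor$. Since $s=(q^m-1)-q^{m-1}-q^{p}$ is strictly decreasing in $p$, the two largest admissible values occur at $p=\lfloor\frac{m-1}{2}\rfloor$ and $p=\lfloor\frac{m-1}{2}\rfloor+1=\lfloor\frac{m+1}{2}\rfloor$, giving exactly $\delta_2$ and $\delta_3$.

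The main obstacle is to rule out that a leader with three or more zero positions slips in above $\delta_3$. For this I would use a gap analysis: if the zeros sit at $m-1=p_0>p_1>\dots>p_{t-1}$ with cyclic gaps $g_0,\dots,g_{t-1}$ (the runs of $1$'s between consecutive zeros), lex-minimality forces the first gap $g_0$ to be minimal, whence $g_0\le\lfloor\frac{m-t}{t}\rfloor$ because the gaps sum to $m-t$. On the other hand, a comparison of exponents shows that $s>\delta_3$ forces $p_1\le\lfloor\frac{m-1}{2}\rfloor$, equivalently $g_0=m-2-p_1\ge\lceil\frac{m-3}{2}\rceil$. For $t\ge 3$ and $m\ge 4$ these two bounds on $g_0$ are incompatible, so no such leader exists and the list $\delta_1>\delta_2>\delta_3$ is complete. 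I expect this incompatibility estimate, together with the careful bookkeeping of the lex-minimality (necklace) condition, to be the delicate part of the argument.
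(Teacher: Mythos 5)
The paper does not actually prove Lemma~\ref{lem:qm1q1}: it is stated without proof as a known result on coset leaders modulo $q^m-1$ (such statements appear in the earlier literature on primitive BCH codes cited in the bibliography), so there is no in-paper argument to compare yours against. On its own merits, your necklace argument is correct and complete. The reduction to lexicographically minimal rotations of the digit string via Lemma~\ref{lem1b21} is sound; the observation that a top-three leader must have leading digit $q-2$ and hence all digits in $\{q-2,q-1\}$ uses $\lfloor\frac{m+1}{2}\rfloor<m-1$, which is exactly where $m\ge 4$ enters; the formula $s=(q^m-1)-\sum_{p\in P}q^p$ and the case analysis on $|P|$ correctly produce $\delta_1$ for $|P|=1$ and, via the condition $p\ge\lfloor\frac{m-1}{2}\rfloor$, exactly $\delta_2$ and $\delta_3$ as the only $|P|=2$ leaders not below $\delta_3$. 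The exclusion of $|P|=t\ge 3$ also checks out: $s>\delta_3$ forces $\sum_{p\in P\setminus\{m-1\}}q^p<q^{\lfloor\frac{m+1}{2}\rfloor}$, hence $g_0\ge\lceil\frac{m-3}{2}\rceil$, while lex-minimality forces $g_0$ to be the minimum gap, so $g_0\le\lfloor\frac{m-t}{t}\rfloor\le\lfloor\frac{m-3}{3}\rfloor$, and $\lceil\frac{k}{2}\rceil>\lfloor\frac{k}{3}\rfloor$ for $k=m-3\ge 1$ gives the contradiction. This is a clean, self-contained proof of a fact the paper leaves to the reader.
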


The following lemma is given in \cite{GDL21}, which is useful  to give a characterization of $\mathcal{C}_{(q,\frac{q^m-1}{q-1},\delta)}$ being a dually-BCH code for $q\geq 3$.

\begin{lemma} \label{lemma-break-point-(q^m-1)/(q-1)}
	For $2 \le \delta < n$, let $I(\delta)$ be the integer such that $\{ 0,1,2,\ldots,I(\delta)-1 \} \subseteq T^{\bot}$ and $I(\delta) \notin T^{\bot}$.
	Then we have $I(\delta)=\frac{q^{m-t}-1}{q-1}$ if $\frac{q^t-1}{q-1}< \delta \le \frac{q^{t+1}-1}{q-1}\ (1 \le t \le m-2)$ and $I(\delta)=1$ if  $\frac{q^{m-1}-1}{q-1} < \delta < n$.
\end{lemma}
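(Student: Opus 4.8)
The plan is to reduce the statement to an extremal problem about the defining set $T=\bigcup_{i=1}^{\delta-1}C_i$ of $\mathcal{C}_{(q,n,\delta)}$. Since the dual code has defining set $T^{\perp}=\{0\le i\le n-1:\ (n-i)\bmod n\notin T\}$ and $0\notin T$ in the narrow-sense case, the requirement that $i\in T^{\perp}$ for $1\le i\le I-1$ while $I\notin T^{\perp}$ is exactly that $n-I\in T$ and $n-1,\dots,n-I+1\notin T$; that is, $I(\delta)=n-\max T$. As $T=\bigcup_{i=1}^{\delta-1}C_i$, we have $\max T=\max\{\max C_i:\ 1\le i\le \delta-1\}$. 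Writing $M_t:=q^{m-1}+q^{m-2}+\cdots+q^{m-t}$ and noting $n-M_t=\tfrac{q^{m-t}-1}{q-1}$, the entire lemma collapses to the single claim that $\max T=M_t$ precisely when $\tfrac{q^t-1}{q-1}<\delta\le \tfrac{q^{t+1}-1}{q-1}$, the final range $\tfrac{q^{m-1}-1}{q-1}<\delta<n$ being the case $t=m-1$, where $\max T=n-1$ and $I=1$.

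The first step is to linearise multiplication by $q$ modulo $n$, which is where the reductions modulo $n$ become awkward. The map $\psi:x\mapsto (q-1)x$ is an increasing bijection from $\{0,1,\dots,n-1\}$ onto the multiples of $q-1$ in $\{0,1,\dots,q^m-2\}$, and since $(q-1)n=q^m-1$ one checks the intertwining identity $\psi\big(qx\bmod n\big)=q\,\psi(x)\bmod(q^m-1)$. Hence $\psi$ carries the $q$-cyclotomic coset $C_\ell$ modulo $n$ onto the $q$-cyclotomic coset of $(q-1)\ell$ modulo $q^m-1$, that is, onto the set of cyclic shifts of the length-$m$ radix-$q$ string of $(q-1)\ell$. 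Therefore $\max C_\ell=\tfrac{1}{q-1}\cdot\big(\text{value of the largest cyclic rotation of the base-}q\text{ digits of }(q-1)\ell\big)$, and competing rotations can be compared digitwise via Lemma \ref{lem1b21}. This device converts everything into a clean question about rotations of digit strings.

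With this in place the two halves are short. For the lower bound, take $\ell=\tfrac{q^t-1}{q-1}$, so $(q-1)\ell=q^t-1$ is the string of $t$ consecutive digits $q-1$; its largest rotation pushes this solid block to the top, giving the string $(q-1,\dots,q-1,0,\dots,0)$ of value $(q-1)M_t$, so $\max C_\ell=M_t$. Since $\ell\le\delta-1$ throughout the $t$-th range, $C_\ell\subseteq T$ and thus $\max T\ge M_t$. For the upper bound, fix any $\ell$ with $1\le \ell\le\delta-1<\tfrac{q^{t+1}-1}{q-1}$ and set $X:=(q-1)\ell<q^{t+1}-1$; its digits are supported on positions $0,\dots,t$ and do not form the full block of $t+1$ copies of $q-1$. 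A rotation of $X$ can exceed $(q-1)M_t$ only if its top $t$ digits are all $q-1$ and its next digit is positive. The first condition forces the digit multiset of $X$ to contain $t$ copies of $q-1$, which — as $X$ is not the full $(t+1)$-block — means exactly $t$ copies, the one remaining payload digit being some $c\le q-2$; the second condition forces that leftover digit to sit in the next position with $c\ge 1$. But $X$ is a multiple of $q-1$, so its digit sum $t(q-1)+c\equiv c\pmod{q-1}$ is divisible by $q-1$, whence $c=0$, a contradiction. Hence every rotation is $\le (q-1)M_t$, giving $\max C_\ell\le M_t$ and $\max T\le M_t$.

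Combining the bounds yields $\max T=M_t$ and $I(\delta)=\tfrac{q^{m-t}-1}{q-1}$, with $t=m-1$ recovering $I=1$ on $\tfrac{q^{m-1}-1}{q-1}<\delta<n$. I expect the genuine obstacle to be exactly the upper bound: one must control \emph{all} cyclic rotations of $(q-1)\ell$ simultaneously, not a single favourable one. The crucial and somewhat delicate point is that the integrality of $\ell$ — equivalently, the divisibility of the digit sum of $(q-1)\ell$ by $q-1$ — annihilates the unique digit that could otherwise carry a rotation past $(q-1)M_t$. The rest is bookkeeping: verifying the intertwining identity for $\psi$, noting that string periods dividing $m$ cause no difficulty (one may take the maximum over all $m$ shifts regardless of the true period), and handling the endpoints $t=m-1$ and $\delta$ near $n$ separately.
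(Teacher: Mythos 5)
Your argument is correct, but be aware that the paper never proves this statement: it is quoted in the preliminaries as a result from \cite{GDL21}, and the only proof of this type actually written out in the paper is for the analogue with $n=\frac{q^m-1}{q+1}$ (the second lemma carrying the same label). Measured against that proof, your route is organized differently and more compactly. The paper proves the two inclusions separately: it exhibits $\frac{q^m-q^{m-t}}{q+1}=\frac{q^t-1}{q+1}q^{m-t}\in T$ to cap $I(\delta)$, and then, for each $i$ below the threshold, bounds from below the coset leader of the coset through $(q+1)q^tu+q^t-1$ by inspecting its digit string. You instead collapse the whole statement to the single identity $I(\delta)=n-\max T$ and compute $\max T$ as a maximal cyclic rotation after transporting everything through $x\mapsto(q-1)x$; this makes the two inequalities two sides of one extremal computation, while the underlying combinatorics (rotations of base-$q$ strings) is the same. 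Your decisive observation --- that the digit sum of $(q-1)\ell$ is $\equiv 0 \pmod{q-1}$, so the one leftover digit $c\le q-2$ that could push a rotation past $(q-1)M_t$ must vanish --- is exactly the arithmetic content needed for the upper bound, and it is right. One cosmetic imprecision: when the run of $t$ digits $q-1$ sits at positions $0,\dots,t-1$ of $X$, the leftover digit $c$ lands at the bottom of the extremal rotation rather than immediately below the run; but since every other digit is zero the rotation equals $(q-1)M_t+cq^{j}$ for some $j\ge 0$ in either case, so the conclusion that it can exceed $(q-1)M_t$ only when $c\ge 1$ is unaffected.
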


%\begin{lemma}
%Let $m\geq 4$ be an even integer. Let $a$ be an integer with $q^{\frac{m}{2}}+1\leq a\leq q^{\frac{m}{2}+1}$ and $a\not\equiv 0 \pmod q$.
%\begin{enumerate}
%\item[{\rm (1)}]  If $a=c(q^{\frac{m}{2}}+1)$ for some $c$ with $1\leq c\leq q-1$, then $a$ is a coset leader with $|C_a|=\frac{m}{2}$.
%\item[{\rm (2)}]  If $a=a_{\frac{m}{2}}q^{\frac{m}{2}}+a_0$ with $1\leq a_0<a_h\leq q-1$, then $a$ is not a coset leader.
%\item[{\rm (3)}]  Otherwise, $a$ is a coset leader with $|C_a|=m$.
%\end{enumerate}
%Furthermore,
%$$|\{a\,:\,q^{\frac{m}{2}}+1\leq a\leq q^{\frac{m}{2}+1}, a\not\equiv 0 \pmod q, a \,\,is\,\, not\,\, a\,\, coset\,\, leader\}|=\frac{(q-1)(q-2)}{2}.$$
%\end{lemma}

\section{BCH codes of length $n=\frac{q^m-1}{q+1}$ and its dual}\label{sec-plus}

Throughout this section, we always assume that $n=\frac{q^m-1}{q+1}$, where $m\geq 4$ is even. Recall that $\mathcal{C}_{(q,n,\delta)}$ is a BCH code with designed distance $\delta$, i.e., the defining set with respect to $\beta$ is $T=C_1\cup C_2 \cup \cdots \cup C_{\delta-1}$, where $2\leq \delta\leq n$ and $\beta=\alpha^{q+1}$. It then follows that $T^{\perp}=Z_n \setminus T^{-1}$ is the defining set of the dual code $\mathcal{C}^{\perp}_{(q,n,\delta)}$ with respect to $\beta$, where $T^{-1}=\{n-i\,:\,i\in T\}$. It is clear that $0 \in T^{\perp}$.
In this section, we  give the largest coset leader $\delta_1$  modulo $n$ and study the parameters of the BCH code $\mathcal{C}_{(q,n,\delta_1)}$.  We also derive a sufficient and necessary condition for $\mathcal{C}_{(q,n,\delta)}$ being a dually-BCH code and develop the lower bounds on the minimum distance of $\mathcal{C}^{\perp}_{(q,n,\delta)}$, where $2\leq \delta \leq n$.

\begin{lemma} \label{lemma-1219}
Let $h$ be a positive integer and let $q$ be a prime power such that $(q+1)\ |\ h$. Then $h$ is a coset leader modulo $q^m-1$ if and only if $\frac{h}{q+1}$ is a coset leader modulo $n$.
\end{lemma}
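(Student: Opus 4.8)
The plan is to reduce the statement to the elementary observation that multiplication by $q+1$ sets up an order-preserving bijection between the $q$-cyclotomic coset of $s:=\frac{h}{q+1}$ modulo $n$ and the $q$-cyclotomic coset of $h$ modulo $q^m-1$. Write $h=(q+1)s$. The starting point is the factorization $q^m-1=(q+1)n$, which I would exploit as follows: for any $i\geq 0$, writing $sq^i=\overline{sq^i}_n+k_i n$ for a suitable integer $k_i\geq 0$, one gets
\begin{equation*}
(q+1)sq^i=(q+1)\overline{sq^i}_n+k_i(q+1)n=(q+1)\overline{sq^i}_n+k_i(q^m-1),
\end{equation*}
so that $hq^i\equiv (q+1)\overline{sq^i}_n\pmod{q^m-1}$. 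Since $0\leq (q+1)\overline{sq^i}_n\leq (q+1)(n-1)=(q^m-1)-(q+1)<q^m-1$, this residue is exactly the canonical reduction of $hq^i$ modulo $q^m-1$.

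Consequently, denoting by $C_s^{(n)}$ and $C_h^{(q^m-1)}$ the two cyclotomic cosets, I would establish the set identity $C_h^{(q^m-1)}=(q+1)\,C_s^{(n)}$, i.e. every element of the coset of $h$ is $q+1$ times an element of the coset of $s$, and conversely. One inclusion is immediate from the displayed congruence; the reverse inclusion follows by reading the same congruence backwards, since any $t=\overline{sq^i}_n\in C_s^{(n)}$ yields $(q+1)t\equiv hq^i\pmod{q^m-1}$ and hence $(q+1)t\in C_h^{(q^m-1)}$. Because the elements of $C_s^{(n)}$ lie in $\{0,1,\dots,n-1\}$, the map $t\mapsto(q+1)t$ is injective on $C_s^{(n)}$, so this is in fact a bijection.

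The conclusion then follows from the fact that $t\mapsto(q+1)t$ is strictly increasing on the nonnegative integers. Hence the smallest element of $C_h^{(q^m-1)}=(q+1)\,C_s^{(n)}$ equals $(q+1)$ times the smallest element of $C_s^{(n)}$. Thus $h=(q+1)s$ is the coset leader of $C_h^{(q^m-1)}$ if and only if $s$ is the smallest element of $C_s^{(n)}$, i.e. if and only if $s=\frac{h}{q+1}$ is the coset leader of $C_s^{(n)}$, which is precisely the claim.

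I do not anticipate a serious obstacle here: the whole argument hinges on the factorization $q^m-1=(q+1)n$ (which holds precisely because $m$ is even) together with the order-preservation of multiplication by $q+1$. The only point requiring a little care is to confirm that $(q+1)\overline{sq^i}_n$ is genuinely the canonical residue modulo $q^m-1$, and not merely congruent to $hq^i$; this is guaranteed by the size bound $(q+1)(n-1)<q^m-1$, and it is exactly what lets one transfer the notion of ``smallest element'' faithfully across the two moduli.
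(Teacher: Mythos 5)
Your proposal is correct and follows essentially the same route as the paper: the paper's proof is the contrapositive form of your observation, resting on the implication $\bigl(\frac{h}{q+1}q^\ell \bmod n\bigr) < \frac{h}{q+1} \Leftrightarrow \bigl(hq^\ell \bmod (q^m-1)\bigr) < h$, which is exactly the order-preserving correspondence $t \mapsto (q+1)t$ you establish. Your write-up merely makes explicit the size bound $(q+1)(n-1) < q^m-1$ that the paper leaves implicit.
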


\begin{proof}
We first assume that $h$ is a coset leader modulo $q^m-1$. If $\frac{h}{q+1}$ is not a coset leader modulo $n$,
	then there would be an integer $\ell$ with $1 \le \ell \le m-1$ such that
\begin{equation*}
\frac{h}{q+1}q^\ell \bmod{n} <\frac{h}{q+1},
\end{equation*}
which implies that
\begin{equation*}
hq^\ell \bmod{q^m-1} <h.
\end{equation*}
This leads to a contradiction. Hence, $\frac{h}{q+1}$ is a coset leader modulo $n$.
With the same argument, we can see that $h$ is a coset leader modulo $q^m-1$ if $\frac{h}{q+1}$ is a coset leader modulo $n$.
	The proof is then completed.
\end{proof}

\begin{lemma}\label{detal1221}
Let $\delta_1$ be the largest coset leader  modulo $n$, then
\[\delta_1=\left\{ \begin{array}{lll}
           \frac{(q-1)q^{m-1}-q^{\frac{m-2}{2}}-1}{q+1}, & \, \,\, \text{if $m\equiv 2 \pmod 4$}, \\
\frac{(q-1)q^{m-1}-q^{\frac{m}{2}}-1}{q+1}, & \, \,\, \text{if $m\equiv 0 \pmod 4$}. \end{array}  \right.\]
\end{lemma}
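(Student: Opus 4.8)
The plan is to transfer the question to the modulus $q^m-1$, where the top coset leaders are already known from Lemma~\ref{lem:qm1q1}, and then select the largest one divisible by $q+1$. The structural device is that Lemma~\ref{lemma-1219}, applied with $h=(q+1)s$, makes the map $s\mapsto(q+1)s$ an order-preserving bijection from the coset leaders modulo $n$ onto the coset leaders $h$ modulo $q^m-1$ satisfying $(q+1)\mid h$. Indeed, for any coset leader $s$ modulo $n$ the integer $(q+1)s$ is divisible by $q+1$ and, by the lemma, is a coset leader modulo $q^m-1$; conversely every coset leader $h$ modulo $q^m-1$ with $(q+1)\mid h$ pulls back to the coset leader $\frac{h}{q+1}$ modulo $n$. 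Since multiplication by the positive constant $q+1$ preserves order, I conclude $\delta_1=\frac{h^\ast}{q+1}$, where $h^\ast$ denotes the largest coset leader modulo $q^m-1$ divisible by $q+1$, so the whole task reduces to pinpointing $h^\ast$.

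Next I would invoke Lemma~\ref{lem:qm1q1} and write its three largest coset leaders as $\Delta_1=(q-1)q^{m-1}-1$, $\Delta_2=(q-1)q^{m-1}-q^{\lfloor\frac{m-1}{2}\rfloor}-1$, and $\Delta_3=(q-1)q^{m-1}-q^{\lfloor\frac{m+1}{2}\rfloor}-1$. Setting $m=2\ell$ (so that $m-1$ is odd, $\lfloor\frac{m-1}{2}\rfloor=\ell-1$, and $\lfloor\frac{m+1}{2}\rfloor=\ell$) and reducing modulo $q+1$ via $q\equiv-1\pmod{q+1}$, hence $q^i\equiv(-1)^i$, I obtain
\begin{align*}
\Delta_1 &\equiv 1, \\
\Delta_2 &\equiv 1-(-1)^{\ell-1}, \\
\Delta_3 &\equiv 1-(-1)^{\ell},
\end{align*}
all congruences being modulo $q+1$. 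In particular $\Delta_1$ is never divisible by $q+1$, while the parity of $\ell$ decides which of $\Delta_2,\Delta_3$ is.

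Then I would split according to $m\bmod 4$. If $m\equiv2\pmod4$ then $\ell$ is odd, giving $\Delta_2\equiv0$ and $\Delta_3\equiv2\pmod{q+1}$; since the only coset leader exceeding $\Delta_2$ is $\Delta_1$, which is not divisible by $q+1$, I get $h^\ast=\Delta_2=(q-1)q^{m-1}-q^{\frac{m-2}{2}}-1$. If $m\equiv0\pmod4$ then $\ell$ is even, giving $\Delta_2\equiv2$ and $\Delta_3\equiv0\pmod{q+1}$; here the two coset leaders larger than $\Delta_3$, namely $\Delta_1$ and $\Delta_2$, are both non-divisible, so $h^\ast=\Delta_3=(q-1)q^{m-1}-q^{\frac{m}{2}}-1$. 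Dividing $h^\ast$ by $q+1$ in each case reproduces exactly the two displayed expressions for $\delta_1$.

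The one step deserving care — and the place where the strength of Lemma~\ref{lem:qm1q1} is genuinely used — is ruling out any coset leader larger than the chosen $\Delta_2$ (respectively $\Delta_3$) that could also be divisible by $q+1$ and thus be the true $h^\ast$. This is precisely what Lemma~\ref{lem:qm1q1} secures: it identifies $\Delta_1,\Delta_2,\Delta_3$ as the three \emph{largest} coset leaders, so every coset leader strictly above the chosen candidate has already been tested and found non-divisible by $q+1$. With this in place, the order-preserving bijection of the first paragraph closes the argument with no further estimation.
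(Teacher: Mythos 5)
Your proposal is correct and follows essentially the same route as the paper: both reduce to coset leaders modulo $q^m-1$ via Lemma~\ref{lemma-1219}, invoke Lemma~\ref{lem:qm1q1} for the three largest such leaders, and select the unique one divisible by $q+1$ according to the parity of $m/2$. The only cosmetic difference is that you check divisibility by reducing $q\equiv-1\pmod{q+1}$, whereas the paper verifies it by an explicit telescoping factorization and phrases the maximality step as a contradiction on a hypothetical larger leader $\delta'$; the substance is identical.
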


\begin{proof}
We prove the desired conclusion only for the case  $m\equiv 0 \pmod 4$, and omit the proof for  the case $m\equiv 2 \pmod 4$, which is similar.

It is easy to check that
$(q+1)\,|\,(q^{m-1}+1)$ and $(q+1)\,|\, (q^{\frac{m}{2}-1}+1)$
by noting that $m\equiv 0 \pmod 4$.
Then
$$(q+1)\,|\,(q^{m-1}+1)+q^{\frac{m}{2}}(q^{\frac{m}{2}-1}+1)=2q^{m-1}+q^{\frac{m}{2}}+1,$$
which implies that $(q+1)\, | \, (q+1)q^{m-1}-(2q^{m-1}+q^{\frac{m}{2}}+1)=(q-1)q^{m-1}-q^{\frac{m}{2}}-1$.
Recall that $(q-1)q^{m-1}-q^{\frac{m}{2}}-1$ is a coset leader modulo $q^m-1$ from Lemma \ref{lem:qm1q1}.
It then follows from Lemma \ref{lemma-1219}  that $\frac{(q-1)q^{m-1}-q^{\frac{m}{2}}-1}{q+1}$ is a coset leader modulo $n$.

We now assert that $\frac{(q-1)q^{m-1}-q^{\frac{m}{2}}-1}{q+1}$ is the largest coset leader modulo $n$. If there exists a coset leader $\delta'$ modulo $n$ such that $\frac{(q-1)q^{m-1}-q^{\frac{m}{2}}-1}{q+1}<\delta'<\frac{q^m-1}{q+1}$, then for any positive integer $\ell$ with $1\leq \ell\leq m-1$, we have $\left(\delta' q^\ell \bmod{\frac{q^m-1}{q+1}}\right) \geq\delta'$, which implies that
	\begin{equation*}
 \left((q+1)\delta' q^\ell \bmod{q^m-1}\right) \geq(q+1)\delta'.
	\end{equation*}
Hence, $(q+1)\delta'$ is a coset leader modulo $q^m-1$. From Lemma \ref{lem:qm1q1},  we know that
\begin{equation}\label{eq:0710}
(q+1)\delta'=(q-1)q^{m-1}-1\,\,\text{or}\,\,(q+1)\delta'=(q-1)q^{m-1}-q^{\frac{m-2}{2}}-1.
\end{equation}
It is easy to check that Equation \eqref{eq:0710} is impossible since $\delta'$ is an integer.
The desired conclusion then follows.
\end{proof}

  A conjecture on the largest coset leader  modulo $n$ was given in \cite{WLP19} for $q>3$. In Lemma \ref{detal1221}, we answered this problem.
  We are now ready to determine $|C_{\delta_1}|$ that is useful to determine the dimension of the BCH code $\mathcal{C}_{(q,n,\delta_1)}$.

\begin{lemma} \label{lem-delta-plus}
Let $\delta_1$ be given as in Lemma \ref{detal1221}, then $|C_{\delta_1}|=m$ if $m\equiv 0 \pmod 4$ and $|C_{\delta_1}|=\frac{m}{2}$ if $m\equiv 2 \pmod 4$.
\end{lemma}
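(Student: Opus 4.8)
The plan is to reduce the computation of the coset size modulo $n$ to a cleaner one modulo $q^m-1$, and then read off the answer from the $q$-ary digit expansion of a suitable integer. By definition $|C_{\delta_1}|$ is the least positive integer $\ell$ with $\delta_1 q^\ell \equiv \delta_1 \pmod n$. Writing $h=(q+1)\delta_1$ and using $(q+1)n=q^m-1$, I would first record the equivalence
\begin{equation*}
\delta_1 q^\ell \equiv \delta_1 \pmod n \iff h q^\ell \equiv h \pmod{q^m-1},
\end{equation*}
which holds because $n \mid \delta_1(q^\ell-1)$ is equivalent to $(q^m-1)\mid h(q^\ell-1)$. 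Consequently $|C_{\delta_1}|$ equals the size of the $q$-cyclotomic coset of $h$ modulo $q^m-1$. By Lemma \ref{detal1221}, $h=(q-1)q^{m-1}-q^{m/2}-1$ when $m\equiv 0\pmod 4$ and $h=(q-1)q^{m-1}-q^{(m-2)/2}-1$ when $m\equiv 2\pmod 4$.

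Next I would use the standard fact that, for $0<h<q^m-1$, multiplication by $q$ modulo $q^m-1$ acts as a cyclic shift on the $m$-digit $q$-ary representation of $h$; hence the coset size equals the minimal cyclic period of that digit string. So the task becomes to write down the $m$ digits of $h$ and find their least period. Starting from $(q-1)q^{m-1}-1=[q-2,q-1,\ldots,q-1]$ (top digit $q-2$, the remaining $m-1$ digits all $q-1$) and subtracting the single power $q^{m/2}$ or $q^{(m-2)/2}$, no borrow is triggered, so exactly one further digit drops from $q-1$ to $q-2$. Thus $h$ has precisely two digits equal to $q-2$ and all other digits equal to $q-1$: when $m\equiv 0\pmod 4$ the two $(q-2)$'s sit at positions $m-1$ and $m/2$, while when $m\equiv 2\pmod 4$ they sit at positions $m-1$ and $m/2-1$.

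The period is then governed entirely by the two positions carrying the rare digit $q-2$. If a length-$m$ cyclic word has period $d\mid m$, the positions split into $d$ residue classes each of size $m/d$ on which the word is constant, so the number of occurrences of any symbol is a multiple of $m/d$; since $q-2$ occurs exactly twice, $m/d$ divides $2$, leaving only $d\in\{m/2,m\}$. For $m\equiv 2\pmod 4$ the two $(q-2)$'s are exactly $m/2$ apart, so the string splits into two identical halves $(q-2)(q-1)^{m/2-1}$; this block contains a single $q-2$ and is therefore aperiodic, giving minimal period $m/2$ and $|C_{\delta_1}|=m/2$. For $m\equiv 0\pmod 4$ the period-$m/2$ condition would equate position $m-1$ with position $(m-1)+m/2\equiv m/2-1 \pmod m$, but these carry $q-2$ and $q-1$ respectively; hence period $m/2$ is impossible and the minimal period is the full length $m$, giving $|C_{\delta_1}|=m$.

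The routine parts are the digit bookkeeping and the no-borrow verification; the only real content is the minimality argument. The main obstacle I anticipate is making the ``count of the rare digit'' reasoning fully rigorous, i.e.\ justifying that a genuine period $d$ of a length-$m$ cyclic word forces the multiplicity of each symbol to be a multiple of $m/d$ (so that two occurrences of $q-2$ leave only $d\in\{m/2,m\}$), together with carefully tracking which position the $m/2$-shift identifies with $m-1$ in the two residue classes of $m$ modulo $4$. Everything else follows directly from the digit expansion of $h$.
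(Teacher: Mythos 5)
Your proof is correct and follows the same route as the paper: both reduce $|C_{\delta_1}|$ modulo $n$ to the size of the $q$-cyclotomic coset of $h=(q+1)\delta_1$ modulo $q^m-1$ via the equivalence $n\mid\delta_1(q^\ell-1)\iff q^m-1\mid h(q^\ell-1)$. The paper then merely asserts that $\ell$ follows from Lemma~\ref{lem:qm1q1} (which only lists the largest coset leaders, not their coset sizes), whereas your digit-string argument --- two digits equal to $q-2$ force the period to lie in $\{m/2,m\}$, and the relative position of those digits decides between the two --- supplies exactly the justification the paper leaves implicit, so your write-up is if anything more complete.
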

\begin{proof}
We prove the desired conclusion only for the case $m\equiv 0 \pmod 4$, and omit the proof for the case $m\equiv 2 \pmod 4$, which is similar.

Let $|C_{\delta}|=\ell$, then
$$\frac{q^m-1}{q+1}\, |\, \left(\frac{(q-1)q^{m-1}-q^{\frac{m}{2}}-1}{q+1}\right)(q^{\ell}-1),$$
which is equivalent to
$$q^m-1\, |\, \left((q-1)q^{m-1}-q^{\frac{m}{2}}-1\right)(q^{\ell}-1).$$
It then follows from Lemma \ref{lem:qm1q1} that $\ell=m$.
\end{proof}

The following theorem gives the information on the parameters of  the BCH code $\mathcal{C}_{(q,n,\delta_1)}$, where $n=\frac{q^m-1}{q+1}$.

\begin{theorem}\label{theorem1}
 When $m\equiv 2 \pmod 4$, then the BCH code $\mathcal{C}_{(q,n,\delta_1)}$ has parameters
$$\left[\frac{q^m-1}{q+1}, \frac{m}{2}+1, d\geq \frac{(q-1)q^{m-1}-q^{\frac{m-2}{2}}-1}{q+1}\right].$$
 When $m\equiv 0 \pmod 4$, then the BCH code $\mathcal{C}_{(q,n,\delta_1)}$ has parameters
$$\left[\frac{q^m-1}{q+1}, m+1, d\geq \frac{(q-1)q^{m-1}-q^{\frac{m}{2}}-1}{q+1}\right].$$
\end{theorem}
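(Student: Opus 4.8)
The plan is to read off all three parameters directly from the earlier lemmas together with the BCH bound; the length is immediate, so the real content lies in computing the dimension and in the lower bound on the minimum distance.

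First I would pin down the defining set. By definition $T=\bigcup_{i=1}^{\delta_1-1}C_i$, which is precisely the union of all $q$-cyclotomic cosets whose coset leader lies in $\{1,2,\ldots,\delta_1-1\}$. Here I would invoke the fact, established in Lemma \ref{detal1221}, that $\delta_1$ is the \emph{largest} coset leader modulo $n$. Consequently the only cosets with leader $\geq\delta_1$ are $C_0=\{0\}$ (whose leader $0$ is out of range anyway) and $C_{\delta_1}$ itself; every other nonzero residue has coset leader at most $\delta_1-1$ and therefore lies in $T$. This yields the clean description $T=\mathbb{Z}_n\setminus(\{0\}\cup C_{\delta_1})$, whence $|T|=n-1-|C_{\delta_1}|$.

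The dimension then follows at once, since $k=n-|T|=1+|C_{\delta_1}|$. Substituting the value of $|C_{\delta_1}|$ from Lemma \ref{lem-delta-plus}, namely $|C_{\delta_1}|=m$ when $m\equiv 0\pmod 4$ and $|C_{\delta_1}|=\frac{m}{2}$ when $m\equiv 2\pmod 4$, gives $k=m+1$ and $k=\frac{m}{2}+1$ respectively, matching the two cases in the statement. For the minimum distance I would apply the BCH bound: since $T$ contains the $\delta_1-1$ consecutive integers $1,2,\ldots,\delta_1-1$, the code $\mathcal{C}_{(q,n,\delta_1)}$ satisfies $d\geq\delta_1$, and inserting the value of $\delta_1$ from Lemma \ref{detal1221} produces the claimed bounds in both congruence classes.

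I do not expect a serious obstacle, as every step is a direct consequence of the earlier results. The only point requiring a moment of care is the first one, verifying that ``$\delta_1$ is the largest coset leader'' translates exactly into $T=\mathbb{Z}_n\setminus(\{0\}\cup C_{\delta_1})$, i.e.\ that there is no coset with leader strictly between $\delta_1$ and $n$ that would otherwise be omitted from $T$. This is immediate from the definition of the largest coset leader, but it is the crux that makes $|T|$, and hence the dimension, computable in closed form.
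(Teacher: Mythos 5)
Your proposal is correct and is precisely the argument the paper intends: its proof of this theorem consists of the single line ``It is straightforward from Lemmas \ref{detal1221} and \ref{lem-delta-plus}'', and your computation of the dimension via $T=\mathbb{Z}_n\setminus(\{0\}\cup C_{\delta_1})$ together with the BCH bound is the natural way to fill in those details. No discrepancy with the paper's approach.
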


\begin{proof}
It is straightforward from Lemmas \ref{detal1221} and \ref{lem-delta-plus}.
\end{proof}

We inform the reader that Theorem \ref{theorem1} generalizes the results in \cite[Theorems 14 and 17]{WLP19}, where the only the binary and ternary cases  were involved. In this sense, the results documented in
\cite{WLP19} can be seen as a special case of Theorem \ref{theorem1}. The following example shows that the lower bounds given in Theorem \ref{theorem1} are very good.

\begin{example}\label{example-02}
We have the following examples for the code of Theorem \ref{theorem1}.
\begin{itemize}
\item Let $q=2$, $m=6$ and $\delta=9$, then the code $\mathcal{C}_{(2,21,9)}$ has parameters $[21,4,\geq 9]$.
\item Let $q=3$, $m=4$ and $ \delta =11$, then the code $\mathcal{C}_{(3,20,11)}$ has parameters $[20,5,\geq 11]$.
\item Let $q=4$, $m=4$ and $ \delta =35$, then the code $\mathcal{C}_{(4,51,35)}$ has parameters $[51,5,\geq 35]$.
\item Let $q=5$, $m=4$ and $\delta=79$, then the code $\mathcal{C}_{(5,104,79)}$ has parameters $[104,5,\geq 79]$.
\end{itemize}
All the four codes are almost optimal according to the tables of best codes known  in \cite{Grassl2006} when the equality holds.
 These results are verified by Magma programs.
\end{example}

To present a sufficient and necessary condition for $\mathcal{C}_{(q,n,\delta)}$ being a dually-BCH code,
 the following several lemmas will be needed later.

\begin{lemma}\label{lemma:1222}
Let $t\geq 2$ be even, $m\geq 4$ even and $l$  odd. Then $\frac{q^l+1}{q+1}$, $\frac{q^t-1}{q+1}$, $\frac{\sum_{i=0}^{m-1}q^i}{q+1}$, $\frac{(q-2)(\sum_{i=0}^{m-1}q^i)}{q+1}$ and $\frac{q^m-q^{m-1}-q^{m-2}-1}{q+1}$ are coset leaders modulo $n$.
\end{lemma}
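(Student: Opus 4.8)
The plan is to derive all five claims from the reduction principle of Lemma~\ref{lemma-1219}: a multiple $h$ of $q+1$ is a coset leader modulo $q^m-1$ if and only if $h/(q+1)$ is a coset leader modulo $n$. Thus for each of the five numbers it suffices to (i) verify that $q+1$ divides the stated numerator, so that the quantity is a genuine integer equal to $h/(q+1)$ for an explicit $h$, and (ii) show that $h:=(q+1)\cdot(\text{quantity})$ is a coset leader modulo $q^m-1$. Using $q\equiv -1 \pmod{q+1}$, the divisibility in (i) is immediate from the parity hypotheses: $q^l+1\equiv 0$ because $l$ is odd; $q^t-1\equiv 0$ and $\sum_{i=0}^{m-1}q^i\equiv 0$ because $t$ and $m$ are even; the fourth quantity follows from the third; and $q^m-q^{m-1}-q^{m-2}-1\equiv 1+1-1-1=0$ because $m$ is even. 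This reduces everything to showing that the integers $q^l+1$, $q^t-1$, $\sum_{i=0}^{m-1}q^i$, $(q-2)\sum_{i=0}^{m-1}q^i$, and $q^m-q^{m-1}-q^{m-2}-1$ are coset leaders modulo $q^m-1$.

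For step (ii) I would work entirely with $q$-adic expansions and the order relation of Lemma~\ref{lem1b21}, exploiting that multiplication by $q$ modulo $q^m-1$ is the cyclic shift of the length-$m$ digit string $(a_{m-1},\dots,a_0)_q$. A number is then a coset leader exactly when its digit string is minimal, in the sense of Lemma~\ref{lem1b21}, among all $m$ of its cyclic rotations. Reading off the strings: $q^t-1$ is $t$ consecutive copies of $q-1$ packed into the lowest positions $0,\dots,t-1$; $\sum_{i=0}^{m-1}q^i$ and $(q-2)\sum_{i=0}^{m-1}q^i$ are the constant strings $(1,\dots,1)_q$ and $(q-2,\dots,q-2)_q$; $q^l+1$ has exactly two nonzero digits, both $1$, at positions $0$ and $l$; and $q^m-q^{m-1}-q^{m-2}-1$ has digit string $(q-2,q-2,q-1,\dots,q-1)_q$.

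The two constant strings are fixed by the cyclic shift, so they are automatically coset leaders, their cosets being singletons. For $q^t-1$, any nontrivial rotation moves a digit $q-1$ into a position above $t-1$, hence raises the leading nonzero digit and, by Lemma~\ref{lem1b21}, strictly increases the value, so packing the block at the bottom is optimal. For $q^m-q^{m-1}-q^{m-2}-1$ the two small digits $q-2$ already occupy the top two positions $m-1,m-2$; the only competing rotations keeping a $q-2$ at position $m-1$ push the second $q-2$ down to position $0$, leaving $q-1$ at position $m-2$, which is larger, so the given string wins. For $q^l+1$ the coset leader corresponds to placing the two $1$'s at circular distance $\min(l,m-l)$ into the lowest positions $0$ and $\min(l,m-l)$; here the odd-$l$ hypothesis, together with the range $1\le l\le m/2$ implicit in the statement (for which $l=\min(l,m-l)$), guarantees that the string $q^l+1$ itself is the minimal rotation.

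The main obstacle I anticipate is the bookkeeping of the cyclic rotations in the two nontrivial cases, $q^l+1$ and $q^m-q^{m-1}-q^{m-2}-1$, where one must rule out the wrap-around rotations that split an adjacent pair of distinguished digits across the top and bottom of the string. In both cases the comparison ultimately collapses to a single inequality, between $q-2$ and $q-1$ for the last quantity and between $q^l$ and $q^{m-l}$ for the first, which Lemma~\ref{lem1b21} converts into the required strict ordering. Care is needed only to confirm that every rotation has been accounted for and that the parity conditions on $t$, $l$, $m$ are precisely what make the divisibility by $q+1$ and the minimality hold simultaneously.
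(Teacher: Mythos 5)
Your proposal is correct and follows essentially the same route as the paper: verify divisibility by $q+1$, show that $(q+1)$ times each quantity is a coset leader modulo $q^m-1$ by inspecting its $q$-adic digit string and its cyclic rotations via Lemma~\ref{lem1b21}, and then descend to a coset leader modulo $n$ via Lemma~\ref{lemma-1219} (the paper writes out only the case $\frac{q^m-q^{m-1}-q^{m-2}-1}{q+1}$ and declares the others similar). You also rightly observe that the claim for $\frac{q^l+1}{q+1}$ requires the implicit restriction $l\le m/2$, which is indeed satisfied in every application of the lemma in the paper.
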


\begin{proof}
We only prove the desired conclusion for the case that $\frac{q^m-q^{m-1}-q^{m-2}-1}{q+1}$ is a coset leader modulo $n$, and omit the proofs of
 other cases, which are similar.

 It is easy to check that $(q+1) \  \mid \ (q^m-q^{m-1}-q^{m-2}-1)$, and
 $$\overline{q^m-q^{m-1}-q^{m-2}-1}=(q-2,q-2,q-1,q-1,\cdots,q-1)_q.$$
By Lemma \ref{lem1b21}, we see that $q^m-q^{m-1}-q^{m-2}-1$ is a coset leader modulo $q^m-1$.  The desired conclusion then follows from Lemma \ref{lemma-1219}.
\end{proof}

\begin{lemma}\label{lem:122501}
Let $\delta_1$ be given as in Lemma \ref{detal1221}. Then we have
\begin{enumerate}
\item[{\rm (1)}]  $\delta_1 \in T^\bot$ is a coset leader modulo $n$ if $m\equiv 0 \pmod 4$, $m\geq 8$ and $2 \le \delta \le\frac{q^{\frac{m}{2}-1}+1}{q+1}$.
\item[{\rm (2)}]  $\delta_1 \in T^\bot$ is a coset leader modulo $n$ if $m\equiv 2 \pmod 4$, $m\geq 6$ and $2 \le \delta \le \frac{q^{\frac{m}{2}}+1}{q+1}$.
\end{enumerate}
\end{lemma}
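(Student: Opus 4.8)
The plan is to reduce the membership statement $\delta_1\in T^\bot$ to the location of a single coset leader. Since $T^\bot=\mathbb{Z}_n\setminus T^{-1}$ with $T^{-1}=\{n-i:i\in T\}$, we have $\delta_1\in T^\bot$ if and only if $n-\delta_1\notin T$; and as $T=C_1\cup\cdots\cup C_{\delta-1}$ with the $q$-cyclotomic cosets partitioning $\mathbb{Z}_n$, the element $n-\delta_1$ belongs to $T$ exactly when the coset leader of $C_{n-\delta_1}$ is at most $\delta-1$. Thus the whole lemma comes down to showing that the coset leader of $C_{n-\delta_1}$ equals $\frac{q^{\frac{m}{2}-1}+1}{q+1}$ when $m\equiv 0\pmod 4$, and $\frac{q^{\frac{m}{2}}+1}{q+1}$ when $m\equiv 2\pmod 4$. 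Once this is established, the hypothesis on $\delta$ says precisely that this leader is $\ge\delta$, whence $n-\delta_1\notin T$ and $\delta_1\in T^\bot$.

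First I would compute $n-\delta_1$ from Lemma~\ref{detal1221}. For $m\equiv 0\pmod 4$ this gives $n-\delta_1=\frac{q^{m-1}+q^{\frac{m}{2}}}{q+1}=q^{\frac{m}{2}}\cdot\frac{q^{\frac{m}{2}-1}+1}{q+1}$, and for $m\equiv 2\pmod 4$ it gives $n-\delta_1=\frac{q^{m-1}+q^{\frac{m-2}{2}}}{q+1}=q^{\frac{m-2}{2}}\cdot\frac{q^{\frac{m}{2}}+1}{q+1}$. In each case the cofactor, call it $u=\frac{q^{\frac{m}{2}-1}+1}{q+1}$ (resp. $v=\frac{q^{\frac{m}{2}}+1}{q+1}$), is a genuine integer because the exponent $\frac{m}{2}-1$ (resp. $\frac{m}{2}$) is odd, so that $(q+1)\mid(q^{\frac{m}{2}-1}+1)$ (resp. $(q+1)\mid(q^{\frac{m}{2}}+1)$).

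Next I would show that this cofactor is in fact the coset leader of $C_{n-\delta_1}$. The integer identity $u\cdot q^{\frac{m}{2}}=n-\delta_1$ (resp. $v\cdot q^{\frac{m-2}{2}}=n-\delta_1$) shows that $n-\delta_1$ and $u$ (resp. $v$) lie in the same $q$-cyclotomic coset, so $C_{n-\delta_1}=C_u$ (resp. $C_{n-\delta_1}=C_v$); here the product is already the reduced residue in $[0,n)$, so no reduction modulo $n$ intervenes. Because $\frac{m}{2}-1$ and $\frac{m}{2}$ are odd in the respective cases, Lemma~\ref{lemma:1222} certifies that $u$ and $v$ are coset leaders modulo $n$. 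Since a $q$-cyclotomic coset has a unique leader, namely its smallest element, a member of $C_{n-\delta_1}$ that is itself a coset leader must be the leader of $C_{n-\delta_1}$; this identifies the leader as $u$ (resp. $v$) and closes the reduction above.

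I expect the only genuinely nontrivial ingredient to be Lemma~\ref{lemma:1222}: it is what upgrades the crude bound ``the leader of $C_{n-\delta_1}$ is at most $u$'' to the exact value, and thereby matches the threshold appearing in the hypothesis on $\delta$. The remaining steps---the closed form of $n-\delta_1$, the divisibility making $u,v$ integers, and the check that recovering $n-\delta_1$ from the cofactor requires no modular reduction---are routine. The conditions $m\ge 8$ and $m\ge 6$ play no role beyond ensuring that the admissible interval $2\le\delta\le\frac{q^{\frac{m}{2}-1}+1}{q+1}$ (resp. $\le\frac{q^{\frac{m}{2}}+1}{q+1}$) is nonempty, which for $m\equiv 0\pmod 4$ amounts to excluding the degenerate case $m=4$ where the interval collapses.
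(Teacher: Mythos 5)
Your proposal is correct and follows essentially the same route as the paper: both arguments compute $n-\delta_1$ as a power of $q$ times $\frac{q^{\frac{m}{2}-1}+1}{q+1}$ (resp.\ $\frac{q^{\frac{m}{2}}+1}{q+1}$), invoke Lemma~\ref{lemma:1222} to certify that this cofactor is a coset leader modulo $n$, and conclude that $C_{n-\delta_1}$ is disjoint from $T$ whenever $\delta$ does not exceed that leader, so that $\delta_1\notin T^{-1}$ and hence $\delta_1\in T^{\perp}$. Your write-up merely makes explicit a few points the paper leaves implicit (the reduction to $n-\delta_1\notin T$, the integrality of the cofactor, and the role of the bounds $m\ge 8$, $m\ge 6$).
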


\begin{proof}
For the first case, recall from Lemma \ref{lemma:1222} that $\frac{q^{\frac{m}{2}-1}+1}{q+1}$ is a coset leader modulo $n$.  This means that $\frac{q^{\frac{m}{2}-1}+1}{q+1}  \nsubseteq T$
if $2 \le \delta \le \frac{q^{\frac{m}{2}-1}+1}{q+1}$. Hence,
$\delta_1=n-\frac{q^{\frac{m}{2}}(q^{\frac{m}{2}-1}+1)}{q+1}\notin T^{-1}$, i.e., $\delta_1 \in Z_n \backslash T^{-1}=T^{\perp}$.
It is similar to give the proof for the second case, and we omit the details.
\end{proof}

\begin{lemma}\label{lem:10}
Let $q$ be an odd prime power. Then $\frac{(q-1)q^3-q^2-q-2}{q+1}$ is the second largest coset leader modulo $\frac{q^4-1}{q+1}$.
\end{lemma}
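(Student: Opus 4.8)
```latex
The plan is to determine the second largest coset leader modulo $n=\frac{q^4-1}{q+1}$ by transferring the problem back to coset leaders modulo $q^m-1=q^4-1$ via Lemma~\ref{lemma-1219}, exactly as in the proof of Lemma~\ref{detal1221}. The candidate value $\frac{(q-1)q^3-q^2-q-2}{q+1}$ satisfies $(q+1)\,\bigl|\,\bigl((q-1)q^3-q^2-q-2\bigr)$, which I would verify first; writing out the $q$-adic digits of $(q-1)q^3-q^2-q-2=(q-2)q^3+(q-2)q^2+(q-2)q+(q-2)=(q-2)\sum_{i=0}^{3}q^i$, one sees $\overline{(q-1)q^3-q^2-q-2}=(q-2,q-2,q-2,q-2)_q$, so by Lemma~\ref{lem1b21} it is a coset leader modulo $q^4-1$, and hence by Lemma~\ref{lemma-1219} its quotient by $q+1$ is a coset leader modulo $n$.

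Next I would establish that no coset leader modulo $n$ lies strictly between this candidate and the largest coset leader $\delta_1$. For $m=4$ we have $m\equiv 0\pmod 4$, so by Lemma~\ref{detal1221} the largest coset leader is $\delta_1=\frac{(q-1)q^3-q^2-1}{q+1}$. Suppose $\delta'$ is a coset leader modulo $n$ with $\frac{(q-1)q^3-q^2-q-2}{q+1}<\delta'<\delta_1$. Mimicking the argument in Lemma~\ref{detal1221}, for every $1\le \ell\le m-1$ the inequality $\bigl(\delta' q^\ell \bmod n\bigr)\ge \delta'$ lifts to $\bigl((q+1)\delta' q^\ell \bmod (q^4-1)\bigr)\ge (q+1)\delta'$, so $(q+1)\delta'$ is a coset leader modulo $q^4-1$ lying strictly between $(q-1)q^3-q^2-q-2$ and $(q-1)q^3-q^2-1$.

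The crux of the proof is therefore a purely combinatorial statement modulo $q^4-1$: the only coset leaders in the open interval $\bigl((q-1)q^3-q^2-q-2,\ (q-1)q^3-q^2-1\bigr)$ are those whose residue class does not give an integer after division by $q+1$. I would enumerate the integers in this short interval by their $q$-adic digit vectors $(a_3,a_2,a_1,a_0)_q$ with leading digit $a_3=q-2$; using Lemma~\ref{lem1b21} to test the coset-leader property (an integer is a coset leader iff it is the $\overline{\phantom{x}}$-smallest in its cyclic-shift orbit), I would list which of these finitely many candidates are coset leaders, and then check the divisibility condition $(q+1)\mid (q+1)\delta'$ against the actual values $(q+1)\delta'$ forced by Lemma~\ref{lem:qm1q1} and its refinements. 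I expect the main obstacle to be the bookkeeping of these boundary cases: the interval is narrow but the digit patterns near $(q-1,q-1,\ldots)$ and $(q-2,q-2,\ldots)$ must be handled uniformly in $q$, and one must confirm that the intermediate coset leaders modulo $q^4-1$ (other than $\delta_1$ and the candidate itself) all fail the divisibility by $q+1$, so that no spurious $\delta'$ survives. Once this finite verification is complete, the desired conclusion follows.
```
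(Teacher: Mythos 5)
Your proposal follows essentially the same route as the paper: first show the candidate is a coset leader modulo $q^4-1$ via its digit vector $(q-2,q-2,q-2,q-2)_q$ together with Lemmas \ref{lem1b21} and \ref{lemma-1219}, then lift any hypothetical intermediate coset leader $\delta'$ modulo $n$ to a coset leader $(q+1)\delta'$ modulo $q^4-1$ and kill it by a divisibility-by-$(q+1)$ check. The only piece you leave as a plan is the concluding finite enumeration; the paper executes it by listing the five largest coset leaders modulo $q^4-1$ as $(q-2,q-1,q-1,q-1)_q$, $(q-2,q-1,q-2,q-1)_q$, $(q-2,q-2,q-1,q-1)_q$, $(q-2,q-2,q-2,q-1)_q$, $(q-2,q-2,q-2,q-2)_q$ and noting that, of the intermediate ones, none except $(q-1)q^3-q^2-1$ (which is $(q+1)\delta_1$ itself) is divisible by $q+1$, so no spurious $\delta'$ survives --- exactly the verification you anticipated would go through.
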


\begin{proof}
 It is easy to see  that $$\overline{(q-1)q^3-q^2-q-2}=(q-2,q-2,q-2,q-2)_q,$$ so $(q-1)q^3-q^2-q-2$ is a coset leader modulo $q^4-1$.
From Lemma \ref{lemma-1219}, we know that $\frac{(q-1)q^3-q^2-q-2}{q+1}$ is a coset leader modulo $\frac{q^4-1}{q+1}$.

 We now assert that $\frac{(q-1)q^3-q^2-q-2}{q+1}$ is the second largest coset leader modulo $\frac{q^4-1}{q+1}$. If there exists a coset leader $\delta'$ modulo $n$ such that $\frac{(q-1)q^3-q^2-q-2}{q+1}<\delta'<\frac{(q-1)q^3-1}{q+1}$, then for any positive integer $\ell$ with $1\leq \ell\leq m-1$, we have $\left(\delta' q^\ell \bmod{\frac{q^m-1}{q+1}}\right) \geq\delta'$, which implies that
	\begin{equation*}
 \left((q+1)\delta' q^\ell \bmod{q^m-1}\right) \geq(q+1)\delta'.
	\end{equation*}
Hence, $(q+1)\delta'$ is a coset leader modulo $q^m-1$.

From Lemma \ref{lem1b21}, we know that the sequences of first five largest coset leaders modulo $q^4-1$ are
$(q-2,q-1,q-1,q-1)_q,$ $(q-2,q-1,q-2,q-1)_q,$ $(q-2,q-2,q-1,q-1)_q,$
$(q-2,q-2,q-2,q-1)_q$ and $(q-2,q-2,q-2,q-2)_q$, respectively.
Then the first five largest coset leaders modulo $q^4-1$ are $(q-1)q^3-1$, $(q-1)q^3-q-1$, $(q-1)q^3-q^2-1$, $(q-1)q^3-q^2-q-1$ and $(q-1)q^3-q^2-q-2$, respectively.
Since $\frac{(q-1)q^3-q^2-q-2}{q+1}<\delta'<\frac{(q-1)q^3-1}{q+1}$,  if $\frac{(q-1)q^3-q^2-q-2}{q+1}$ is not the second largest coset leader, we have
\begin{equation}\label{eq:0729}
(q+1)\delta'=(q-1)q^3-q-1, \text{or}\,\,(q+1)\delta'=(q-1)q^3-q^2-1,\text{or}\,\,(q+1)\delta'=(q-1)q^3-q^2-q-1.
\end{equation}
It is easy to check that Equation \eqref{eq:0729} is impossible since $\delta'$ is an integer.
The desired conclusion then follows.
\end{proof}

\begin{lemma}\label{lem:122201}
Let $q>2$ be a prime power and $m\geq 4$ be even. Let $\delta_1$ be given as in Lemma \ref{detal1221}. Then the following hold.
\begin{enumerate}
\item[{\rm (1)}]  $\frac{(q-1)q^3-q^2-q-2}{q+1} \in T^\bot$ is a coset leader modulo $n$ if $m= 4$ and $2 \le \delta \le q^2+1$.
\item[{\rm (2)}]  $\frac{\sum_{i=0}^{m-1}q^i}{q+1} \in T^\bot$ is a coset leader modulo $n$ if $2 \le \delta \le \frac{(q-2)(\sum_{i=0}^{m-1}q^i)}{q+1}$.
\item[{\rm (3)}]  $2 \notin T^\bot$ is a coset leader modulo $n$ if $\frac{(q-2)(\sum_{i=0}^{m-1}q^i)}{q+1}  < \delta \le \delta_1$.
\end{enumerate}
\end{lemma}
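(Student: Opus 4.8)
My plan rests on one reduction used uniformly for all three parts. Since $T=C_1\cup C_2\cup\cdots\cup C_{\delta-1}$ is a union of full $q$-cyclotomic cosets, for any $x$ with $0<x<n$ one has $x\in T^{\perp}=\mathbb{Z}_n\setminus T^{-1}$ if and only if $n-x\notin T$, and the latter holds precisely when the coset leader of $n-x$ modulo $n$ is at least $\delta$; conversely $x\notin T^{\perp}$ exactly when that coset leader is at most $\delta-1$. Thus each assertion splits into two tasks: verify that the displayed element is itself a coset leader modulo $n$ (the ``is a coset leader'' clause), and then compute the coset leader of the complementary element $n-x$ and compare it with the stated range of $\delta$. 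Throughout I would pass between residues modulo $n$ and modulo $q^m-1$ via the correspondence $x\mapsto(q+1)x$ of Lemma \ref{lemma-1219}, reading coset leaders off $q$-adic digit strings through Lemma \ref{lem1b21}.

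For part (1), with $m=4$ we have $n=q^3-q^2+q-1=(q-1)(q^2+1)$. The element $\frac{(q-1)q^3-q^2-q-2}{q+1}$ is a coset leader because $(q-1)q^3-q^2-q-2=(q-2,q-2,q-2,q-2)_q$ is fixed by the digit rotation, hence a singleton coset leader modulo $q^4-1$, and it is divisible by $q+1$, so Lemma \ref{lemma-1219} applies (this is the coset-leader content of Lemma \ref{lem:10}, valid for every $q>2$). A direct computation gives $n-\frac{(q-1)q^3-q^2-q-2}{q+1}=q^2+1$, and the identity $(q^2+1)(q-1)=n$ shows $(q^2+1)q\equiv q^2+1\pmod n$, so $\{q^2+1\}$ is a singleton coset with leader $q^2+1$. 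Hence for $2\le\delta\le q^2+1$ we have $q^2+1\notin T$, giving $\frac{(q-1)q^3-q^2-q-2}{q+1}\in T^{\perp}$. Part (2) follows the same pattern: by Lemma \ref{lemma:1222} the element $\frac{\sum_{i=0}^{m-1}q^i}{q+1}$ is a coset leader modulo $n$, which settles the first clause. Writing $S=\sum_{i=0}^{m-1}q^i=\frac{q^m-1}{q-1}$ and using $(q-1)S=q^m-1$, I compute $n-\frac{S}{q+1}=\frac{(q-1)S-S}{q+1}=\frac{(q-2)S}{q+1}$, which is again a coset leader modulo $n$ by Lemma \ref{lemma:1222}. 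Therefore for $2\le\delta\le\frac{(q-2)S}{q+1}$ its coset leader (itself) is $\ge\delta$, so $\frac{(q-2)S}{q+1}\notin T$, i.e. $\frac{S}{q+1}\in T^{\perp}$.

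Part (3) is where the real work lies. That $2$ is a coset leader modulo $n$ follows from Lemma \ref{lemma-1219}: $2(q+1)=2q+2$ has $q$-adic expansion $(0,\ldots,0,2,2)_q$, whose smallest cyclic rotation (the one with the most leading zeros) is itself, so $2q+2$ is a coset leader modulo $q^m-1$. For the membership claim I must show $2\notin T^{\perp}$, i.e. that the coset leader of $n-2$ is at most $\delta-1$, and for this it suffices to prove it is strictly below $\frac{(q-2)S}{q+1}<\delta$. The plan is to compute it through the digit picture: $(q+1)(n-2)\equiv q^m-2q-3\pmod{q^m-1}$ has $q$-adic expansion $(q-1,\ldots,q-1,q-3,q-3)_q$, whose minimal cyclic rotation, by Lemma \ref{lem1b21}, is $(q-3,q-3,q-1,\ldots,q-1)_q=q^m-2q^{m-1}-2q^{m-2}-1$, which one checks is divisible by $q+1$ (using $q\equiv-1$ and $m$ even). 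Hence the coset leader of $n-2$ equals $\frac{q^m-2q^{m-1}-2q^{m-2}-1}{q+1}$. Comparing leading $q$-adic digits of $(q-3,q-3,q-1,\ldots,q-1)_q$ against $(q-2)S=(q-2,q-2,\ldots,q-2)_q$, Lemma \ref{lem1b21} yields $q^m-2q^{m-1}-2q^{m-2}-1<(q-2)S$, so the coset leader of $n-2$ is strictly less than $\frac{(q-2)S}{q+1}<\delta$; thus $n-2\in T$ and $2\notin T^{\perp}$.

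The main obstacle is precisely this coset-leader computation in part (3): one must correctly identify the minimal cyclic rotation of the digit string of $q^m-2q-3$, checking that placing the two adjacent $q-3$ digits in the most significant positions beats every other rotation, and then verify the resulting value is a multiple of $q+1$ so that Lemma \ref{lemma-1219} transfers it to a coset leader modulo $n$. The remaining comparisons reduce to a single most-significant digit and are routine. I would also record the boundary case $q=3$ (where $q-3=0$ and $q-2=1$) to confirm that the digit subtractions need no borrowing and the strict inequalities persist, and note that the hypothesis $\delta\le\delta_1$ enters only to keep $\delta$ within the meaningful range of designed distances, not in the implication itself.
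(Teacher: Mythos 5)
Your proposal is correct and follows essentially the same route as the paper: in each part you identify $n-x$ explicitly ($q^2+1$, $\frac{(q-2)\sum_i q^i}{q+1}$, and the element $\frac{q^m-2q^{m-1}-2q^{m-2}-1}{q+1}=n-2q^{m-2}$ in the coset of $n-2$, respectively) and decide membership in $T$ by comparing its coset leader with $\delta$, transferring coset-leader computations to $q$-adic digit strings modulo $q^m-1$ via Lemma \ref{lemma-1219}. The only cosmetic difference is in part (3), where you compute the exact coset leader of $n-2$ while the paper merely exhibits the same element $n-2q^{m-2}<\frac{(q-2)\sum_{i=0}^{m-1}q^i}{q+1}<\delta$ as a member of $T$; both arguments rest on the identical inequality and coset identification.
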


\begin{proof} From Lemma \ref{lemma-1219} and Lemma \ref{lem:10}, it is easy to get that $\frac{(q-1)q^3-q^2-q-2}{q+1}$, $\frac{\sum_{i=0}^{m-1}q^i}{q+1}$, and $2$ are coset leaders modulo $n$.

When $m=4$ and $2 \le \delta \le q^2+1$, it follows from Lemma \ref{lemma:1222} that
	$\frac{q^3+q^2+q+1}{q+1}=q^2+1$ is a coset leader modulo $\frac{q^4-1}{q+1}$. This means that $q^2+1  \nsubseteq T$
in this case. Hence,
$$\frac{(q-1)q^3-q^2-q-2}{q+1}=n-(q^2+1)\notin T^{-1} \text{ and } \frac{(q-1)q^3-q^2-q-2}{q+1} \in Z_n \backslash T^{-1}=T^{\perp}.$$

When $2 \le \delta \le \frac{(q-2)(\sum_{i=0}^{m-1}q^i)}{q+1}$,
 we see from Lemma \ref{lemma:1222} that
	$\frac{(q-2)(\sum_{i=0}^{m-1}q^i)}{q+1}$ is a coset leader modulo $n$.  Then $\frac{(q-2)(\sum_{i=0}^{m-1}q^i)}{q+1} \notin T$. Hence,
$$\frac{\sum_{i=0}^{m-1}q^i}{q+1}=n-\frac{(q-2)(\sum_{i=0}^{m-1}q^i)}{q+1}\notin T^{-1} \text{ and } \frac{\sum_{i=0}^{m-1}q^i}{q+1} \in Z_n \backslash T^{-1}=T^{\perp}.$$

When $\frac{(q-2)(\sum_{i=0}^{m-1}q^i)}{q+1}  < \delta \le \delta_1$, it is easy to see that
$$\frac{(q-2)q^{m-1}-2q^{m-2}-1}{q+1}<\frac{(q-2)(\sum_{i=0}^{m-1}q^i)}{q+1}.$$
Then $$\frac{(q-2)q^{m-1}-2q^{m-2}-1}{q+1}=\frac{q^m-1}{q+1}-2q^{m-2} \in T.$$
Note that $n-2q^{m-2}$ and $n-2$ are in the same $q$-cyclotomic coset modulo $n$. Then we have
 $n-2\in T$, i.e.,
$2\in T^{-1}$. Hence, $2\notin T^{\perp}= Z_n\setminus T^{-1}.$
\end{proof}

\begin{lemma}\label{lem:0917}
Let $q=2$ and $m\geq 8$ be even. When $7< \delta \leq \frac{2^{m-2}-1}{3}$,  there is a coset leader belonging to $T^{\perp}$. Moreover, the coset leader is larger than $\frac{\sum_{i=0}^{m-5}2^i}{3}$.
\end{lemma}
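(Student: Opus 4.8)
The plan is to produce a single coset leader $c$ lying in $T^{\perp}$ for every admissible $\delta$, carrying the computation over to the modulus $N:=2^m-1=3n$ where the combinatorics is transparent. The key mechanism (the content of Lemma \ref{lemma-1219}) is that multiplication by $3$ is an order-preserving bijection from $\{0,1,\dots,n-1\}$ onto the multiples of $3$ in $\{0,1,\dots,N-1\}$ commuting with multiplication by $2$; hence for any $y$ one has $3\cdot(\text{coset leader of }y \bmod n)=\text{coset leader of }3y \bmod N$, and $3(n-c)=N-3c$ is just the $m$-bit complement of $3c$. Writing $A:=3c$, a coset leader $c$ modulo $n$ lies in $T^{\perp}=\Z_n\setminus T^{-1}$ iff the coset leader of $n-c$ is $\ge\delta$, and this translates into a statement about the binary complement $\overline{A}=N-A$.

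Thus I would reduce the lemma to finding an integer $A$ with $3\mid A$ that is a coset leader modulo $N$ whose $m$-bit expansion has (c) every run of $0$'s of length $\le 3$ and (d) every run of $1$'s of length $\le 1$. For a coset leader the number of leading zeros equals the longest $0$-run, so (c) gives $A\ge 2^{m-4}$ and therefore $c=A/3>\frac{2^{m-4}-1}{3}=\frac{\sum_{i=0}^{m-5}2^i}{3}$, which is the asserted bound. Condition (d) forces $\overline{A}$ to have longest $0$-run $\le 1$, so its smallest rotation begins $0,1,\dots$ and is $\ge 2^{m-2}$; dividing by $3$, the coset leader of $n-c$ is $\ge 2^{m-2}/3>\frac{2^{m-2}-1}{3}\ge\delta$, so $n-c\notin T$ and $c\in T^{\perp}$. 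Note only $\delta\le\frac{2^{m-2}-1}{3}$ is used here.

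The crux is the combinatorial existence of such an $A$, the difficulty being to meet divisibility by $3$ together with (c) and (d). I would assemble the cyclic word from the two blocks $01$ and $0001$ (an isolated $1$ preceded by one, resp. three, $0$'s), both of \emph{even} length, so that all $1$'s occupy positions of a single parity and the word's value is $\equiv\pm(\#\text{ones})\pmod 3$. Taking $a$ copies of $01$ and $b$ copies of $0001$ yields length $2a+4b=m$ and $\#\text{ones}=a+b=\tfrac{m}{2}-b$; choosing $b$ with $0\le b\le\lfloor m/4\rfloor$ and $b\equiv\tfrac{m}{2}\pmod 3$ makes $\#\text{ones}\equiv 0\pmod 3$ and hence the value divisible by $3$. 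Such a $b$ exists precisely because $m\ge 8$ gives $\lfloor m/4\rfloor\ge 2$, so $\{0,1,\dots,\lfloor m/4\rfloor\}$ meets every residue class modulo $3$. I then let $A$ be the smallest rotation of this word: divisibility by $3$ is rotation-invariant since $3\mid N$, and (c), (d) are cyclic invariants, so all three properties survive, and $c:=A/3$ is a coset leader modulo $n$ by Lemma \ref{lemma-1219}.

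The step I expect to require the most care is this last existence/divisibility count, which is exactly where the hypothesis $m\ge 8$ enters; the remaining verifications are routine, namely that concatenating blocks each ending in $1$ cannot create a $0$-run longer than $3$, that a single parity class of $1$-positions gives value $\equiv\pm(\#\text{ones})$, and that the minimal rotation exposes the longest $0$-run so the inequality $A\ge 2^{m-4}$ holds. Combining the two displayed inequalities then shows the constructed coset leader $c$ lies in $T^{\perp}$ and exceeds $\frac{\sum_{i=0}^{m-5}2^i}{3}$ for all $\delta$ in the stated range.
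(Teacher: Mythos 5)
Your proposal is correct, and it follows the same overall skeleton as the paper's proof: both transfer the problem to the modulus $2^m-1$ via Lemma \ref{lemma-1219} and exhibit a coset $C_c$ with $c>\frac{2^{m-4}-1}{3}$ whose complementary coset $C_{n-c}$ has leader exceeding $\frac{2^{m-2}-1}{3}\ge\delta$, so that $c\in T^{\perp}$. Where you differ is in how the witness is built. The paper writes down explicit binary words $M$ (and the associated $L$) by a case analysis on $m\bmod 3$, with a separate ad hoc word for $m=10$, and then verifies coset-leadership and divisibility by $3$ for each formula; the reader must also check by hand that $N-M$ rotates to the stated $L$. You instead isolate the two combinatorial properties that make any witness work --- cyclic $0$-runs of length at most $3$ (forcing $A\ge 2^{m-4}$) and cyclic $1$-runs of length at most $1$ (forcing the leader of $\overline{A}=N-A$ to be at least $2^{m-2}$) --- and realize them uniformly by concatenating blocks $01$ and $0001$; since all blocks have even length, the value is $\equiv a+b\pmod 3$, and the single counting condition $b\equiv m/2\pmod 3$ with $0\le b\le\lfloor m/4\rfloor$ (available exactly because $m\ge 8$) replaces the paper's case split. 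I checked the details: the minimal rotation of a nonzero word indeed has as many leading zeros as its longest cyclic $0$-run, divisibility by $3$ is rotation-invariant modulo $2^m-1$, and the two displayed inequalities give both conclusions of the lemma (you only use $\delta\le\frac{2^{m-2}-1}{3}$, which is harmless). Your route buys a cleaner, case-free argument that makes the role of the hypothesis $m\ge 8$ transparent; the paper's route buys explicit closed-form coset leaders, which it does not, however, need elsewhere.
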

\begin{proof}
Let $M=2^8+2^6+2^5+2^4+2^2+2+1=(0,1,0,1,1,1,0,1,1,1)_2$ if $m=10$, and
\begin{equation*}
\begin{split}
M&= 2^{m-2}+\sum_{i=0}^1(2^{m-4-4i}+2^{m-5-4i}+2^{m-6-4i})+\sum_{i=0}^{\frac{m-13}{3}}(2^{m-12-3i}+2^{m-13-3i})\\
&=(0,1,0,1,1,1,0,1,1,1,\underline{0,1,1},\underline{0,1,1},\cdots,\underline{0,1,1})_2
\end{split}
\end{equation*}
if $m\equiv 1 \pmod 3$ and $m\neq 10$, and
\[M=\left\{ \begin{array}{lll}
           \sum_{i=0}^{\frac{m-2}{2}}2^{2i}=(0,1,0,1,\cdots,0,1)_2, & \, \,\, \text{if $m\equiv 0 \pmod 3$},  \\
\sum_{i=1}^{\frac{m-2}{2}}2^{2i}+2+1=(\underline{0,1},\underline{0,1},\cdots,\underline{0,1},1,1)_2, & \, \,\, \text{if $m\equiv 2 \pmod 3$}. \end{array}  \right.\]
It is easy to see that $M$ is a coset leader modulo $2^m-1$. By the definition of $M$, we know that $3\, |\, M$. Then from Lemma \ref{lemma-1219}, $\frac{M}{3}$ is a coset leader modulo $n$.

Since $\frac{M}{3}>\frac{2^{m-2}-1}{3}$, we have $M \notin T$. If $m=10$, then
$$\frac{2^3(2^6+2^4+1)}{3}=n-M \notin T^{-1} \text{ and } \frac{2^6+2^2+1}{3} \in Z_n \backslash T^{-1}=T^{\perp}$$
since $\frac{2^6+2^2+1}{3}$ and $\frac{2^6+2^4+1}{3}$ are in the same coset modulo $n$.
If $m\equiv 1 \pmod 3$ and $m\neq 10$, then
$\frac{2^2(2^{m-3}+2^{m-5}+2^{m-9}+\sum_{i=0}^{\frac{m-13}{3}}2^{3i})}{3}=n-M \notin T^{-1}$ and  $$\frac{2^{m-4}+\sum_{i=0}^{\frac{m-13}{3}}2^{3i+5}+2^2+1}{3} \in Z_n \backslash T^{-1}=T^{\perp}$$
since $\frac{2^{m-3}+2^{m-5}+2^{m-9}+\sum_{i=0}^{\frac{m-13}{3}}2^{3i}}{3}$ and $\frac{2^{m-4}+\sum_{i=0}^{\frac{m-13}{3}}2^{3i+5}+2^2+1}{3}$ are in the same coset modulo $n$.

Similarly, we have
\[\left\{ \begin{array}{lll}
           \frac{\sum_{i=0}^{\frac{m-2}{2}}2^{2i}}{3}\in Z_n \backslash T^{-1}=T^{\perp}, & \, \,\, \text{if $m\equiv 0 \pmod 3$},  \\
\frac{\sum_{i=0}^{\frac{m-4}{2}}2^{2i}}{3}\in Z_n \backslash T^{-1}=T^{\perp}, & \, \,\, \text{if $m\equiv 2 \pmod 3$}. \end{array}  \right.\]
Let
\[L=\left\{ \begin{array}{lll}
          2^6+2^2+1, & \, \,\, \text{if $m=10$},  \\
           \sum_{i=0}^{\frac{m-2}{2}}2^{2i}, & \, \,\, \text{if $m\equiv 0 \pmod 3$},  \\
            2^{m-4}+\sum_{i=0}^{\frac{m-13}{3}}2^{3i}+2^2+1, & \, \,\, \text{if $m\equiv 1 \pmod 3$ and $m \ne 10$},  \\
\sum_{i=0}^{\frac{m-4}{2}}2^{2i}, & \, \,\, \text{if $m\equiv 2 \pmod 3$}. \end{array}  \right.\]
It is easy to check that $L$ is a coset leader modulo $2^m-1$.  From Lemma \ref{lemma-1219}, $\frac{L}{3}$ is a coset leader modulo $n$. It is obvious that $\frac{L}{3}>\frac{\sum_{i=0}^{m-5}2^i}{3}$. The desired conclusion then follows.
\end{proof}

\begin{lemma} \label{lemma-break-point-(q^m-1)/(q-1)}
Let $q$ be a prime power  and $2\leq t \leq m-2$ be even. For $2 \le \delta < n$, let $I(\delta) \ge 1$ be the integer such that $\{ 0,1,2,\ldots,I(\delta)-1 \} \subseteq T^{\bot}$ and $I(\delta) \notin T^{\bot}$.
Then we have $I(\delta)=\frac{q^{m-t}-1}{q+1}$ if $\frac{q^{t}-1}{q+1}< \delta \le \frac{q^{t+1}+2q^t-1}{q+1}$.
\end{lemma}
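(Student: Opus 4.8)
The plan is to reformulate $I(\delta)$ through the largest element of the defining set and then push everything down to the modulus $q^m-1$. Since $0\in T^{\perp}$, the quantity $I(\delta)$ is the least $k\ge 1$ with $k\notin T^{\perp}$, equivalently with $n-k\in T$; thus $n-I(\delta)$ is the largest element of $T$. As $(q+1)n=q^m-1$ and multiplication by $q+1$ is an order-preserving, $q$-equivariant bijection of $\mathbb{Z}_n$ onto the multiples of $q+1$ in $\mathbb{Z}_{q^m-1}$, Lemma \ref{lemma-1219} shows that the coset leader of $C_a$ modulo $n$ is $\frac{1}{q+1}$ times the coset leader of $C_{(q+1)a}$ modulo $q^m-1$. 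Consequently $a\in T$ if and only if the coset leader of $(q+1)a$ modulo $q^m-1$ is at most $(q+1)(\delta-1)$, and the task becomes to locate the largest multiple of $q+1$ in $\mathbb{Z}_{q^m-1}$ whose coset leader does not exceed $(q+1)(\delta-1)$.

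For the bound $I(\delta)\le\frac{q^{m-t}-1}{q+1}$ I would test the candidate $a^{*}=\frac{q^m-q^{m-t}}{q+1}=n-\frac{q^{m-t}-1}{q+1}$, which is an integer because $t$ is even. Its lift $(q+1)a^{*}=q^m-q^{m-t}$ has $q$-adic expansion $(q-1,\ldots,q-1,0,\ldots,0)_q$ with $t$ leading digits $q-1$, so by Lemma \ref{lem1b21} its coset leader modulo $q^m-1$ is $(0,\ldots,0,q-1,\ldots,q-1)_q=q^t-1$, and hence the coset leader of $a^{*}$ modulo $n$ is $\frac{q^t-1}{q+1}$. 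As $\delta>\frac{q^t-1}{q+1}$ forces $\delta-1\ge\frac{q^t-1}{q+1}$, we obtain $a^{*}\in T$, so $\frac{q^{m-t}-1}{q+1}=n-a^{*}\in T^{-1}$ lies outside $T^{\perp}$; this yields both $I(\delta)\le\frac{q^{m-t}-1}{q+1}$ and the required $\frac{q^{m-t}-1}{q+1}\notin T^{\perp}$.

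The reverse inequality requires $\{0,1,\ldots,\frac{q^{m-t}-1}{q+1}-1\}\subseteq T^{\perp}$, i.e. $n-k\notin T$ for $1\le k\le\frac{q^{m-t}-1}{q+1}-1$. Putting $b=(q+1)(n-k)=q^m-1-(q+1)k$, this is the statement that the coset leader of $b$ modulo $q^m-1$ exceeds $(q+1)(\delta-1)$; since $\delta\le\frac{q^{t+1}+2q^t-1}{q+1}$ gives $(q+1)(\delta-1)\le q^{t+1}+2q^t-q-2$, it suffices to establish the uniform bound that the coset leader of $b$ is at least $q^{t+1}+2q^t-1$. Writing the expansion of $b$ as $(q-1,\ldots,q-1,w_{m-t-1},\ldots,w_0)_q$ with $t$ leading digits $q-1$, the bottom block has value $w_{\mathrm{val}}=q^{m-t}-1-(q+1)k$, a nonzero multiple of $q+1$ with $q+1\le w_{\mathrm{val}}\le q^{m-t}-1-(q+1)$. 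The coset leader is the least cyclic rotation of $b$; if $b$ has no cyclic run of at least $m-t-2$ zeros, then every rotation has at most $m-t-3$ leading zeros and therefore value at least $q^{t+2}>q^{t+1}+2q^t-1$, so only a zero run of length exactly $m-t-2$ is problematic.

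This last case is the heart of the matter, and it is settled using $q\equiv -1\pmod{q+1}$. A zero run of length $m-t-2$ inside the $(m-t)$-digit bottom block forces it to have exactly two nonzero digits $d_iq^i+d_jq^j$ with $i>j$; the divisibility $d_i(-1)^i+d_j(-1)^j\equiv 0\pmod{q+1}$ then excludes a single nonzero digit and forces either $d_i=d_j$ (opposite parity of $i,j$) or $d_i+d_j=q+1$ (equal parity). A short parity count shows the equal-parity alternative can never produce a run of length $m-t-2$, leaving only the configurations where two equal digits sit adjacently or at the two ends of the bottom block; in each case the minimal rotation takes one of the values $dq^{t+1}+(d+1)q^t-1$, $(d+1)q^{t+1}-q+d$, or $q^{t+2}-q^2+d(q+1)$ with $1\le d\le q-1$, all of which are at least $q^{t+1}+2q^t-1$ (equality at $w_{\mathrm{val}}=q+1$). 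Combining the two directions gives $I(\delta)=\frac{q^{m-t}-1}{q+1}$. I expect the only genuine care to lie in this finite case analysis of extremal digit patterns, together with checking that the boundary instance $t=m-2$, $q=2$ is vacuous.
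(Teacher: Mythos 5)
Your proposal is correct and follows essentially the same route as the paper: both halves reduce to coset leaders modulo $q^m-1$ via the order-preserving scaling by $q+1$ (Lemma \ref{lemma-1219}), the element $q^m-q^{m-t}$ settles $I(\delta)\le\frac{q^{m-t}-1}{q+1}$ exactly as in the paper, and the reverse inclusion rests on the same bound $q^{t+1}+2q^t-1$ for the coset leaders of $q^m-q^{m-t}+(q+1)u$. Your execution of that last step (longest zero run plus the alternating digit sum modulo $q+1$, reducing to three extremal two-digit patterns) is a more systematic version of the paper's single-rotation digit-structure claim, and it checks out, including the equality case $w_{\mathrm{val}}=q+1$.
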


\begin{proof}
When $\frac{q^{t}-1}{q+1}< \delta \le \frac{q^{t+1}+2q^t-1}{q+1}$, it is straightforward to see that
$$\frac{q^m-q^{m-t}}{q+1}= \frac{(q^{t}-1)}{q+1}  q^{m-t} \in C_{\frac{q^{t}-1}{q+1}} \subseteq T.$$
Therefore, $\frac{q^{m-t}-1}{q+1}=n-\frac{q^m-q^{m-t}}{q+1} \in T^{-1}$ and $\frac{q^{m-t}-1}{q+1} \not \in T^\bot=\Bbb Z_n \setminus T^{-1}$.
	
	We are ready to show that $\{ 0,1,2,\ldots,\frac{q^{m-t}-1}{q+1}-1 \} \subseteq T^{\bot}$. It is clear that $0 \in T^\bot$.
	For every integer $i$ with $1 \le i \le \frac{q^{m-t}-1}{q+1}-1$, we have $i=\frac{q^{m-t}-1}{q+1}-u$, where $1 \le u \le \frac{q^{m-t}-1}{q+1}-1$. Note that
	$$((q+1) \cdot q^{t}u+q^{t}-1)q^{m-t} \equiv q^m-q^{m-t}+(q+1)u \pmod {q^m-1}$$
	and the sequence of $(q+1) \cdot q^{t}u+q^{t}-1$ is
	$$(\underbrace{i_{m-1},i_{m-2},\ldots,i_{t}}_{m-t}, \underbrace{q-1,\ldots,q-1}_{t})_q,$$
	where $i_{t}=i_{t+1}=1$ if $u=1$ and there are at least two $t+1 \le j \le m-1$ such that $i_j \ne 0$ if $u >1$. It follows that
	the coset leader of the cyclotomic coset of $(q+1) \cdot q^{t}u+q^{t}-1$ modulo $q^m-1$ is larger than or equal to $q^{t+1}+2q^t-1$.
	Then we obtain that
	$$\text{CL}\left(\frac{(q+1) \cdot q^{t}u+q^{t}-1}{q+1}\right) \ge \frac{q^{t+1}+2q^t-1}{q+1} > \delta-1,$$
where $\text{CL}\left(\frac{(q+1) \cdot q^{t}u+q^{t}-1}{q+1}\right)$ denotes the coset leader of the $2$-cyclotomic coset modulo $n$ containing $\frac{(q+1) \cdot q^{t}u+q^{t}-1}{q+1}$.
Consequently, $\frac{(q+1) \cdot q^{t}u+q^{t}-1}{q+1} \not \in T$ and $\frac{q^{m-t}-1}{q+1}-u \not \in T^{-1}$. This leads to $i=\frac{q^{m-t}-1}{q+1}-u \in T^\bot$.
	It then follows that $I(\delta)=\frac{q^{m-t}-1}{q+1}$ for any $\delta$ with $\frac{q^{t}-1}{q+1}< \delta \le \frac{q^{t+1}+2q^t-1}{q+1}$.
 The proof is then completed.
\end{proof}

\begin{lemma} \label{lemma-1002}
Let $q$ be a prime power  and $2\leq \delta \leq q-1$. Let $d^{\perp}(\delta)$ be the minimum distance of $\mathcal C_{(q,n,\delta)}^\bot$.  Then
$d^{\perp}(\delta)\geq \frac{q^{m-1}+2q^{m-2}-1}{q+1}$.
\end{lemma}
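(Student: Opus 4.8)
\emph{Proof proposal.} The plan is to deduce the bound from the BCH bound applied to $\mathcal{C}^\perp_{(q,n,\delta)}$, so that the whole argument reduces to exhibiting a long run of consecutive integers in the defining set $T^\perp=\mathbb{Z}_n\setminus T^{-1}$. Since $0\notin T$ we have $\min(T)=1$ and $\max(T)\le n-1$, whence $T^{-1}=\{n-i:i\in T\}\subseteq\{n-\max(T),\ldots,n-1\}$ and $n-\max(T)\in T^{-1}$. Consequently $\{0,1,\ldots,n-\max(T)-1\}\subseteq T^\perp$ is a run of exactly $n-\max(T)$ consecutive integers, and $n-\max(T)\notin T^\perp$. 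Thus the entire task is to compute $\max(T)$, the largest element (viewed as an integer in $[0,n-1]$) of $T=C_1\cup C_2\cup\cdots\cup C_{\delta-1}$ for $2\le\delta\le q-1$.

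First I would compute $\max(T)$ through the correspondence between cyclotomic cosets modulo $n$ and modulo $q^m-1$, in the spirit of Lemma \ref{lemma-1219}. The map $x\mapsto (q+1)x$ is a strictly increasing bijection from $\{0,1,\ldots,n-1\}$ onto the multiples of $q+1$ in $[0,q^m-1)$, and it satisfies $(q+1)\bigl(iq^j\bmod n\bigr)=(q+1)iq^j\bmod(q^m-1)$; hence it carries $C_i$ (mod $n$) onto the coset $C_{(q+1)i}$ (mod $q^m-1$) while preserving order, so $\max(T)=\frac{1}{q+1}\max\bigl(\bigcup_{i=1}^{\delta-1}C_{(q+1)i}\bmod(q^m-1)\bigr)$. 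Because $1\le i\le\delta-1\le q-2<q$, the $q$-adic expansion of $(q+1)i=iq+i$ is $(0,\ldots,0,i,i)_q$, i.e.\ two adjacent digits equal to $i$ and all others zero. Multiplication by $q$ modulo $q^m-1$ cyclically rotates these $m$ digits, so by Lemma \ref{lem1b21} the largest element of $C_{(q+1)i}$ is the rotation placing the two nonzero digits at the top, namely $(i,i,0,\ldots,0)_q=i(q+1)q^{m-2}$. Dividing by $q+1$ gives $\max(C_i\bmod n)=iq^{m-2}$, which is increasing in $i$, so $\max(T)=(\delta-1)q^{m-2}$.

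It then follows that $T^\perp$ contains the $n-(\delta-1)q^{m-2}$ consecutive integers $\{0,1,\ldots,n-(\delta-1)q^{m-2}-1\}$, so the BCH bound yields $d^\perp(\delta)\ge n-(\delta-1)q^{m-2}+1$. A direct computation gives $n-(\delta-1)q^{m-2}=\frac{q^m-1-(\delta-1)(q^{m-1}+q^{m-2})}{q+1}$, which is strictly decreasing in $\delta$ and hence minimized over $2\le\delta\le q-1$ at $\delta=q-1$, where it equals $\frac{q^{m-1}+2q^{m-2}-1}{q+1}$. Therefore $d^\perp(\delta)\ge n-(\delta-1)q^{m-2}\ge\frac{q^{m-1}+2q^{m-2}-1}{q+1}$ for every $\delta$ in the range, which is the claim.

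The hard part will be the identification $\max(T)=(\delta-1)q^{m-2}$: one must verify that, after passing to residues modulo $q^m-1$, no cyclic rotation of any $C_{(q+1)i}$ yields a digit string exceeding $(i,i,0,\ldots,0)_q$, and that the overall maximum over the $\delta-1$ cosets is attained at $i=\delta-1$. This is exactly where Lemma \ref{lem1b21} does the work, converting the comparison of residues into a lexicographic comparison of their $q$-adic digit strings; the surrounding reductions and the final arithmetic are routine.
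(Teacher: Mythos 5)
Your proposal is correct, and it reaches the bound by a genuinely different mechanism than the paper's proof, even though both ultimately rest on the BCH bound applied to an initial run $\{0,1,\ldots,K-1\}\subseteq T^{\perp}$. The paper fixes $K=\frac{q^{m-1}+2q^{m-2}-1}{q+1}$ in advance and checks each $i=K-u$ individually: it observes that $(q+1)\mid (q+1)i$ while $(q+1)\nmid tq^{j}$ for $1\le t\le q-1$, and from this infers that the relevant coset leader exceeds $q-1\ge\delta$. You instead determine the exact maximum of the defining set, $\max(T)=(\delta-1)q^{m-2}$, from the two-digit $q$-adic form $(0,\ldots,0,c,c)_q$ of $(q+1)c$ together with Lemma \ref{lem1b21}, and then read off the run as $\{0,\ldots,n-\max(T)-1\}$ since $\min(T^{-1})=n-\max(T)$. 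Your route buys two things: a sharper, $\delta$-dependent bound $d^{\perp}(\delta)\ge n-(\delta-1)q^{m-2}+1$ (the stated bound is just its value at $\delta=q-1$), and a transparent identification of where the run stops, namely at $n-(q-2)q^{m-2}=\frac{q^{m-1}+2q^{m-2}-1}{q+1}$. It also supplies a detail the paper's write-up glosses over: what must be excluded is that the coset of $(q+1)(n-i)$ modulo $q^m-1$ meets $\{(q+1)cq^{j}:1\le c\le\delta-1\}$, whose digit strings are rotations of $(q-1,\ldots,q-1,q-1-c,q-1-c)_q$, rather than the single-digit numbers $tq^{j}$ that the paper's divisibility argument actually addresses; your computation of $\max(C_c)$ handles exactly this point. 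The only step to make explicit when you write it up is that among the rotations of $(0,\ldots,0,c,c)_q$ the competitor $(c,0,\ldots,0,c)_q$ is also led by the digit $c$ and must be compared with, and seen to be smaller than, $(c,c,0,\ldots,0)_q$; this is immediate from Lemma \ref{lem1b21}.
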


\begin{proof}
From the BCH bound, in order the obtain the desired result, we only need to show that $\{ 0,1,2,\ldots,\frac{q^{m-1}+2q^{m-2}-1}{q+1}-1 \} \subseteq T^{\bot}$.

 It is clear that $0 \in T^\bot$.
For every integer $i$ with $1 \le i \le \frac{q^{m-1}+2q^{m-2}-1}{q+1}-1$, we have $i=\frac{q^{m-1}+2q^{m-2}-1}{q+1}-u$, where $1 \le u \le \frac{q^{m-1}+2q^{m-2}-1}{q+1}-1$.
Since $q+1\,|\,q^{m-1}+2q^{m-2}-1$ and $q+1\,|\,q^m-1$, if there exist $0<t \leq q-1$ and $0\leq i \leq m-1$ such that
$$ q^{m-1}+2q^{m-2}-1-u(q+1) \equiv tq^i \pmod {q^m-1},$$
then
$q+1\,|\,tq^i$, which is impossible. Hence,
	$$\text{CL}\left(\frac{q^{m-1}+2q^{m-2}-1-u(q+1)}{q+1}\right) > q-1>\delta,$$
where $\text{CL}\left(\frac{q^{m-1}+2q^{m-2}-1-u(q+1)}{q+1}\right)$ denotes the coset leader of the $2$-cyclotomic coset modulo $n$ containing $\frac{q^{m-1}+2q^{m-2}-1-u(q+1)}{q+1}$.
Consequently, $\frac{q^{m-1}+2q^{m-2}-1-u(q+1)}{q+1} \not \in T$ and $\frac{q^{m-1}+2q^{m-2}-1}{q+1}-u \not \in T^{-1}$. This leads to $i=\frac{q^{m-1}+2q^{m-2}-1}{q+1}-u \in T^\bot$.
From the BCH bound, the desired conclusion then follows.
\end{proof}

Let $2\leq \delta'\leq\delta''\leq n$, it is clear that $\mathcal{C}_{(q,n,\delta'')}\subseteq \mathcal{C}_{(q,n,\delta')}$. Then we have $\mathcal{C}_{(q,n,\delta'')}^{\perp}\supseteq \mathcal{C}_{(q,n,\delta')}^{\perp}$, which implies that $d(\mathcal{C}_{(q,n,\delta'')}^{\perp})\leq d(\mathcal{C}_{(q,n,\delta')}^{\perp}).$
From Lemma \ref{lemma-break-point-(q^m-1)/(q-1)}, Lemma \ref{lemma-1002} and the BCH bound for cyclic codes, it is easy to get the minimum distance of the lower bound of $\mathcal C_{(q,n,\delta)}^\bot$.

\begin{theorem}\label{thm:1002}
Let $2\leq \delta \leq n$, $0\leq t\leq m-2$ and $d^{\perp}(\delta)$ be the minimum distance of $\mathcal C_{(q,n,\delta)}^\bot$. Let $q=2$, then we have $d^{\perp}(\delta)\geq\frac{2^{m-t}-1}{3}+1$ if $\frac{2^{t}-1}{3}< \delta \le \frac{2^{t+2}-1}{3}$. Let $q>2$ be a prime power, then we have
\[d^{\perp}(\delta)\geq\left\{ \begin{array}{lll}
           \frac{q^{m-1}+2q^{m-2}-1}{q+1}, & \, \,\, \text{if $2\leq \delta \leq q-1$}, \\
\frac{q^{m-t}-1}{q+1}+1, & \, \,\, \text{if $\frac{q^{t}-1}{q+1}< \delta \le \frac{q^{t+1}+2q^t-1}{q+1}$},\\
\frac{q^{m-t-2}-1}{q+1}+1\, (t\neq m-2), & \, \,\, \text{if $ \frac{q^{t+1}+2q^t-1}{q+1}< \delta \le \frac{q^{t+2}-1}{q+1}$},\\
2, & \, \,\, \text{if $\frac{q^{m-1}+2q^{m-2}-1}{q+1}< \delta \le \frac{q^m-1}{q+1}$}.\\
 \end{array}  \right.\]
\end{theorem}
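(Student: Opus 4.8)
The plan is to read every bound off the BCH bound applied to the dual defining set $T^{\bot}$: a run $\{0,1,\dots,I(\delta)-1\}\subseteq T^{\bot}$ of $I(\delta)$ consecutive integers forces $d^{\perp}(\delta)\ge I(\delta)+1$, so the entire theorem amounts to locating $I(\delta)$, or a lower estimate for it, in each $\delta$-range. This is exactly the information packaged in Lemmas \ref{lemma-1002} and \ref{lemma-break-point-(q^m-1)/(q-1)}. Throughout I would also use the monotonicity established just before the statement, namely that $d^{\perp}$ is non-increasing in $\delta$: if $\delta'\le\delta''$ then $d^{\perp}(\delta')\ge d^{\perp}(\delta'')$.

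First I would settle the two extreme ranges. For $q>2$ and $2\le\delta\le q-1$, the bound $d^{\perp}(\delta)\ge\frac{q^{m-1}+2q^{m-2}-1}{q+1}$ is literally the content of Lemma \ref{lemma-1002}, so nothing further is needed there. At the opposite end, for $\frac{q^{m-1}+2q^{m-2}-1}{q+1}<\delta\le n$, I would only note that $0\in T^{\bot}$ always holds; this single-element run gives $d^{\perp}(\delta)\ge 2$ by the BCH bound, and no stronger estimate is available because Lemma \ref{lemma-break-point-(q^m-1)/(q-1)} admits no even index $t$ beyond $t=m-2$.

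For the generic ranges I would let the even index $t$ run over $2\le t\le m-2$. On $\frac{q^{t}-1}{q+1}<\delta\le\frac{q^{t+1}+2q^{t}-1}{q+1}$, Lemma \ref{lemma-break-point-(q^m-1)/(q-1)} gives $I(\delta)=\frac{q^{m-t}-1}{q+1}$, and the BCH bound immediately yields $d^{\perp}(\delta)\ge\frac{q^{m-t}-1}{q+1}+1$; for $q=2$ the identity $\frac{q^{t+1}+2q^{t}-1}{q+1}=\frac{2^{t+2}-1}{3}$ shows this one family already covers $\frac{2^{t}-1}{3}<\delta\le\frac{2^{t+2}-1}{3}$ for every even $t$, which is precisely the stated binary bound and needs no further cases. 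For $q>2$ the complementary ``gap'' ranges $\frac{q^{t+1}+2q^{t}-1}{q+1}<\delta\le\frac{q^{t+2}-1}{q+1}$ (with $t\ne m-2$) are not covered by the lemma directly, and here I would invoke monotonicity: setting $\delta''=\frac{q^{t+2}-1}{q+1}+1$, which lies in the previous family for the even index $t+2\in[2,m-2]$, gives $d^{\perp}(\delta'')\ge\frac{q^{m-t-2}-1}{q+1}+1$, and since $\delta\le\frac{q^{t+2}-1}{q+1}<\delta''$ the non-increasing property delivers $d^{\perp}(\delta)\ge d^{\perp}(\delta'')\ge\frac{q^{m-t-2}-1}{q+1}+1$.

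The part that will need the most care, rather than routine computation, is the bookkeeping of the even index $t$ and the check that these ranges tile $[2,n]$ with neither gap nor harmful overlap: each gap range ends at $\frac{q^{t+2}-1}{q+1}$, exactly where the family range for $t+2$ begins, and the bottom gap range ($t=0$) coincides with $2\le\delta\le q-1$, on which Lemma \ref{lemma-1002} furnishes a strictly better bound than the generic $\frac{q^{m-2}-1}{q+1}+1$ — which is the whole reason that lemma is needed. The top family $t=m-2$ must be excluded from the gap case because the shift $t+2=m$ falls outside the admissible range of Lemma \ref{lemma-break-point-(q^m-1)/(q-1)}; this is what forces the separate, weaker conclusion $d^{\perp}(\delta)\ge 2$ on $\frac{q^{m-1}+2q^{m-2}-1}{q+1}<\delta\le n$. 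Confirming that $\delta''$ lands inside the $t+2$ family range at both endpoints and that $t+2$ remains even and $\le m-2$ is the one genuinely delicate verification; everything else is a direct appeal to the quoted lemmas and the BCH bound.
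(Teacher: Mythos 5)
Your proposal is correct and follows essentially the same route as the paper: the paper's own (one-line) proof is exactly the combination of Lemma \ref{lemma-1002}, Lemma \ref{lemma-break-point-(q^m-1)/(q-1)}, the monotonicity of $d^{\perp}$ in $\delta$ stated just before the theorem, and the BCH bound applied to the consecutive run $\{0,1,\ldots,I(\delta)-1\}\subseteq T^{\bot}$. Your write-up merely makes explicit the tiling of $[2,n]$ by the family and gap ranges (including the collapse of the two ranges into one when $q=2$ and the exclusion of $t=m-2$ from the gap case), all of which checks out.
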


It is very hard to determine the minimum distance of $\mathcal C_{(q,n,\delta)}^\bot$ in general. The following examples show that the lower bounds in Theorem \ref{thm:1002} are good in some cases.

\begin{example}
Let $\delta=2$, $q=3$ and $m=4$. In theorem \ref{thm:1002}, the minimum distance of the lower bound of $\mathcal C_{(3,20,2)}^\bot$ is $11$. By Magma, the true minimum distance of $\mathcal C_{(3,20,2)}^\bot$ is $12$.
\end{example}

\begin{example}
Let $\delta=2$, $q=2$ and $m=6$. In theorem \ref{thm:1002}, the minimum distance of the lower bound of $\mathcal C_{(2,21,2)}^\bot$ is $6$.  By Magma, the true minimum distance of $\mathcal C_{(3,21,2)}^\bot$ is $8$.
\end{example}

We now give the sufficient and necessary condition for $\mathcal{C}_{(q,n,\delta)}$ being a dually-BCH code. The cases $q=2$ and $q\neq 2$ will be treated separately.

\begin{theorem}\label{theorem2}
Let $n=\frac{q^m-1}{q+1}$, where $q = 2$ and $m \ge 4$ is even. Then $\mathcal C_{(q,n,\delta)}$ is a dually-BCH code if and only if
 $\delta_1+1 \le \delta \le n$, where $\delta_1$ is given in Lemma \ref{detal1221}.
\end{theorem}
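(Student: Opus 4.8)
The plan is to convert the statement into a purely combinatorial condition on the defining set $T^{\perp}$ and then verify that condition range by range in $\delta$. Since $\mathcal C_{(q,n,\delta)}$ is by construction a narrow-sense BCH code, being dually-BCH is equivalent to $\mathcal C^{\perp}_{(q,n,\delta)}$ being a BCH code, i.e. to $T^{\perp}$ being a union of $q$-cyclotomic cosets indexed by a circular interval of $\mathbb Z_n$. First I would locate that interval. Because $0\in T^{\perp}$ and $C_0=\{0\}$, the interval must contain $0$; and because $1\in T$ forces $n-1\in T^{-1}$, we have $n-1\notin T^{\perp}$, so the interval cannot wrap past $0$ and must take the form $[0,e]$. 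Recalling that $I(\delta)$ (from the break-point lemma computing $I(\delta)$ for $n=\frac{q^m-1}{q+1}$) is the largest integer with $\{0,1,\dots,I(\delta)-1\}\subseteq T^{\perp}$, this yields the criterion: $\mathcal C^{\perp}_{(q,n,\delta)}$ is a BCH code if and only if $\bigcup_{i=0}^{I(\delta)-1}C_i=T^{\perp}$, equivalently if and only if $T^{\perp}$ contains no coset leader $s\ge I(\delta)$. Since $\bigcup_{i=0}^{I(\delta)-1}C_i\subseteq T^{\perp}$ always holds, the whole theorem reduces to deciding, for each $\delta$, whether some coset leader $s\ge I(\delta)$ lies in $T^{\perp}$.

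For sufficiency, let $\delta_1+1\le\delta\le n$. As $\delta_1$ is the largest coset leader modulo $n$ (Lemma \ref{detal1221}), every nonzero residue has its leader in $\{1,\dots,\delta-1\}$, whence $T=\mathbb Z_n\setminus\{0\}$, $T^{-1}=\mathbb Z_n\setminus\{0\}$, and $T^{\perp}=\{0\}=C_0$. This is the defining set of the BCH code with $b=0$ and designed distance $2$, so $\mathcal C^{\perp}_{(q,n,\delta)}$ is a BCH code and $\mathcal C_{(q,n,\delta)}$ is dually-BCH.

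For necessity, suppose $2\le\delta\le\delta_1$; I would produce a coset leader $s\in T^{\perp}$ with $s\ge I(\delta)$, splitting the range at $\frac{2^{m-2}-1}{3}$. On the upper part $\frac{2^{m-2}-1}{3}<\delta\le\delta_1$, the break-point lemma gives $I(\delta)=1$ (the case $t=m-2$), while $\delta\le\delta_1$ forces $\delta_1\notin T$ and hence $T\subsetneq\mathbb Z_n\setminus\{0\}$, so $T^{\perp}\supsetneq\{0\}$ and any nonzero coset leader in $T^{\perp}$ is a witness. On the lower part $2\le\delta\le\frac{2^{m-2}-1}{3}$, I would feed in the technical lemmas: for the initial window $2\le\delta\le\frac{2^{m/2}+1}{3}$ (resp. $\frac{2^{m/2-1}+1}{3}$ when $m\equiv0\pmod 4$), Lemma \ref{lem:122501} gives $\delta_1\in T^{\perp}$, and $\delta_1>\frac{2^{m-2}-1}{3}\ge I(\delta)$ by monotonicity of $I(\delta)$, so the largest coset leader is itself the witness; for the window $7<\delta\le\frac{2^{m-2}-1}{3}$ with $m\ge8$, Lemma \ref{lem:0917} supplies a coset leader in $T^{\perp}$ exceeding $\frac{2^{m-4}-1}{3}\ge I(\delta)$. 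In every case the criterion of the first paragraph rules out $\mathcal C^{\perp}_{(q,n,\delta)}$ being a BCH code.

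The hard part is the lower part $2\le\delta\le\frac{2^{m-2}-1}{3}$, where $I(\delta)$ is itself large (equal to $\frac{2^{m-2}-1}{3}$ at $\delta=2$), so a small leader will not suffice and one must exhibit a genuinely large coset leader inside $T^{\perp}$; this is exactly the content of the $q$-adic digit computations in Lemmas \ref{lem:122501} and \ref{lem:0917}, carried out through the digit comparison of Lemma \ref{lem1b21}, and the bound $\frac{2^{m-4}-1}{3}$ of Lemma \ref{lem:0917} is tuned precisely to dominate $I(\delta)$ on its window. A secondary difficulty is to make the two windows overlap so that their union is all of $[2,\frac{2^{m-2}-1}{3}]$: for large $m$ the thresholds $\frac{2^{m/2}+1}{3}$ and $7$ interlock, but the small cases (notably $m=6$, which lies outside the scope of Lemma \ref{lem:0917}, and the few values of $\delta$ falling between the thresholds when $m=8$) must be cleared by a direct inspection of the relevant cyclotomic cosets.
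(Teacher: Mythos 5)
Your reduction to the combinatorial criterion (dually-BCH iff $T^{\perp}$ contains no coset leader $s\ge I(\delta)$), your sufficiency argument, and your use of Lemmas \ref{lem:122501}, \ref{lem:0917} and the break-point lemma to exhibit large coset leaders in $T^{\perp}$ are exactly the paper's strategy; your treatment of the range $\frac{2^{m-2}-1}{3}<\delta\le\delta_1$ via $I(\delta)=1$ is a slightly cleaner rephrasing of the paper's dimension-count argument there.

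The one place where you fall short of a complete proof is the middle range $3<\delta\le 7$. The paper does not rely on window overlap here: it handles this range uniformly for all even $m>4$ by exhibiting the explicit element
$\frac{2^{m-2}+2^{m-4}+\sum_{i=0}^{m-6}2^i}{3}$,
showing it is a coset leader modulo $n$ (via the digit pattern $(0,1,0,1,0,1,\dots,1)_2$ and Lemma \ref{lemma-1219}), that it lies in $T^{\perp}$ because $7$ is a coset leader so $7\cdot 2^{m-5}\notin T$, and that it exceeds $I(4)=\frac{2^{m-2}-1}{3}\ge I(\delta)$. You instead cover $3<\delta\le 7$ by the overlap of the Lemma \ref{lem:122501} window with the Lemma \ref{lem:0917} window, which works only for $m\ge 10$, and you defer $m=6$ (with $\delta\in\{4,5\}$) and $m=8$ (with $\delta\in\{4,\dots,7\}$) to ``direct inspection.'' Those are finitely many small cases ($n=21$ and $n=85$) and your identification of them is accurate, but as written the proposal does not carry out the check, whereas the paper's explicit witness closes this range for every $m$ at once. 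To make your version self-contained you should either perform those six verifications or import the paper's uniform construction for Case $3<\delta\le 7$.
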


\begin{proof}
We only prove the desired conclusion for the case $m\equiv 0 \pmod 4$, and omit the proof of
the case $m\equiv 2 \pmod 4$, which is similar.

By definition, we have $0 \notin T$ and $1 \in T $, then $0 \notin T^{-1}$ and  $n-1 \in T^{-1}$. Furthermore, we have $0 \in T^{\bot}$,
which means that $C_0$ must be the initial cyclotomic coset of $T^\bot$. In other words, there must be an integer $J \ge 1$
such that $T^\bot=C_0 \cup C_1 \cup \cdots \cup C_{J-1}$ if $\mathcal C_{(q,n,\delta)}$ is a dually-BCH code.
		
When $\delta_1+1 \le \delta \le n$,
it is easily seen that $T^\bot = \{0\}$ and $\mathcal C_{(q,n,\delta)}^\bot$ is a BCH code with respect to $\beta$ since $\delta_1$ is the largest coset leader modulo $n$.
		
It remains to prove the desired conclusion for $2 \le \delta \leq \delta_1$. If $m=4$, there is nothing to prove since $\delta_1=1$.
When $m>4$, we have the following three cases.

\noindent {\bf Case 1:} $2\leq \delta \leq 3$. From Lemma \ref{lem:122501} and $0 \in T^{\perp}$, we know that $T^{\perp}=C_0\bigcup C_1\bigcup\cdots \bigcup C_{\delta_1}=Z_n$ if $\mathcal{C}^{\perp}_{(2,n,\delta)}$ is a BCH code, which leads to $\mathcal{C}_{(2,n,\delta)}=\{\mathbf{0}\}$. It is obvious that $\mathcal{C}_{(2,n,\delta)}\neq\{\mathbf{0}\}$, which is a contradiction.

\noindent {\bf Case 2:} $3< \delta \leq 7$. It is clear that there exists $2\leq t\leq m-4$ such that $\frac{2^{t}-1}{3}< \delta \le \frac{2^{t+2}-1}{3}$ if $3< \delta \leq7$. It is easily seen that
$$2^{m-2}+2^{m-4}+ \sum_{i=0}^{m-6}2^i=(0,1,0,1,0, \underbrace{1,\ldots,1}_{m-5})_2$$
is a coset leader modulo $2^m-1$. From Lemma \ref{lemma-1219}, we know that $\frac{2^{m-2}+2^{m-4}+ \sum_{i=0}^{m-6}2^i}{3}$ is a coset leader modulo $\frac{2^m-1}{3}$. Obviously, $21=(0, \ldots, 0,1,0,1,0,1)_2$ is a coset leader modulo $2^m-1$, then $7$ is a  coset leader modulo $\frac{2^m-1}{3}$ form Lemma \ref{lemma-1219}.  Since $7\cdot2^{m-5}  \notin T$, we have
$$\frac{2^{m-2}+2^{m-4}+ \sum_{i=0}^{m-6}2^i}{3}=n-7\cdot2^{m-5} \notin T^{-1} \text{ and } \frac{2^{m-2}+2^{m-4}+ \sum_{i=0}^{m-6}2^i}{3} \in Z_n \backslash T^{-1}=T^{\perp}.$$
It is easy to check that
$$I_{\max}:=\max\left\{I(\delta): 3< \delta \leq \frac{2^{m-2}-1}{3}\right\}=I(4)=\frac{2^{m-2}-1}{3}<\frac{2^{m-2}+2^{m-4}+ \sum_{i=0}^{m-6}2^i}{3}.$$
It then follows that there is no integer $J \ge 1$ such that $T^\bot=C_0 \cup C_1 \cup \cdots \cup C_{J-1}$, i.e., $\mathcal C_{(q,n,\delta)}^\bot$ is not a BCH code with respect to $\beta$.

\noindent {\bf Case 3:} $7< \delta \leq \frac{2^{m-2}-1}{3}$. It is clear that $5=\frac{2^3+2^2+2+1}{3} \in T$, then
$$\frac{2^4(\sum_{i=0}^{m-5}2^i)}{3}=n-5 \in T^{-1} \text{ and } \frac{\sum_{i=0}^{m-5}2^i}{3} \in Z_n \backslash T^{-1}\notin T^{\perp}.$$
From Lemma \ref{lem:0917}, there is no integer $J \ge 1$ such that $T^\bot=C_0 \cup C_1 \cup \cdots \cup C_{J-1}$, i.e., $\mathcal C_{(q,n,\delta)}^\bot$ is not a BCH code with respect to $\beta$.

\noindent {\bf Case 4:} $\frac{2^{m-2}-1}{3}< \delta \le \delta_1$. Since $1\notin T^{\perp}$,  we have $\mathcal{C}^{\perp}_{(2,n,\delta)}=\{\mathbf{0}\}$ if $\mathcal{C}^{\perp}_{(2,n,\delta)}$ is a BCH code with respect to $\beta$. However, the dimension of $\mathcal{C}_{(2,n,\delta)}$ is ${\rm dim}(\mathcal{C}_{(2,n,\delta)})\leq n-|C_{\delta_1}|<n$, which is contradictory to ${\rm dim}(\mathcal{C}_{(2,n,\delta)})+{\rm dim}(\mathcal{C}^{\perp}_{(2,n,\delta)})=n$.

Combining all the cases above, the desired conclusion then follows.
\end{proof}

\begin{theorem}\label{theorem3}
Let $n=\frac{q^m-1}{q+1}$, where $q > 2$ is a prime power and $m \ge 4$ is even. Let $\delta_1$ be given in Lemma \ref{detal1221}. Then the following statements hold.
\begin{enumerate}
\item[{\rm (1)}] If $m=4$, then $\mathcal C_{(q,n,\delta)}$ is a dually-BCH code if and only if
		\begin{center}
			$\delta=2$, $\delta_1 \le \delta \le n$.
		\end{center}
\item[{\rm (2)}] If $m\neq4$, then $\mathcal C_{(q,n,\delta)}$ is a dually-BCH code if and only if
		\begin{center}
			$\delta_1+1 \le \delta \le n$.
		\end{center}
\end{enumerate}
	\end{theorem}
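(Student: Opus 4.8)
The plan is to recast the dually-BCH property as a purely combinatorial condition on $T^{\perp}$. Since $0\in T^{\perp}$ and $T^{\perp}$ is a union of $q$-cyclotomic cosets, $\mathcal C_{(q,n,\delta)}$ is a dually-BCH code if and only if $T^{\perp}=C_0\cup C_1\cup\cdots\cup C_{J-1}$ for some $J\ge 1$; equivalently, writing $I(\delta)$ for the smallest positive integer outside $T^{\perp}$ (so that $\{0,1,\ldots,I(\delta)-1\}\subseteq T^{\perp}$), the code is dually-BCH exactly when $T^{\perp}$ contains no coset leader that is $\ge I(\delta)$. This is the criterion already used in Theorem \ref{theorem2}, and the auxiliary results (Lemmas \ref{detal1221}, \ref{lem:10}, \ref{lem:122501}, \ref{lem:122201} and \ref{lemma-break-point-(q^m-1)/(q-1)}) are tailored to feed it. I would keep $m=4$ and $m\ne 4$ apart throughout, because the exceptional values $\delta=2$ and $\delta=\delta_1$ survive only through coincidences special to $m=4$.

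For sufficiency, when $\delta_1+1\le \delta\le n$ the largest-coset-leader property (Lemma \ref{detal1221}) gives $T=Z_n\setminus\{0\}$, hence $T^{-1}=Z_n\setminus\{0\}$ and $T^{\perp}=\{0\}=C_0$, which is trivially BCH. For $m=4$ two further values must be verified. At $\delta=\delta_1$ I would invoke Lemma \ref{lem:10} (second largest coset leader $=\delta_1-1$) to see that $T$ omits only $C_0$ and $C_{\delta_1}$, whence $T^{\perp}=\{0\}\cup C_{n-\delta_1}$; since $n-\delta_1=q^2$ and $C_{q^2}=C_1$, this equals $C_0\cup C_1$. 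At $\delta=2$ one has $T=C_1$ and $T^{-1}=C_{n-1}$, and the decisive computation is that the coset leader of $n-1$ equals $n-q^{m-2}$, which coincides with $\delta_1$ precisely when $m=4$; hence $T^{\perp}=Z_n\setminus C_{\delta_1}=C_0\cup\cdots\cup C_{\delta_1-1}$ is BCH. This identity is exactly what fails for $m\ne 4$ and explains the asymmetry between the two parts of the theorem.

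For necessity I would produce, for each $\delta$ in the complementary range, a coset leader in $T^{\perp}$ lying at or above $I(\delta)$, contradicting the BCH form, partitioning according to Lemmas \ref{lem:122501} and \ref{lem:122201}. For the smaller designed distances covered by Lemma \ref{lem:122501}, the largest leader $\delta_1$ itself lies in $T^{\perp}$; a BCH $T^{\perp}$ containing the largest leader must be all of $Z_n$, forcing $T=\emptyset$ and contradicting $C_1\subseteq T$. For the remaining $\delta$ with $2\le\delta\le\frac{(q-2)\sum_{i=0}^{m-1}q^i}{q+1}$, Lemma \ref{lem:122201}(2) puts $\frac{\sum_{i=0}^{m-1}q^i}{q+1}$ in $T^{\perp}$, and comparing this fixed leader against $I(\delta)$ from Lemma \ref{lemma-break-point-(q^m-1)/(q-1)} shows it sits strictly above the first gap. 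In the top range $\frac{(q-2)\sum_{i=0}^{m-1}q^i}{q+1}<\delta\le\delta_1$, Lemma \ref{lem:122201}(3) gives that $2$ is a coset leader with $2\notin T^{\perp}$, so a BCH $T^{\perp}$ must satisfy $T^{\perp}\subseteq C_0\cup C_1$, while $\delta\le\delta_1<n$ forces $T^{\perp}\supsetneq\{0\}$; hence $T^{\perp}=C_0\cup C_1$ and $T=Z_n\setminus(C_0\cup C_{n-1})$, which requires $C_{n-1}=C_{\delta_1}$, valid only for $m=4$. For $m=4$ and $3\le\delta\le\delta_1-1$ the same scheme applies with the second largest leader $\delta_1-1\in T^{\perp}$ from Lemma \ref{lem:122201}(1) for $\delta\le q^2+1$ (forcing $T=C_{n-\delta_1}=C_1$, impossible once $\delta\ge3$) and with Lemma \ref{lem:122201}(2) for $q^2+1<\delta\le\delta_1-1$.

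The step I expect to be hardest is the necessity bookkeeping: the witnesses $\delta_1$, $\frac{\sum_{i=0}^{m-1}q^i}{q+1}$ and $2$ are fixed while $I(\delta)$ decreases as $\delta$ grows, so one must verify that the regimes tile $[2,\delta_1]$ with no uncovered $\delta$ and that in each regime the chosen leader genuinely exceeds $I(\delta)$. The tightest comparisons occur at the boundary between the small-$\delta$ regime (where Lemma \ref{lemma-break-point-(q^m-1)/(q-1)} does not apply and one must fall back on Lemma \ref{lem:122501} or Lemma \ref{lemma-1002}) and the middle regime, and they are most delicate for $m=6$; these inequalities ultimately reduce to $q$-adic digit comparisons via Lemma \ref{lem1b21}. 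Checking that the $m=4$ coincidences, namely $C_{n-1}=C_{\delta_1}$, second largest leader $=\delta_1-1$, and $n-\delta_1=q^2$ with $C_{q^2}=C_1$, genuinely break down for $m\ne 4$ is the other point that must be argued directly rather than by analogy.
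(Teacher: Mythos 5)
Your proposal is correct and follows essentially the same route as the paper: the same sufficiency argument via the largest coset leader, the same partition of $2\le\delta\le\delta_1$ into regimes witnessed by $\delta_1$, by $\frac{\sum_{i=0}^{m-1}q^i}{q+1}$, and by $2\notin T^{\perp}$ (Lemmas \ref{lem:122501} and \ref{lem:122201}), and the same $m=4$ coincidences $n-\delta_1=q^2\in C_1$ and $\mathrm{CL}(n-1)=\delta_1$ that produce the exceptional values $\delta=2$ and $\delta=\delta_1$. The only cosmetic deviations are that the paper handles the $m=4$, $3\le\delta\le q+1$ subcase by a dimension count rather than your containment argument, and in the middle regime it exhibits $\frac{q^{m-2}-1}{q+1}\in T^{-1}$ directly instead of invoking $I(\delta)$.
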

	
	\begin{proof}
We only prove the conclusion of this lemma for the case that $m\equiv 0 \pmod 4$, and omit the proof of
the conclusion for $m\equiv 2 \pmod 4$, which is similar.

With an analysis similar as Theorem \ref{theorem2}, when $\delta_1+1 \le \delta \le n$,
 we know that $T^\bot = \{0\}$ and $\mathcal C_{(q,n,\delta)}^\bot$ is a BCH code with respect to $\beta$.
It remains to show that whether $\mathcal C_{(q,n,\delta)}^\bot$ is a BCH code with respect to $\beta$ for $2 \le \delta \leq \delta_1$. We have the following four cases.

\noindent {\bf Case 1:} $\delta=2$. The defining set of $\mathcal C_{(q,n,\delta)}$ with respect to $\beta$ is $T=C_1$. Since $q^{m-2} \in T$, we have
$$n-q^{m-2}=\frac{q^m-q^{m-1}-q^{m-2}-1}{q+1}\in T^{-1}.$$
Let $\delta'=\frac{q^m-q^{m-1}-q^{m-2}-1}{q+1}.$ From Lemma \ref{lemma:1222}, we know that $\delta'$ is a coset leader modulo $n$. Hence, we obtain $T^{-1}=C_{\delta'}$.

If $m=4$, then $\delta'=\delta_1$, i.e., $T^{-1}=C_{\delta_1}$. Hence, $T^{\perp}=Z_n\backslash T^{-1}=C_0\cup C_1\cup\cdots C_{\delta_1-1}.$
This means that $\mathcal{C}^{\perp}_{(2,n,\delta)}=\mathcal{C}_{(2,n,\delta_1+1,0)}$ is a BCH code with the designed distance $\delta_1+1$ with respect to $\beta$.

If $m\neq4$,  then $\delta'<\delta_1$.
 Since $\delta'$ and $\delta_1$ are not in the same coset, then $\delta_1 \notin T^{-1}$, i.e., $\delta_1 \in Z_n \setminus T^{-1}=T^{\perp}$. It then follows $T^{\perp}=C_0\cup C_1\cup\cdots \cup C_{\delta_1}=Z_n$ if $\mathcal{C}^{\perp}_{(2,n,\delta)}$ is a BCH code. This means that $\mathcal{C}_{(2,n,\delta)}=\{\mathbf{0}\}$ and leads to a contradiction.

\noindent {\bf Case 2:} $3\le \delta \le q+1$. If $m=4$,  from Lemma \ref{lem:122201} and $0 \in T^{\perp}$, we know that $$T^{\perp}\supseteq C_0 \cup C_1\cup\cdots \cup C_{\frac{(q-1)q^3-q^2-q-2}{q+1}}$$ if  $\mathcal{C}^{\perp}_{(2,n,\delta)}$ is a BCH code.
From Lemma \ref{lem:10}, we know that the dimension of $\mathcal{C}^{\perp}_{(2,n,\delta)}$ is
\begin{equation}\label{eq:dsdf}
{\rm dim}(\mathcal{C}^{\perp}_{(2,n,\delta)})\geq \frac{q^4-1}{q+1}-|C_{\delta_1}|-|C_0|= \frac{q^4-1}{q+1}-5
\end{equation}
since $|C_{\delta_1}|=4$.
It is easy to check that $|C_1|=|C_2|=4$, then ${\rm dim}(\mathcal{C}_{(2,n,\delta)})\geq |C_2|+|C_3|\geq 8$. From  (\ref{eq:dsdf}), we know that ${\rm dim}(\mathcal{C}_{(2,n,\delta)})=\frac{q^4-1}{q+1}-{\rm dim}(\mathcal{C}^{\perp}_{(q,n,\delta)})\leq 5$ and this leads to a contradiction.
Hence,  $\mathcal{C}^{\perp}_{(q,n,\delta)}$ is not a BCH code.

 If $m\neq4$, from Lemma \ref{lem:122501} and $0 \in T^{\perp}$, we know that $T^{\perp}=C_0\bigcup C_1\bigcup\cdots \bigcup C_{\delta_1}=Z_n$ if $\mathcal{C}^{\perp}_{(2,n,\delta)}$ is a BCH code, which leads to $\mathcal{C}_{(2,n,\delta)}=\{\mathbf{0}\}$. It is obvious that $\mathcal{C}_{(2,n,\delta)}\neq\{\mathbf{0}\}$, which is a contradiction.

\noindent {\bf Case 3:} $q+1< \delta \le \frac{(q-2)(\sum_{i=0}^{m-1}q^i)}{q+1}$.  Since $q-1<\delta$, we have $q-1\in T$.
Then
$$\frac{q^{m}-q^{m-2}}{q+1}=\frac{q^{m-2}(q^2-1)}{q+1}=q^{m-2}(q-1)\in T.$$
Hence,
 $\frac{q^{m-2}-1}{q+1}=n-\frac{(q^m-q^{m-2})}{q+1} \in T^{-1}$ and $\frac{q^{m-2}-1}{q+1} \not \in T^\bot=\Bbb Z_n \setminus T^{-1}$.
From Lemma \ref{lem:122201}, we know that $\frac{\sum_{i=0}^{m-1}q^i}{q+1}\in T^\bot$
			is the coset leader modulo $n$.
It is clear that $\frac{q^{m-2}-1}{q+1} <\frac{\sum_{i=0}^{m-1}q^i}{q+1}$.		
 It then follows
			that there is no integer $J \ge 1$ such that $T^\bot=C_0 \cup C_1 \cup \cdots \cup C_{J-1}$, i.e., $\mathcal C_{(q,n,\delta)}^\bot$ is not a BCH code with respect to $\beta$.

\noindent {\bf Case 4:} $\frac{(q-2)(\sum_{i=0}^{m-1}q^i)}{q+1} < \delta \leq\delta_1$. It is clear that $\delta_1\notin T$, which implies that $n-\delta_1\notin T^{-1}$, i.e., $n-\delta_1\in Z_n\setminus T^{-1}=T^{\perp}$.
Obviously, we have
$$n-\delta_1=\frac{q^{m-1}+q^{\frac{m}{2}}}{q+1}=\frac{q^{\frac{m}{2}}(q^{\frac{m}{2}-1}+1)}{q+1}\in C_{\frac{q^{\frac{m}{2}-1}+1}{q+1}}$$
since $\frac{q^{\frac{m}{2}-1}+1}{q+1}$ is a coset leader modulo $n$ from  Lemma \ref{lemma:1222}.
From Lemma \ref{lem:122201}, we know that $2\notin\mathcal C_{(q,n,\delta)}^\bot.$ Then $T^\bot=C_0 \cup C_1 $ if $\mathcal C_{(q,n,\delta)}^\bot$ is a BCH code.
If $m\neq 4$, it is obvious that $C_{\frac{q^{\frac{m}{2}-1}+1}{q+1}} \not\subseteq C_0 \cup C_1$. Hence, $\mathcal C_{(q,n,\delta)}^\bot$ is not a BCH code.

If $m=4$, we obtain that $\delta=\delta_1$. It is clear that  $T=C_1 \cup C_2\cdots \cup C_{\delta_1-1}$ and $T^{\perp}=C_0 \cup C_1$. This means that $\mathcal{C}^{\perp}_{(2,n,\delta)}=\mathcal{C}_{(2,n,2,0)}$ is a BCH code with the designed distance $2$ with respect to $\beta$.

Combining all the cases above, the desired conclusion then follows.
\end{proof}

\section{BCH codes of length $n=\frac{q^m-1}{q-1}$ and its dual}\label{sec-minus}

Throughout this section, we always assume that $n=\frac{q^m-1}{q-1}$, where $q \ge 3$ is a prime power and $m \ge 4$ is an integer. In accordance with the notation specified in Section II, we first consider the parameters of the BCH code $\mathcal{C}_{(q,n,\delta_1)}$, and then show a sufficient and necessary condition for $\mathcal{C}_{(q,n,\delta)}$ being a dually-BCH code, where $2\leq \delta \leq n$ and $\delta_1$ is the largest coset leader modulo $n$.
It is clear that the defining set of $\mathcal C_{(q,n,\delta)}$ with respect to $\beta$
	is $T=C_1 \cup C_2 \cup \cdots \cup C_{\delta-1}$.
	As before, denote by $T^{\bot}$ the defining set of the dual code $\mathcal C_{(q,n,\delta)}^\bot$ with respect to $\beta$.
	Then $T^{\bot}=\Bbb Z_n \setminus T^{-1}$ and $0 \in T^\bot$.

\begin{lemma}\label{round up}
Let $q\geq 3$ be a prime power and $m\geq 4$  an integer. Let $q-1=mt_1+t_2$, where $t_1\geq 0$ and $m>t_2\geq 0$. Assume that $\Upsilon=\{\lceil \frac{m\gamma}{t_2}-1\rceil,\,\,\gamma=1,2,\cdots,t_2\}$ if $t_2\neq0$. Let $$\sum_{t=1}^{q-1}q^{\lceil\frac{mt}{q-1}-1\rceil}=a_{m-1}q^{m-1}+a_{m-2}q^{m-2}+\cdots+a_1q+a_0.$$  If $t_2=0$, then $a_i=\frac{q-1}{m}$ for all $i\in[0,m-1]$. If $t_2\neq0$,  then $a_i=\lceil \frac{q-1}{m}\rceil$ if $i\in \Upsilon$, $a_i=\lfloor\frac{q-1}{m}\rfloor$ if $i\in[0,m-1]\setminus \Upsilon$ and $\sum_{i=0}^{m-1}a_i=q-1$.
\end{lemma}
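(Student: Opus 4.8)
The plan is to reinterpret each coefficient $a_i$ as a counting function and then evaluate it with floor functions. First I would simplify the exponent: since $1$ is an integer, $\lceil \frac{mt}{q-1}-1\rceil = \lceil \frac{mt}{q-1}\rceil - 1$, and as $t$ runs over $1,\dots,q-1$ this exponent takes values in $[0,m-1]$, attaining $m-1$ at $t=q-1$, so the sum indeed has the stated shape. Grouping equal exponents, $a_i$ is exactly the number of $t\in\{1,\dots,q-1\}$ with $\lceil \frac{mt}{q-1}\rceil-1=i$, i.e. with $\frac{(q-1)i}{m}<t\le \frac{(q-1)(i+1)}{m}$. Counting integers in a half-open interval gives the closed form
\[
a_i=\left\lfloor \frac{(q-1)(i+1)}{m}\right\rfloor-\left\lfloor \frac{(q-1)i}{m}\right\rfloor .
\]
Telescoping over $i=0,\dots,m-1$ immediately yields $\sum_{i=0}^{m-1}a_i=\lfloor \frac{(q-1)m}{m}\rfloor-\lfloor 0\rfloor=q-1$, which settles the last assertion; moreover each $a_i\le \lceil\frac{q-1}{m}\rceil\le q-1$, so this is the genuine $q$-ary digit expansion and the $a_i$ are unambiguous.

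Next I would substitute $q-1=mt_1+t_2$. Since $\lfloor \frac{(q-1)j}{m}\rfloor = t_1 j+\lfloor \frac{t_2 j}{m}\rfloor$, the formula collapses to
\[
a_i=t_1+\left(\left\lfloor \frac{t_2(i+1)}{m}\right\rfloor-\left\lfloor \frac{t_2 i}{m}\right\rfloor\right).
\]
When $t_2=0$ the bracketed term vanishes and $a_i=t_1=\frac{q-1}{m}$ for every $i$, which is the first case. When $t_2\ne0$, the increment $\frac{t_2}{m}<1$ forces the bracket to be $0$ or $1$, and it equals $1$ precisely when the interval $(\frac{t_2 i}{m},\frac{t_2(i+1)}{m}]$ contains an integer. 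Since here $t_1+1=\lceil\frac{q-1}{m}\rceil$ and $t_1=\lfloor\frac{q-1}{m}\rfloor$, it remains only to identify the set of these ``jump'' indices with $\Upsilon$.

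The main work — and the step I expect to be the most delicate — is this identification. For a fixed integer $\gamma$, the condition $\frac{t_2 i}{m}<\gamma\le\frac{t_2(i+1)}{m}$ rearranges to $\frac{\gamma m}{t_2}-1\le i<\frac{\gamma m}{t_2}$, an interval of length $1$ pinning down the unique index $i=\lceil \frac{\gamma m}{t_2}-1\rceil$; here I would argue the two subcases (whether or not $\frac{\gamma m}{t_2}$ is an integer) to confirm the ceiling formula in each. As $\gamma$ ranges over $1,\dots,t_2$ these indices all lie in $[0,m-1]$, the value $m-1$ occurring at $\gamma=t_2$, and they are pairwise distinct because $(\frac{t_2 i}{m},\frac{t_2(i+1)}{m}]$ has length $\frac{t_2}{m}<1$ and hence holds at most one integer, so no two $\gamma$ can share an index. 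This shows $\gamma\mapsto\lceil\frac{\gamma m}{t_2}-1\rceil$ is a bijection onto the jump set, whence that set is exactly $\Upsilon$ with $|\Upsilon|=t_2$. Consequently $a_i=\lceil\frac{q-1}{m}\rceil$ for $i\in\Upsilon$ and $a_i=\lfloor\frac{q-1}{m}\rfloor$ otherwise, completing the proof.
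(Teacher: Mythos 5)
Your proof is correct, and it takes a noticeably cleaner route than the paper's. Both arguments ultimately rest on the same observation --- that $a_i$ counts the number of $t\in\{1,\dots,q-1\}$ with $\lceil \frac{mt}{q-1}\rceil-1=i$ --- but you then get the closed form $a_i=\lfloor \frac{(q-1)(i+1)}{m}\rfloor-\lfloor \frac{(q-1)i}{m}\rfloor$ from the standard integer-count of a half-open interval, which gives $\sum_i a_i=q-1$ by telescoping and reduces everything to deciding when the increment $\lfloor \frac{t_2(i+1)}{m}\rfloor-\lfloor \frac{t_2 i}{m}\rfloor$ equals $1$. The paper instead fixes $i$, writes $it_2=mu_i+v_i$, and splits the range of $t$ into four subintervals ($t\le it_1+u_i$, the next block of length $t_1$, the single value $t=it_1+u_i+t_1+1$, and the rest), bounding $\lceil\frac{mt}{q-1}-1\rceil$ in each to read off the count; it also only writes out the case $q-1\ge m$ and asserts the other case is similar. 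Your version avoids that case split entirely and is uniform in $q$ and $m$; the one step you flag as delicate --- that for each $\gamma$ the interval $[\frac{\gamma m}{t_2}-1,\frac{\gamma m}{t_2})$ contains the single integer $\lceil\frac{\gamma m}{t_2}-1\rceil$ whether or not $\frac{\gamma m}{t_2}$ is integral, and that distinct $\gamma$ give distinct indices because each jump interval has length $\frac{t_2}{m}<1$ --- is handled correctly, and it is the exact analogue of the paper's identification $a_i=\lceil\frac{q-1}{m}\rceil\iff i=\lceil\frac{m(u_i+1)}{t_2}-1\rceil$. The only small point worth keeping explicit is the check that $a_i\le q-1$ so that the stated expansion really is the $q$-adic one, which you do include.
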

\begin{proof}
We only prove the lemma for the case that $q-1\geq m$. The case $q-1<m$ can be shown similarly and we omit the details.

If $t_2=0$, it is clear that $a_0=a_1=\cdots=a_{m-1}=\frac{q-1}{m}$. The desired conclusion then follows. If $t_2\neq 0$,
let $it_2=mu_i+v_i$, where $i\in[0,m-1]$, $0\leq u_i\leq t_2-1$ and $0\leq v_i<m$. To determine the value of $a_i$ for $i\in [0,m-1]$, we need to consider the possible values of $\lceil\frac{mt}{q-1}\rceil$ for $t \in [1,q-1]$. There are four cases.

\noindent{\bf Case 1:} $t \in \left[1,it_1+u_i\right]$. It is clear that
\begin{equation*}
\begin{split}
\left\lceil\frac{mt}{q-1}-1\right\rceil&\leq \left\lceil\frac{m(it_1+u_i)}{q-1}-1\right\rceil.\\
\end{split}
\end{equation*}
Since $mt_1=q-1-t_2$ and $mu_i=it_2-v_i$, we obtain
\begin{equation*}
\begin{split}
\left\lceil i-1+\frac{mu_i-it_2}{q-1}\right\rceil=\left\lceil i-1-\frac{v_i}{q-1}\right\rceil= i-1.
\end{split}
\end{equation*}
In this case, we have $\left\lceil\frac{mt}{q-1}-1\right\rceil\leq i-1$.

\noindent{\bf Case 2:} $t \in \left[it_1+u_i+1, (i+1)t_1+u_i\right]$.
It is clear that $t$ can be expressed as $t=it_1+u_i+ g$, where $1\leq  g < t_1$.
Since $mt_1=q-1-t_2$ and $mu_i=it_2-v_i$, we have
\begin{equation*}
\begin{split}
\left\lceil\frac{mt}{q-1}-1\right\rceil=\left\lceil i-1+\frac{mu_i+mg-it_2}{q-1}\right\rceil=\left\lceil i-1+\frac{mg-v_i}{q-1}\right\rceil.
\end{split}
\end{equation*}
It is clear that $0<mg-v_i<q-1$, then $\lceil\frac{mt}{q-1}-1\rceil=i$.

\noindent{\bf Case 3:} $t=it_1+u_i+t_1+1$. From $mt_1=q-1-t_2$ and $mu_i=it_2-v_i$, we have
$$\frac{mt}{q-1}=i+\frac{t_1m+m-v_i}{q-1},$$
Then $\lceil \frac{mt}{q-1}-1\rceil=i$ if $m-v_i\leq t_2$, and $\lceil \frac{mt}{q-1}-1\rceil=i+1$ if $m-v_i> t_2$.

\noindent{\bf Case 4:} $t \in \left[it_1+u+t_1+2,q-1\right]$. Similar as above,  it is easy to get that
\begin{equation*}
\begin{split}
\left\lceil\frac{mt}{q-1}-1\right\rceil\geq i+1.
\end{split}
\end{equation*}
From above four cases, we know that
$$\left\{\begin{array}{ll}
           \left\lceil\frac{mt}{q-1}-1\right\rceil\leq i-1, &\text{if}\,\,t\in [1,it_1+u_i],\\
            \left\lceil\frac{mt}{q-1}-1\right\rceil= i, &\text{if}\,\,t \in \left[it_1+u_i+1, (i+1)t_1+u_i\right], or\,\, t=it_1+u_i+t_1+1\,\, and \,\,m-v_i\leq t_2,\\
            \left\lceil\frac{mt}{q-1}-1\right\rceil= i+1, &\text{if}\,\,t=it_1+u_i+t_1+1\,\, and \,\,m-v_i> t_2,\\
            \left\lceil\frac{mt}{q-1}-1\right\rceil\geq i+1, &\text{if}\,\,t \in  \left[it_1+u_i+t_1+2,q-1\right].
         \end{array}
\right.
$$
When $i$ runs over $[0,m-1]$, note that $t_1=\lfloor\frac{q-1}{m}\rfloor$ and $t_1+1=\lceil\frac{q-1}{m}\rceil$, it is easy to get that
$$a_{i}=\left\{\begin{array}{cc}
           \lceil \frac{q-1}{m}\rceil, &if\,\,0<m-v_i\leq t_2,\\
           \lfloor \frac{q-1}{m}\rfloor, & if\,\,t_2 < m-v_i
         \end{array}
\right.
$$
since the number of $t$ in the range $\left[it_1+u_i+1, (i+1)t_1+u_i\right]$ is $\lfloor \frac{q-1}{m}\rfloor$.
Then $a_{i}= \lceil \frac{q-1}{m}\rceil$ if and only if $0<m-v_i\leq t_2$. Since $v_i=it_2-mu_i$, we have $a_{i}= \lceil \frac{q-1}{m}\rceil$ if and only if
$$\frac{m(u_i+1)}{t_2}-1\leq i<\frac{m(u_i+1)}{t_2},$$
which implies that
$i=\lceil\frac{m(u_i+1)}{t_2}-1\rceil$. This means that $a_{i}= \lceil \frac{q-1}{m}\rceil$ if and only if $i\in\Upsilon$. The desired conclusion then follows.
\end{proof}

\begin{lemma}\label{morethanzeroproof}
Let $t_2\neq 0$ and $N_{\gamma}^{\xi}=\lceil\frac{m\gamma}{t_2}-1\rceil-\lceil\frac{m(\gamma-\xi)}{t_2}-1\rceil$, where $1\leq\xi\leq t_2$ and $1\leq\gamma\leq t_2$. Then the following statements hold.
\begin{enumerate}
\item[1.] If $ t_2\,|\,m$, then $N_i^{\xi}=N_j^{\xi}$, where $1\leq i,j \leq t_2$.
\item[2.] If $ t_2\,\nmid\,m$, then $N_{\gamma}^{\xi}-N_{t_2}^{\xi}=0$ or $1$. Moreover, there exists $1\leq \xi_0\leq t_2$ such that
   $N_{\gamma}^{\xi_0}= N_{t_2}^{\xi_0}.$
\end{enumerate}
\end{lemma}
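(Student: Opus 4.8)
The plan is to first recast $N_\gamma^\xi$ in a cleaner form and then analyse the single auxiliary function $f(\gamma) := \lceil \frac{m\gamma}{t_2}\rceil$. Using the elementary identity $\lceil x-1\rceil = \lceil x\rceil - 1$, the two $-1$ terms cancel, so that
\[
N_\gamma^\xi = \left\lceil \tfrac{m\gamma}{t_2}\right\rceil - \left\lceil \tfrac{m(\gamma-\xi)}{t_2}\right\rceil = f(\gamma) - f(\gamma-\xi).
\]
This reduction serves as the common starting point for both parts, and the whole statement becomes a question about how $f$ behaves under shifts of its argument by $\xi$.

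For part~1, when $t_2\mid m$ I set $k = m/t_2 \in \Z$. Then $\frac{m\gamma}{t_2} = k\gamma$ is an integer for every $\gamma$, so $f$ is linear, $f(\gamma)=k\gamma$, and $N_\gamma^\xi = k\gamma - k(\gamma-\xi) = k\xi$, which is independent of $\gamma$. Hence $N_i^\xi = N_j^\xi$ for all $1\le i,j\le t_2$, as claimed. This case is routine.

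For part~2 (the case $t_2\nmid m$), I would first record the general two-value estimate: for arbitrary reals, $\lceil\alpha\rceil - \lceil\alpha-\beta\rceil$ lies strictly between $\beta-1$ and $\beta+1$ (the lower bound uses $\lceil\alpha-\beta\rceil < \alpha-\beta+1$, the upper bound uses $\lceil\alpha\rceil < \alpha+1$), and therefore must equal $\lfloor\beta\rfloor$ or $\lceil\beta\rceil$. Applying this with $\alpha = \frac{m\gamma}{t_2}$ and $\beta=\frac{m\xi}{t_2}$ gives $N_\gamma^\xi \in \{\lfloor \frac{m\xi}{t_2}\rfloor,\ \lceil\frac{m\xi}{t_2}\rceil\}$ for every $\gamma$. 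Next I would evaluate the benchmark exactly: since $f(t_2)=m$ and $f(t_2-\xi) = \lceil m - \frac{m\xi}{t_2}\rceil = m - \lfloor\frac{m\xi}{t_2}\rfloor$, one obtains $N_{t_2}^\xi = \lfloor\frac{m\xi}{t_2}\rfloor$, i.e.\ the \emph{smaller} of the two admissible values. Combining these two facts yields
\[
N_\gamma^\xi - N_{t_2}^\xi \in \left\{0,\ \left\lceil\tfrac{m\xi}{t_2}\right\rceil - \left\lfloor\tfrac{m\xi}{t_2}\right\rfloor\right\} \subseteq \{0,1\},
\]
which is the first assertion. For the ``moreover'' part, taking $\xi_0 = t_2$ gives $f(\gamma - t_2) = f(\gamma) - m$, so $N_\gamma^{t_2} = m = N_{t_2}^{t_2}$ for every $\gamma$; thus the value $0$ is attained. (More generally, equality holds for any $\xi_0$ divisible by $t_2/\gcd(m,t_2)$, since then $t_2 \mid m\xi_0$ and the floor and ceiling coincide.)

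The cancellations and the linear computation in part~1 are the easy portions; the only step requiring care is the two-value estimate for $\lceil\alpha\rceil - \lceil\alpha-\beta\rceil$ together with the exact evaluation $N_{t_2}^\xi = \lfloor\frac{m\xi}{t_2}\rfloor$, since the entire $\{0,1\}$ dichotomy hinges on $N_{t_2}^\xi$ being precisely the smaller admissible value. I expect this identification of the benchmark to be the main, albeit modest, obstacle.
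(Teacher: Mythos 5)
Your proof is correct and follows essentially the same route as the paper's: sandwich $N_\gamma^\xi$ strictly between $\frac{m\xi}{t_2}-1$ and $\frac{m\xi}{t_2}+1$, pin down $N_{t_2}^\xi$ as the lower of the two admissible integer values, and take $\xi_0=t_2$ for the ``moreover'' claim. Your exact evaluation $N_{t_2}^\xi=\lfloor\frac{m\xi}{t_2}\rfloor$ is in fact slightly cleaner than the paper's strict inequality $N_{t_2}^\xi<\frac{m\xi}{t_2}$, which as written is imprecise when $t_2\mid m\xi$, but this is a refinement of the same argument rather than a different one.
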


\begin{proof}
If $ t_2\,|\,m$, the desired conclusion follows from the definition of $N_{\gamma}^{\xi}$, directly. Next, we give the proof for the case $ t_2\,\nmid\,m$.
By the definition of $N_{\gamma}^{\xi}$, it is easy to get that
$$ \frac{m\xi_0}{t_2}-1= \frac{m\gamma}{t_2}-1-\left(\frac{m(\gamma-\xi)}{t_2}\right)<N_{\gamma}^{\xi}<\frac{m\gamma}{t_2}-\left(\frac{m(\gamma-\xi)}{t_2}-1\right)=\frac{m\xi_0}{t_2}+1.$$
Similarly, we have
$$\frac{m\xi}{t_2}-1<N_{t_2}^{\xi}<\frac{m\xi}{t_2}.$$
Then $N_{\gamma}^{\xi}-N_{t_2}^{\xi}=0$ or $1$. When $\xi=t_2$ we have $N_{\gamma}^{\xi}=N_{t_2}^{\xi_0}=m$, then there must exist $\xi_0\in[1,t_2]$ such that
   $N_{\gamma}^{\xi_0}= N_{t_2}^{\xi_0}.$ 	The proof is then completed.
\end{proof}

\begin{lemma}\label{detal}
Let $q\geq 3$ be a prime power and $m\geq 4$ be an integer. Then
$\theta=q^{m-1}-1-\frac{(\sum_{t=1}^{q-2}q^{\lceil\frac{mt}{q-1}-1\rceil}-q+2)}{q-1}$
is a coset leader modulo $n$.
\end{lemma}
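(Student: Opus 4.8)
The plan is to reduce the statement to a coset-leader claim modulo $q^m-1$, where the digit machinery of Lemma \ref{round up} and the ordering of Lemma \ref{lem1b21} apply directly. First I would record the analogue of Lemma \ref{lemma-1219} obtained by replacing $q+1$ with $q-1$ throughout (its proof is verbatim the same): when $(q-1)\mid h$, the integer $h$ is a coset leader modulo $q^m-1$ if and only if $\frac{h}{q-1}$ is a coset leader modulo $n=\frac{q^m-1}{q-1}$. Thus it suffices to prove that $(q-1)\theta$ is a coset leader modulo $q^m-1$.

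Next I would compute $(q-1)\theta$ in closed form. Since $\lceil\frac{m(q-1)}{q-1}-1\rceil=m-1$, the missing $t=q-1$ term equals $q^{m-1}$, so
\begin{equation*}
(q-1)\theta=(q-1)q^{m-1}-1-\sum_{t=1}^{q-2}q^{\lceil\frac{mt}{q-1}-1\rceil}=q^m-1-\sum_{t=1}^{q-1}q^{\lceil\frac{mt}{q-1}-1\rceil}.
\end{equation*}
Applying Lemma \ref{round up} to the last sum and writing $\sum_{t=1}^{q-1}q^{\lceil\frac{mt}{q-1}-1\rceil}=\sum_{i=0}^{m-1}a_iq^i$ gives
\begin{equation*}
(q-1)\theta=\sum_{i=0}^{m-1}(q-1-a_i)q^i,
\end{equation*}
so the $q$-adic digit of $(q-1)\theta$ at position $i$ is $q-1-a_i$. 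By Lemma \ref{round up} each $a_i\in\{\lfloor\frac{q-1}{m}\rfloor,\lceil\frac{q-1}{m}\rceil\}$, so every digit of $(q-1)\theta$ is one of the two consecutive values $L:=q-1-\lceil\frac{q-1}{m}\rceil$ and $H:=L+1$, the smaller value $L$ occurring exactly at the positions $i\in\Upsilon$. When $t_2=0$ all digits are equal, the word is constant, its cyclotomic coset reduces to a single point, and the claim is immediate; I would dispose of this case first and assume $t_2\neq 0$ afterward.

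The remaining and most delicate step is to show that this two-valued word $(q-1-a_{m-1},\dots,q-1-a_0)_q$ is a coset leader, i.e., by Lemma \ref{lem1b21} that it is lexicographically smallest (read from the most significant digit) among all its cyclic rotations, since multiplication by $q$ modulo $q^m-1$ cyclically rotates the digit string. This is exactly where the balancedness recorded in Lemma \ref{morethanzeroproof} enters: the low positions $\Upsilon=\{\lceil\frac{m\gamma}{t_2}-1\rceil:\gamma=1,\dots,t_2\}$ are spread as evenly as possible, the counts $N_\gamma^\xi$ of low positions in a length-$\xi$ window differing from $N_{t_2}^\xi$ by at most $1$. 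I would use this to compare the word with an arbitrary rotation: locate the first digit at which they differ and show, via the gap estimates of Lemma \ref{morethanzeroproof}, that the original word cannot carry the larger digit $H$ there while the rotation carries $L$. Since $\gamma=t_2$ forces $m-1\in\Upsilon$, the most significant digit of $(q-1)\theta$ is already the minimal value $L$, which anchors the comparison, and the balance condition guarantees that every prefix of the original word contains at least as many low digits as the corresponding prefix of any rotation, which is precisely the Sturmian/Christoffel minimality property forcing lexicographic minimality.

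Finally, having established that $(q-1)\theta$ is a coset leader modulo $q^m-1$, the $(q-1)$-analogue of Lemma \ref{lemma-1219} immediately yields that $\theta$ is a coset leader modulo $n$. The main obstacle is the rotation comparison in the third step: translating the arithmetic balance of Lemma \ref{morethanzeroproof} into a clean prefix-count inequality and carefully handling the ties that arise when the word is periodic, so that several rotations coincide with or shadow the original. Everything else is a direct computation or a citation of Lemma \ref{round up}.
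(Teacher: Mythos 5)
Your proposal is correct and follows the same skeleton as the paper's proof: the paper likewise reduces the claim to showing that $(q-1)\theta$ is a coset leader modulo $q^m-1$ (via the equivalence $\theta q^i \bmod n \ge \theta \Leftrightarrow (q-1)\theta q^i \bmod (q^m-1) \ge (q-1)\theta$, which is exactly your $(q-1)$-analogue of Lemma \ref{lemma-1219}), rewrites $(q-1)\theta = q^m-1-\sum_{i}a_iq^i$ so that its $q$-adic digits are $q-1-a_i$, and then exploits the even spacing of $\Upsilon$ coming from Lemma \ref{round up}. Where you diverge is in the last, delicate step. The paper compares $(q-1)\theta$ with each rotation $(q-1)\theta q^i \bmod (q^m-1)$ directly, splitting into three cases ($i\notin\Upsilon$, where the leading digit already decides; $i\in\Upsilon$ with $t_2\mid m$, where the word is periodic and the rotation is equal; and $i\in\Upsilon$ with $t_2\nmid m$, where a blockwise comparison of runs of low and high digits is carried out). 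You instead observe that the word is two-valued with the smaller digit at the "mechanical" positions and invoke the prefix-count domination characteristic of Christoffel words: since the first point of disagreement between the word and a rotation must be a position where their cumulative counts of low digits separate, domination of every prefix count forces lexicographic minimality, and the domination itself is the balancedness encoded in Lemma \ref{morethanzeroproof} (equivalently, the window-count bound $\lceil kt_2/m\rceil$). This repackaging is sound -- the anchor $m-1\in\Upsilon$ gives the leading low digit, and the count in the prefix window starting at the top is indeed the maximal possible over all windows of the same length -- and it buys a cleaner, case-free finish at the cost of having to translate the gap estimates of Lemma \ref{morethanzeroproof} into the prefix-count form, which you correctly flag as the remaining work. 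Neither route is more general; they are two dual formulations of the same balancedness argument.
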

\begin{proof}
Note that
	\begin{equation}\label{eq:deta418}
		\theta q^i \pmod{ n} \geq\theta \ \ \Leftrightarrow \ \ \theta (q-1)q^i \pmod{q^m-1} \geq\theta(q-1)
	\end{equation}
for any $1\leq i\leq m-1$. Below we will prove that
$$\theta(q-1)q^i \pmod{q^m-1} \geq\theta(q-1)$$
holds for any $1\leq i\leq m-1$.
It is clear that
 $$\theta=q^{m-1}-1-\frac{(\sum_{t=1}^{q-2}q^{\lceil\frac{mt}{q-1}-1\rceil}-q+2)}{q-1}=
\frac{q^m-\sum_{t=1}^{q-1}q^{\lceil\frac{mt}{q-1}-1\rceil}-1}{q-1}.$$
Then by Lemma \ref{round up} we have
\begin{equation}\label{eq:deta418-01}
\theta(q-1)=q^{m}-a_{m-1}q^{m-1}-a_{m-2}q^{m-2}-\cdots-a_{1}q-a_{0}-1
\end{equation}
and
\begin{equation}\label{eq:deta418-02}
\begin{split}
\theta(q-1)q^{i} \pmod {q^m-1}= & q^m-a_{m-i-1}q^{m-1}-\cdots-a_1q^{i+1}-a_0q^i-\\
&a_{m-1}q^{i-1}-\cdots-a_{m+1-i}q-a_{m-i}-1.
\end{split}
\end{equation}

For the sake of narrative, we denote $\theta'=(q-1)\theta$. If $m|(q-1)$,  from Lemma \ref{round up} we know that $a_{i}=\frac{q-1}{m}$ for all $i\in[0,m-1]$, then $\theta'q^{i}$ (mod $q^m-1$) $=\theta'$. From (\ref{eq:deta418}), (\ref{eq:deta418-01}) and  (\ref{eq:deta418-02}), we see that $\theta$ is a coset leader modulo $n$.
If $m\nmid (q-1)$, we have the following three cases.

\noindent{\bf Case 1:} $i\in[1,m-1]\setminus \Upsilon$. From Lemma \ref{round up}, we know that $a_{i}=\lfloor\frac{q-1}{m}\rfloor$. Then $\theta' q^{m-i-1} \pmod {q^m-1} >\theta'$ since $a_{m-1}=\lceil\frac{q-1}{m}\rceil=\lfloor\frac{q-1}{m}\rfloor+1$.

\noindent{\bf Case 2:} $i\in\Upsilon$ and $t_2|m$. Put $1\leq h \leq m$. From Lemma \ref{round up}, we know that
$$a_{h-1}=\left\{\begin{array}{cc}
           \lceil \frac{q-1}{m}\rceil, &if\,\,\frac{m}{t_2}\,|\,h,\\
           \lfloor \frac{q-1}{m}\rfloor, & if\,\,\frac{m}{t_2}\nmid h.
         \end{array}
\right.
$$
Let $h_1= (\overline{h-i})_m$,
then $a_{h}=a_{h_1}$. Hence, the sequences of
$$(a_{m-1},a_{m-2},\cdots,a_1,a_0)_q$$
and
$$(a_{m-i-1},a_{m-i-2},\cdots,a_{m+1-i},a_{m-i})_q$$
are the same. From (\ref{eq:deta418-01}) and  (\ref{eq:deta418-02}), we have
 $\theta' q^{i} \pmod {q^m-1} =\theta'$.

 \noindent{\bf Case 3:} $i\in\Upsilon$ and $t_2\nmid m$. From Lemma \ref{round up}, let $0\leq h \leq m-1$, we know that $a_h=\lceil \frac{q-1}{m}\rceil$ if and only if $h=\lceil \frac{m(t_2-l)}{t_2}-1\rceil$ and  $a_h=\lfloor \frac{q-1}{m}\rfloor$ for the other values of $h$,  where $l \in [0,t_2-1]$.
  It is easy to check that
$$m-1- \left\lceil\frac{m(t_2-l)}{t_2}-1\right\rceil=m-1-\lceil m-1-\frac{ml}{t_2}\rceil=\left\lfloor \frac{ml}{t_2}\right\rfloor.$$
Let $a=\lceil \frac{q-1}{m}\rceil$ and $b=\lfloor \frac{q-1}{m}\rfloor$, then the sequence of
$$(a_{m-1},a_{m-2},\cdots,a_1,a_0)_q$$
can be expressed as
$$(a\underbrace{\underbrace{\underbrace{\underbrace{b,\ldots,b}_{\lfloor \frac{m}{t_2}\rfloor-1},a,b,\cdots,b}_{\lfloor \frac{2m}{t_2}\rfloor-1},a,\cdots}_{\cdots},a,\cdots b}_{\lfloor \frac{m(t_2-1)}{t_2}\rfloor-1},a, b,\cdots,b)_q.$$
Since $i\in\Upsilon$, we can assume that $m-1-i=\lceil \frac{m\gamma_1}{t_2}-1\rceil$, where $\gamma_1 \in [1,t_2-1]$.
Moreover, we have
$$\left\lceil \frac{m\gamma_1}{t_2}-1\right\rceil-\left\lceil \frac{m(\gamma_1+l)}{t_2}-1\right\rceil\leq\left\lfloor \frac{ml}{t_2}\right\rfloor.$$
Then the sequence of
$$(a_{m-i-1},a_{m-i-2},\cdots,a_1,a_0,a_{m-1},a_{m-2},\cdots,a_{m-i})_q$$
can be expressed as
$$(a\underbrace{\underbrace{\underbrace{\underbrace{b,\ldots,b}_{\leq \lfloor \frac{m}{t_2}\rfloor-1},a,b,\cdots,b}_{\leq \lfloor \frac{2m}{t_2}\rfloor-1},a,\cdots}_{\cdots},a,\cdots b}_{\leq \lfloor \frac{m(t_2-1)}{t_2}\rfloor-1},a, b,\cdots,b)_q.$$
Hence, from (\ref{eq:deta418-01}) and  (\ref{eq:deta418-02}), we obtain
 $\theta' q^{i} \pmod {q^m-1}\geq\theta'$.

From Cases 1,  2 and  3, we have  $\theta' q^{i} \pmod {q^m-1}\geq\theta'$ for any $i\in[0,m-1]$. Then $\theta$ is a coset leader modulo $n$ from (\ref{eq:deta418}). This completes the proof.
\end{proof}

Let
\begin{equation}\label{eq:M}
M=q^m-\sum_{t=1}^{q-1}q^{\lceil\frac{mt}{q-1}-1\rceil}-1+\mu(q-1)
\end{equation}
 and  $0< \mu<\frac{\sum_{t=1}^{q-1}q^{\lceil\frac{mt}{q-1}-1\rceil}}{q-1}$.
By the definition of $\mu$, we know that $\mu$  can be expressed as
$$\mu=b_{m-2}q^{m-2}+b_{m-3}q^{m-3}+\cdots+b_1q+b_0,$$
where $b_0,\cdots,b_{m-2} \in [0,q-1]$. Then
$$(q-1)\mu=b_{m-2}q^{m-1}+(b_{m-3}-b_{m-2})q^{m-2}+\cdots+(b_1-b_2)q^2+(b_0-b_1)q-b_0.$$
By Lemma~\ref{round up}, we have
\begin{equation}\label{eq:MM}
M=q^m-(a_{m-1}-b_{m-2})q^{m-1}-(a_{m-2}+b_{m-2}-b_{m-3})q^{m-2}-\cdots-( a_1 +b_1-b_0)q-a_0-b_0-1.
\end{equation}
We next prove that $M$ is a not a coset leader modulo $q^m-1$ from the following three lemmas.

\begin{lemma}\label{eq:0713}
Let the notation be given as above. If $b_{m-2}> 0$, $0\leq a_0+b_0<q$ and $0\leq a_i+b_i-b_{i-1}<q$ for $i \in[1,m-2]$, then $M$ is not a coset leader modulo $q^m-1$.
\end{lemma}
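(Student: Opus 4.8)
The plan is to compare $M$ with the other elements of its $q$-cyclotomic coset modulo $q^m-1$ through their $q$-adic expansions, invoking Lemma~\ref{lem1b21}. Multiplying by $q$ modulo $q^m-1$ cyclically rotates the length-$m$ digit string, so the leader of the coset containing $M$ is exactly the rotation whose digit string is lexicographically smallest when read from the most significant digit. Hence it suffices to show that the digit string of $M$ is \emph{not} this lexicographically smallest rotation.

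First I would read off the $q$-adic digits of $M$ from \eqref{eq:MM}. Under the hypotheses $0\le a_0+b_0<q$ and $0\le a_i+b_i-b_{i-1}<q$ for $i\in[1,m-2]$, the coefficients of $q^0,\dots,q^{m-2}$ in \eqref{eq:MM} already lie in $\{0,1,\dots,q-1\}$, so they are genuine digits; writing $c_0=a_0+b_0$, $c_i=a_i+b_i-b_{i-1}$ for $i\in[1,m-2]$ and $c_{m-1}=a_{m-1}-b_{m-2}$, the digit at position $i$ is $d_i=q-1-c_i$. For the leading coefficient I would use the standing range of $\mu$, namely $\mu<\frac{\sum_{t=1}^{q-1}q^{\lceil mt/(q-1)-1\rceil}}{q-1}$, which forces $M<q^m$; since the lower digits are non-negative, this excludes $d_{m-1}\ge q$, so $b_{m-2}\le a_{m-1}$ and $d_{m-1}=q-1-a_{m-1}+b_{m-2}\in\{0,\dots,q-1\}$ is also a legitimate digit. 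This pins down the full digit string $(d_{m-1},d_{m-2},\dots,d_0)$.

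The crux is a digit-sum (averaging) argument. Using $\sum_{i=0}^{m-1}a_i=q-1$ from Lemma~\ref{round up} together with the telescoping identity $-b_{m-2}+\sum_{i=1}^{m-2}(b_i-b_{i-1})+b_0=0$, I obtain $\sum_{i=0}^{m-1}c_i=q-1$ and hence $\sum_{i=0}^{m-1}d_i=m(q-1)-(q-1)=(m-1)(q-1)$. Now suppose, for contradiction, that $M$ is a coset leader modulo $q^m-1$. Then every rotation of its digit string is $\ge M$, so by Lemma~\ref{lem1b21} the leading digit $d_{m-1}$ must be the minimum of all the $d_i$; in particular $m\,d_{m-1}\le\sum_{i=0}^{m-1}d_i=(m-1)(q-1)$, i.e. $d_{m-1}\le (q-1)-\frac{q-1}{m}$. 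On the other hand $d_{m-1}=q-1-a_{m-1}+b_{m-2}$ with $a_{m-1}=\lceil (q-1)/m\rceil$ by Lemma~\ref{round up} (indeed $m-1\in\Upsilon$ when $t_2\neq0$, while $a_i=(q-1)/m$ for every $i$ when $t_2=0$). Substituting yields $b_{m-2}\le \lceil (q-1)/m\rceil-\frac{q-1}{m}<1$, which contradicts $b_{m-2}>0$. Therefore $M$ is not a coset leader.

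The step I expect to be the main obstacle is the bookkeeping of the leading digit: proving rigorously that the coefficient of $q^{m-1}$ in \eqref{eq:MM} is a bona fide $q$-adic digit (equivalently that $b_{m-2}\le a_{m-1}$), since the stated hypotheses only bound the coefficients up to $q^{m-2}$ and one must bring in the size estimate $M<q^m$ coming from the range of $\mu$. Once the digit string is correctly identified, the averaging comparison together with $\sum_i a_i=q-1$ makes the contradiction essentially automatic, and no explicit smaller rotation needs to be exhibited.
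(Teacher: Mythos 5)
Your proposal is correct and follows essentially the same route as the paper's proof: both read off the digit string of $M$ from \eqref{eq:MM}, use the cyclic-rotation characterization of coset leaders via Lemma~\ref{lem1b21}, and exploit $\sum_{i=0}^{m-1}a_i=q-1$ together with $a_{m-1}=\lceil\frac{q-1}{m}\rceil$ to force a contradiction with $b_{m-2}>0$. The only difference is one of packaging: the paper splits into the cases $b_{m-2}=a_{m-1}$ (handled by comparison with the largest coset leader from Lemma~\ref{lem:qm1q1}) and $b_{m-2}<a_{m-1}$ (handled by the telescoping sum), whereas you fold both into a single averaging inequality $m\,d_{m-1}\le\sum_i d_i=(m-1)(q-1)$, which is a valid and slightly cleaner way to reach the same conclusion.
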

\begin{proof}
If $b_{m-2}=a_{m-1}$, from Lemma~\ref{round up}, we know that $$b_{m-2}+a_{m-2}-b_{m-3}=a_{m-1}+a_{m-2}-b_{m-3}\leq 2 \lceil \frac{q-1}{m}\rceil-b_{m-3}\leq \frac{q+1}{2}\leq q-1.$$
It is easy to see that the equality can not hold simultaneously. Hence,
$M>(q-1)q^{m-1}$ from (\ref{eq:MM}). By Lemma \ref{lem:qm1q1}, we know  that $M$ is not a coset leader modulo $q^m-1$.

We now prove that $M$ is not a coset leader modulo $q^m-1$ when $b_{m-2}<a_{m-1}$.
 From Lemma~\ref{round up} we know that $a_{m-1}=\lceil\frac{q-1}{m}\rceil$. Let
\begin{equation}\label{a}
 a=a_{m-1}-b_{m-2},
\end{equation}
then $0< a<\lceil\frac{q-1}{m}\rceil$.
If there exists a positive integer $i_0\in[1,m-2]$ satisfying $a_{i_0}+b_{i_0}-b_{i_0-1}>a$, then
$ Mq^{i_0} \pmod{q^m-1}<M.$

If for all $i\in[0,m-2]$ we have $a_i+b_i-b_{i-1}\leq a$, i.e.,
\begin{equation}\label{case1array}
  \left\{\begin{array}{l}
           a_{m-2}+b_{m-2}-b_{m-3}\leq a,\\
            a_{m-3}+b_{m-3}-b_{m-4}\leq a,\\
            \vdots\\
            a_1+b_1-b_0\leq a,
         \end{array}
\right.
\end{equation}
then by (\ref{a}) and (\ref{case1array}) we have
\begin{equation*}
  a_{m-1}+a_{m-2}+\cdots+a_2+a_1-b_0\leq (m-1)a,
\end{equation*}
i.e.,
\begin{equation}\label{eq:o6o7}
 a_{m-1}+a_{m-2}+\cdots+a_2+a_1+a_0-a_0-b_0\leq (m-1)a.
\end{equation}
Since $\sum_{i=0}^{m-1}a_i=q-1$ and $a<\lceil\frac{q-1}{m}\rceil$, from (\ref{eq:o6o7}) we have
\begin{equation*}
 b_0+a_0\geq q-1-(m-1)a>a.
\end{equation*}
From (\ref{eq:MM}) we know that $ Mq^{m-1} \pmod{q^m-1}<M.$ The desired conclusion then follows.
\end{proof}

\begin{lemma}\label{eq:071301}
Let the notation be given as above. If $b_{m-2}= 0$, $0\leq a_0+b_0<q$ and $0\leq a_i+b_i-b_{i-1}<q$ for $i \in[1,m-2]$, then $M$ is not a coset leader modulo $q^m-1$.
\end{lemma}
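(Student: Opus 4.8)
The plan is to read the base-$q$ expansion of $M$ directly off \eqref{eq:MM} and then invoke the digit comparison of Lemma \ref{lem1b21}. Since $b_{m-2}=0$, the hypotheses $0\le a_0+b_0<q$ and $0\le a_i+b_i-b_{i-1}<q$ guarantee that $\overline{M}=(d_{m-1},d_{m-2},\dots,d_0)_q$ is a genuine $q$-ary string with $d_{m-1}=q-1-a_{m-1}$, $d_i=q-1-(a_i+b_i-b_{i-1})$ for $1\le i\le m-2$, and $d_0=q-1-(a_0+b_0)$. Writing $c_i=q-1-d_i$, a telescoping sum (here crucially $b_{m-2}=0$) gives $\sum_{i=0}^{m-1}c_i=\sum_{i=0}^{m-1}a_i=q-1$, while Lemma \ref{round up} yields $c_{m-1}=a_{m-1}=\lceil\frac{q-1}{m}\rceil=:A$. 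Because every cyclic shift of $\overline{M}$ is realized as $Mq^{s}\bmod(q^m-1)$ for some $s$, Lemma \ref{lem1b21} says $M$ is a coset leader modulo $q^m-1$ exactly when $\overline{M}$ is the lexicographically smallest of its cyclic shifts; so it suffices to exhibit one shift strictly smaller than $\overline{M}$.

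First I would dispose of the easy case: if $c_{i_0}>A$ for some $i_0\in\{0,1,\dots,m-2\}$, then the shift bringing $d_{i_0}=q-1-c_{i_0}<q-1-A=d_{m-1}$ into the leading position is strictly smaller than $\overline{M}$, so $M$ is not a coset leader. This mirrors the first half of Lemma \ref{eq:0713}, but the point of departure is that here the leading coefficient equals $a_{m-1}=A$ \emph{exactly} (rather than $a_{m-1}-b_{m-2}<A$), so the summation trick that forced $c_0>A$ there gives only $c_0\ge (q-1)-(m-1)A\le A$ and breaks down. It therefore remains to handle the case $c_i\le A$ for all $i$. If $m\mid(q-1)$, Lemma \ref{round up} gives $a_i=A=\frac{q-1}{m}$ for every $i$, so $\sum_{i=0}^{m-2}c_i=(q-1)-A=(m-1)A$; combined with $c_i\le A$ this forces $c_i=A$ for all $i$, i.e. $b_i=b_{i-1}$ for $1\le i\le m-2$ and $b_0=0$, whence $\mu=0$, contradicting $\mu>0$. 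Thus under $m\mid(q-1)$ the easy case always applies, and the genuinely new situation is $m\nmid(q-1)$, where $A=t_1+1$, $B=t_1$, and $\sum_{i=0}^{m-1}(A-c_i)=mA-(q-1)=m-t_2>0$: the maximal digit $A$ is attained at $c_{m-1}$ and at several other positions, so competing shifts tie on the leading digit and a finer comparison is unavoidable.

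The main obstacle is precisely this tie-breaking, and I would attack it by exploiting that $\theta(q-1)=\sum_{i=0}^{m-1}(q-1-a_i)q^i$ is itself a coset leader modulo $q^m-1$: indeed the proof of Lemma \ref{detal} shows $\theta(q-1)q^{i}\bmod(q^m-1)\ge\theta(q-1)$ for all $i$, so the sequence $(a_{m-1},\dots,a_0)$ is the lexicographically largest of its own cyclic shifts. Since $c_i-a_i=b_i-b_{i-1}$, letting $k$ be the largest index with $b_k\ne0$ (so $k\le m-3$ and $b_{k+1}=\dots=b_{m-2}=0$), the sequences $(c_i)$ and $(a_i)$ agree from position $m-1$ down to $k+2$ and then satisfy $c_{k+1}=a_{k+1}-b_k<a_{k+1}$. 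Reading downward from the top, the perturbation thus first \emph{lowers} a digit, at position $k+1$. Because the digit sum is conserved ($\sum c_i=\sum a_i=q-1$) and $(a_i)$ was already the largest arrangement of its necklace, this decrease must be compensated by an increase at some lower position; the plan is to show that such a compensation always surfaces a peak index $j$ (with $c_j=A$) whose cyclic continuation strictly dominates the continuation read from $m-1$, so that $(c_i)$ cannot be the lexicographically largest shift.

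Producing this index $j$ rigorously — i.e. proving that a strictly positive $\mu$ with $b_{m-2}=0$ and no carries necessarily destroys the largest-shift property inherited from $\theta(q-1)$ — is the crux, and I expect it to require a careful run-length (necklace) analysis of where $(c_i)$ deviates from $(a_i)$ rather than a one-line digit comparison. Once such a $j$ is exhibited, the corresponding shift satisfies $Mq^{s}\bmod(q^m-1)<M$, and with the easy case and the $m\mid(q-1)$ reduction above this completes the proof that $M$ is not a coset leader modulo $q^m-1$.
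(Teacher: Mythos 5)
Your setup is sound and your two easy reductions are correct and essentially match the paper's: reading the digits $c_i$ of $q^m-1-M$ off \eqref{eq:MM} (with $c_{m-1}=a_{m-1}=\lceil\frac{q-1}{m}\rceil$ because $b_{m-2}=0$), disposing of any index with $c_{i_0}>\lceil\frac{q-1}{m}\rceil$ by a one-step shift, and killing the case $m\mid(q-1)$ by forcing $\mu=0$. The paper's proof does exactly the same for $m\mid(q-1)$.

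However, there is a genuine gap: the case $m\nmid(q-1)$ with all $c_i\le\lceil\frac{q-1}{m}\rceil$ is the entire substance of this lemma, and you explicitly leave it as a plan (``producing this index $j$ rigorously \dots is the crux, and I expect it to require a careful run-length analysis''). Two specific problems. First, your compensation heuristic is not a proof: the fact that $(a_{m-1},\dots,a_0)$ is the lexicographically largest cyclic arrangement of \emph{its own} necklace does not transfer to $(c_{m-1},\dots,c_0)$, since the relevant comparison is between the identity reading of $(c_i)$ and cyclic shifts of $(c_i)$ itself; a zero-sum perturbation whose first effect from the top is a decrease does not automatically yield a strictly larger shift without controlling \emph{where} the compensating increase lands relative to the peaks. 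Second, making this precise is exactly what the paper's proof spends its length on: it splits on $b_0\in\{0,1\}$ (forced by the no-carry hypotheses since $a_0=\lfloor\frac{q-1}{m}\rfloor$), locates the least index $i_1$ with $b_{i_1}=1$, proves $i_1\notin\Upsilon$ using the constraint system, and then compares $M$ with $q^{m-i_1-1}M\bmod(q^m-1)$ digit by digit in two subcases according to whether $i_1$ is adjacent to the next peak of $\Upsilon$, invoking Lemma \ref{morethanzeroproof} to control the spacing $N_\gamma^\xi$ of the peaks. None of this machinery appears in your proposal, so the argument as written does not establish the lemma in its main case.
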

\begin{proof}
If $M$ is a coset leader modulo $q^m-1$, then $M\leq Mq^i \pmod{q^m-1}$ for $i \in [1,m-1]$. From Lemma \ref{round up} and the definition of $M$, we have
\begin{equation}\label{case2array11}
  \left\{\begin{array}{l}
a_{m-2}-b_{m-3}\leq \lceil\frac{q-1}{m}\rceil,\\
            b_{m-3}+a_{m-3}-b_{m-4}\leq \lceil\frac{q-1}{m}\rceil,\\
            \vdots\\
            b_1+a_1-b_0\leq \lceil\frac{q-1}{m}\rceil,\\
            a_0+b_0\leq \lceil\frac{q-1}{m}\rceil.
         \end{array}
\right.
\end{equation}
%since $a_{m-2}\leq \lceil \frac{q-1}{m}\rceil$, $b_{m-2}=0$ and $b_{m-3}\geq 0$.
If $m|(q-1)$, we have $\lceil\frac{q-1}{m}\rceil=\frac{q-1}{m}$.
Then from (\ref{case2array11}) and $a_i=\frac{q-1}{m}$ for all $i\in[0,m-1]$, we have
$$b_0=b_1=\cdots=b_{m-3}=0.$$
Combining with $b_{m-2}=0$, we obtain that $\mu=0$, which is contradictory to $\mu>0$. Then $M$ is not a coset leader modulo $q^m-1$.

In the following, we prove that $M$ is not a coset leader modulo $q^m-1$  if $m\nmid (q-1)$.
In this case, we have $\lceil\frac{q-1}{m}\rceil-\lfloor\frac{q-1}{m}\rfloor=1.$
By Lemma \ref{round up}, we know that $a_0=\lfloor\frac{q-1}{m}\rfloor$. Then $b_0=0$ or $b_0=1$.

\noindent{\bf Case 1:} $b_0=1$. Recall that
\begin{equation}\label{eq:0625}
   M=q^m-(a_{m-1}-b_{m-2})q^{m-1}-(b_{m-2}+a_{m-2}-b_{m-3})q^{m-2}-\cdots-(b_1+a_1-b_0)q-a_0-b_0-1.
\end{equation}
Then
\begin{equation}\label{eq:062501}
\begin{split}
  q^{m-1}M \pmod {q^m-1}= & q^m-(b_0+a_0)q^{m-1}-(a_{m-1}-b_{m-2})q^{m-2}-\cdots-(b_2+a_2-b_1)q\\
  &-(b_1+a_1-b_0)-1.
\end{split}
\end{equation}
 If $M\leq q^{m-1}M$, by comparing (\ref{eq:0625}) and (\ref{eq:062501}), we know that
\begin{equation*}
a_{m-1}=a_{m-2}=\cdots=a_1=\left\lceil\frac{q-1}{m}\right\rceil\,\,\text{and}\,\,b_0=b_1=\cdots=b_{m-3}=0
\end{equation*}
since $b_{m-2}=0$ and $\lceil\frac{q-1}{m}\rceil\leq a_i\leq\lfloor\frac{q-1}{m}\rfloor$. Then we obtain that $\mu=0$, which is contradictory to $\mu>0$. Hence, we have $M> q^{m-1}M \pmod {q^m-1}$.

\noindent{\bf Case 2:} $b_0=0$.
From (\ref{case2array11}), we know that $b_1=0$ or $b_1=1$. If $b_1=0$, then from (\ref{case2array11}) we know that $b_2=0$ or $b_2=1$. Continue this work, we can obtain that there exists $i_1$ such that $b_{i_1}=1$ and $b_{i_1-l}=0$, where $i_1\in[1,m-3]$ and $l\in [1,i_1-1]$.

Recall that $\Upsilon$ is defined in Lemma \ref{round up}. If $i_1\in\Upsilon $, then we have $b_{i_1-1}=0$, $b_{i_1}=1$ and $a_{i_1}=\lceil\frac{q-1}{m}\rceil$, which is contradictory to $a_{i_1}+b_{i_1}-b_{i_1-1}\leq \lceil\frac{q-1}{m}\rceil$. Hence, we obtain that $i_1\in[0,m-1]\setminus \Upsilon $. This means that there is $\gamma_2\in[0,t_2-1]$ satisfying $\lceil\frac{\gamma_2m}{t_2}-1\rceil< i_1<\lceil\frac{(\gamma_2+1)m}{t_2}-1\rceil$. For the sake of narrative, we assume that $b_{m-1}=b_{-1}=0$ in the following of this proof.

 \noindent{\bf Subcase 1}: $\lceil\frac{\gamma_2m}{t_2}-1\rceil< i_1<\lceil\frac{(\gamma_2+1)m}{t_2}-1\rceil-1$.
Let $\xi=m- i_1-1+\lceil\frac{ \gamma_2m}{t_2}-1\rceil$. It is clear that $M$ and $q^{m-i_1-1}M$ can be expressed as
\begin{equation}\label{subcase1M}
\begin{split}
  M= & q^m-(a_{m-1}-b_{m-2})q^{m-1}-(a_{m-2}+b_{m-2}-b_{m-3})q^{m-2}-\cdots-(b_{\xi}+a_{\xi}-b_{\xi-1})q^{\xi}-\cdots \\
    &-(b_1+a_1-b_0)q-a_0-b_0-1,
\end{split}
\end{equation}
and
\begin{equation}\label{subcase1qm1}
\begin{split}
   q^{m-i_1-1}M\pmod{q^m-1}=& q^m-(a_{i_1}+b_{i_1}-b_{i_1-1})q^{m-1}-(a_{ i_1-1}+b_{ i_1-1}-b_{ i_1-2})q^{m-2}-\cdots\\
   &-(a_{\lceil\frac{(\overline{\gamma_2-1})_{t_2}m+m}{t_2}-1\rceil}+b_{\lceil\frac{(\overline{\gamma_2-1})_{t_2}m+m}{t_2}-1\rceil}-b_{\lceil\frac{ (\overline{\gamma_2-1})_{t_2}m+m}{t_2}-2\rceil})q^{\xi}\\
&-\cdots-(b_{i_1}+a_{i_1}-b_{i_1-1})q-(b_{i_1-1}+a_{i_1-1}-b_{i_1-2})-1.
\end{split}
\end{equation}
Since $b_{m-2}=0$, $b_{i_1}=1$ and $b_{i_1-l}=0$ for $l\in [1,i_1-1]$, we obtain that
$$\left\lceil\frac{q-1}{m}\right\rceil=a_{i_1}+b_{i_1}-b_{i_1-1}=a_{m-1}-b_{m-2}.$$
It is clear that
$$m-1-\left(m-i_1-1+\left\lceil\frac{ \gamma_2m}{t_2}-1\right\rceil\right)=i_1-\left\lceil\frac{ \gamma_2m}{t_2}-1\right\rceil\leq N_{ \gamma_2+1}^1 -2\leq N_{t_2}^1-1.$$
Then
\begin{equation*}
a_{m-2}=a_{m-3}=\cdots=a_{\xi+1}=\left\lfloor \frac{q-1}{m}\right\rfloor\,\, \text{and}\,\,
a_{i_1}=a_{i_1-1}=\cdots= a_{\lceil\frac{ {\gamma_2}m}{t_2}\rceil}=\left\lfloor \frac{q-1}{m}\right\rfloor
\end{equation*}
since $0\leq N_{\gamma_2+1}^1-N_{t_2}^1\leq 1$ and $\lceil\frac{\gamma_2m}{t_2}-1\rceil<i_1<\lceil\frac{(\gamma_2+1)m}{t_2}-1\rceil-1$.
 If there exists  $l_1\in [2,m-2-\xi]$ such that
\begin{equation}\label{eq:minl}
  a_{m-l_1-1}+b_{m-l_1-1}-b_{m-l_1-2}>a_{i_1-l_1}+b_{i_1-l_1}-b_{i_1-l_1-1},
\end{equation}
let $l_2$ be the least integer in the range $[2,m-2-\xi]$ such that (\ref{eq:minl}) holds, then we have $b_{m-l_2-1}>0$ and
\begin{equation}\label{eq:minl1}
  a_{m-1-l_3}+b_{m-1-l_3}-b_{m-2-l_3}\leq a_{i_1-l_3}+b_{i_1-l_3}-b_{i_1-l_3-1}
\end{equation}
for all $0\leq l_3< l_2$. If all of the equals in (\ref{eq:minl1}) hold, since $b_{m-2}=0$ and $b_{i_1-l}=0$ for all $l\in[1,i_1-1]$, we know that $b_{m-3}=b_{m-4}=\cdots=b_{m-l_2-1}=0$, which is contradictive with $b_{m-l_2-1}>0$. Then at least one of equals  in (\ref{eq:minl1}) does not hold, we obtain that $q^{m-i_1-1}M\pmod{q^m-1}< M$ from (\ref{subcase1M}) and (\ref{subcase1qm1}).

If there does not exist $l_1\in [2,m-2-\xi]$ such that (\ref{eq:minl}) holds, then we have
\begin{equation}\label{case2array1}
  \left\{\begin{array}{l}
           \lfloor \frac{q-1}{m}\rfloor=a_{i_1-1}+b_{i_1-1}-b_{i_1-2}\geq b_{m-2}+a_{m-2}-b_{m-3},\\
            \lfloor \frac{q-1}{m}\rfloor=a_{i_1-2}+b_{i_1-2}-b_{i_1-3}\geq b_{m-3}+a_{m-3}-b_{m-4},\\
            \vdots\\
           \lfloor \frac{q-1}{m}\rfloor=a_{\lceil\frac{ \gamma_2m}{t_2}\rceil}+b_{\lceil\frac{\gamma_2m}{t_2}\rceil}
           -b_{\lceil\frac{ \gamma_2m}{t_2}-1\rceil}
           \geq b_{\xi+1}+a_{\xi+1}-b_{\xi}.
         \end{array}
\right.
\end{equation}
If there is one of equals  in (\ref{case2array1}) that does not hold, then $q^{m-i_1-1}M\pmod{q^m-1}< M$ from (\ref{subcase1M}) and (\ref{subcase1qm1}).

If all the {equals} in (\ref{case2array1}) holds, then $b_{m-3}=b_{m-4}=\cdots=b_\xi=0$. Hence,
$$a_\xi+b_\xi-b_{\xi-1}= \left\lfloor \frac{q-1}{m}\right\rfloor-b_{\xi-1}\leq \left\lfloor \frac{q-1}{m} \right\rfloor.$$
It is clear that $a_{\lceil\frac{(\overline{\gamma_2-1})m+m}{t_2}-1\rceil}+b_{\lceil\frac{ (\overline{\gamma_2-1})m+m}{t_2}-1\rceil}-
b_{\lceil\frac{ (\overline{\gamma_2-1})m+m}{t_2}-2\rceil}=\lceil\frac{q-1}{m}\rceil$ since $a_{\lceil\frac{(\overline{\gamma_2-1})m+m}{t_2}-1\rceil}=\lceil\frac{q-1}{m}\rceil$ and $b_{\lceil\frac{ (\overline{\gamma_2-1})m+m}{t_2}-1\rceil}=
b_{\lceil\frac{ (\overline{\gamma_2-1})m+m}{t_2}-2\rceil}=0$. Then $q^{m-i_1-1}M\pmod{q^m-1}< M$ from (\ref{subcase1M}) and (\ref{subcase1qm1}).

\noindent{\bf Subcase 2:}  $i_1=\lceil\frac{(\gamma_2+1)m}{t_2}-1\rceil-1$. From Lemma \ref{morethanzeroproof}, there exists $\xi\in[1,t_2]$ such that
   $N_{\gamma_2}^{\xi}= N_{t_2}^{\xi}.$
Assume that $\xi_1\in [1,t_2]$ is the minimum value such that $ N_{\gamma_2}^{\xi_1}=N_{t_2}^{\xi_1}.$

If there exists  $l\in [2,i_1-\lceil\frac{(\gamma_2-\xi_1)m}{t_2}-1\rceil-1]$ such that $a_{m-1-l}+b_{m-1-l}-b_{m-2-l}>a_{(\overline{i_1-l})_{m}}+b_{(\overline{i_1-l})_{m}}-b_{(\overline{i_1-l-1})_m}$,
similar to the discuss of (\ref{eq:minl}) and (\ref{eq:minl1}), then there is $1\leq l_4<l$ such that $a_{m-1-l_4}+b_{m-1-l_4}-b_{m-2-l_4}<a_{(\overline{i_1-l_4})_{m}}+b_{(\overline{i_1-l_4})_m}-b_{(\overline{i_1-l_4-1})_m}$. Hence, we obtain $q^{m-i_1-1}M\pmod{q^m-1}< M$.

If there does not exist $l\in [2,i_1-\lceil\frac{(\gamma_2-\xi_1)m}{t_2}-1\rceil-1]$ such that $a_{m-1-l}+b_{m-1-l}-b_{m-2-l}>a_{(\overline{i_1-l})_m}+b_{(\overline{i_1-l})_m}-b_{(\overline{i_1-l-1})_m}$, then we have
\begin{equation}\label{eq:upeai}
  a_{(\overline{i_1-l})_m}+b_{(\overline{i_1-l})_m}-b_{(\overline{i_1-l-1})_m}\geq a_{m-1-l}+b_{m-1-l}-b_{m-2-l}
\end{equation}
for all $l\in [2,i_1-\lceil\frac{(\gamma_2-\xi_1)m}{t_2}-1\rceil-1]$.
 If one of equals  in (\ref{eq:upeai}) does not hold, then $q^{m-i_1-1}M\pmod{q^m-1}< M$.

If all the equals in (\ref{eq:upeai}) holds, then $b_{m-3}=b_{m-4}=\cdots=b_{\eta}=0$, where $\eta=m-1-i_1+\lceil \frac{(\gamma_2-\xi_1)m}{t_2}\rceil$. Hence,
\begin{equation*}
  a_{\eta+1}+b_{\eta+1}-b_{\eta}=\left\lfloor \frac{q-1}{m}\right\rfloor+0-b_{\eta}{\red =} \left\lfloor \frac{q-1}{m}\right\rfloor.
\end{equation*}
Since $a_{\lceil\frac{(\overline{\gamma_2-\xi_1-1})_{t_2}m+m}{t_2}-1\rceil}=\lceil\frac{q-1}{m}\rceil$ and $b_{\lceil\frac{(\overline{\gamma_2-\xi_1-1})_{t_2}m+m}{t_2}-1\rceil}=b_{\lceil\frac{(\overline{\gamma_2-\xi_1-1})_{t_2}m+m}{t_2}-2\rceil}=0$, we have
\begin{equation*}
 a_{\lceil\frac{(\overline{\gamma_2-\xi_1-1})_{t_2}m+m}{t_2}-1\rceil}+b_{\lceil\frac{(\overline{\gamma_2-\xi_1-1})_{t_2}m+m}{t_2}-1\rceil}
 -b_{\lceil\frac{(\overline{\gamma_2-\xi_1-1})_{t_2}m+m}{t_2}-2\rceil}=\left\lceil\frac{q-1}{m}\right\rceil,
\end{equation*}
 then
$$ a_{\eta+1}+b_{\eta+1}-b_{\eta}<a_{\lceil\frac{(\overline{\gamma_2-\xi_1-1})_{t_2}m+m}{t_2}-1\rceil}+b_{\lceil\frac{(\overline{\gamma_2-\xi_1-1})_{t_2}m+m}{t_2}-1\rceil}
 -b_{\lceil\frac{(\overline{\gamma_2-\xi_1-1})_{t_2}m+m}{t_2}-2\rceil}.$$
Hence, we obtain $q^{m-i_1-1}M\pmod{q^m-1}< M$.

Hence, from Cases 1 and  2,  we know that there always exists $i \in [1,m-1]$ such that $M\geq Mq^i \pmod{q^m-1}$. Hence, $M$ is not a coset leader modulo $q^m-1$. The desired conclusion then follows.
\end{proof}

\begin{lemma}\label{eq:071302}
Let the notation be given as above. If there exists $i\in[1,m-2]$ such that $a_i+b_i-b_{i-1}<0$ or $a_i+b_i-b_{i-1}\geq q$, then $M$ is not a coset leader modulo $q^m-1$.
\end{lemma}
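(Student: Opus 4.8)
The plan is to exhibit, in each of the two cases, an exponent $1\le j\le m-1$ with $Mq^{j}\bmod(q^m-1)<M$; by the definition of a coset leader this shows at once that $M$ is not a coset leader modulo $q^m-1$. Throughout I use the expansion \eqref{eq:MM}, which I write as $M=q^m-1-\sum_{i=0}^{m-1}c_iq^i$ with $c_i=a_i+b_i-b_{i-1}$ under the conventions $b_{-1}=b_{m-1}=0$. The one structural identity I lean on is $\sum_{i=0}^{m-1}c_i=q-1$: the telescoping sum $\sum_i(b_i-b_{i-1})$ vanishes, while $\sum_i a_i=q-1$ by Lemma \ref{round up}. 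Since $0<\mu<\frac{1}{q-1}\sum_{t=1}^{q-1}q^{\lceil mt/(q-1)-1\rceil}$ forces $0<M<q^m-1$, the integer $M$ has a genuine $m$-digit base-$q$ representation; the content of the lemma is that when some $c_{i_0}\notin[0,q-1]$ this representation is \emph{not} the coefficient string displayed in \eqref{eq:MM}, and the renormalization needed to recover the true digits destroys minimality.

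First I would treat the case $c_{i_0}\ge q$ for some $i_0\in[1,m-2]$. Because the digit sum equals $q-1<q\le c_{i_0}$, some other coefficient must be negative, and the bounds $a_i\le\lceil(q-1)/m\rceil$ together with $b_i\le q-1$ give $c_{i_0}\le a_{i_0}+b_{i_0}<2q$, so the coefficient $q-1-c_{i_0}$ of $q^{i_0}$ in \eqref{eq:MM} lies in $[1-q,-1]$ and is corrected by a single borrow from position $i_0+1$, propagating upward only until it meets the compensating negative coefficient guaranteed above. Having obtained the true base-$q$ digits $(m_{m-1},\dots,m_0)$ of $M$, I would take $j$ so that $Mq^{j}\bmod(q^m-1)$ is the cyclic rotation of this digit string that carries the anomalous block to the most significant positions, and then compare $\overline{Mq^{j}}$ with $\overline{M}$ digit by digit; Lemma \ref{lem1b21} turns the resulting inequality of $q$-adic sequences into $Mq^{j}\bmod(q^m-1)<M$. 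The case $c_{i_0}<0$ is handled symmetrically: now $c_{i_0}\in[1-q,-1]$, so the digit $q-1-c_{i_0}\ge q$ overflows, a carry rather than a borrow is propagated (and is absorbed by the positive excess, since $\sum_i c_i=q-1>0$), and the same rotate-and-compare argument applies.

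The main obstacle is bookkeeping rather than idea: one must control how far the borrow (respectively carry) chain propagates and pin down the exact digits it leaves behind, since both the choice of $j$ and the sign of the digitwise comparison depend on this. The identity $\sum_i c_i=q-1$ is precisely what keeps this under control, for it guarantees that an out-of-range coefficient at $i_0$ is always accompanied by an opposite out-of-range, or at least compensating, coefficient at another position, and it is this pair that produces the descent a suitable cyclic shift exploits. Once the chain length and its terminal digits are identified, the final step is a routine application of Lemma \ref{lem1b21}, exactly as in the proofs of the in-range cases treated in Lemmas \ref{eq:0713} and \ref{eq:071301}.
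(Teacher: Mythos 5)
Your opening reductions are sound and agree with the paper's: the identity $\sum_{i=0}^{m-1}c_i=q-1$ does force a negative coefficient to accompany any $c_{i_0}\ge q$ (the paper derives the same fact from the chain of inequalities \eqref{eq:geqq}), and the overall plan --- exhibit a shift $j$ with $Mq^{j}\bmod(q^m-1)<M$ --- is the only possible one. But the proposal stops exactly where the proof begins, and everything you defer as ``bookkeeping'' is the actual content of the lemma. Concretely: (i) you never specify $j$; the paper takes $j=m-i_4-1$ where $i_4$ is the \emph{largest} index with $c_{i_4}<0$, and can only do so after first showing that the configurations $i_5>i_4$ and ($c_{i_4}=-1$ with $i_5=i_4-1$) cannot occur, where $i_5$ is the largest index with $c_{i_5}\ge q$. (ii) The claim that the borrow from position $i_0$ ``propagates upward only until it meets the compensating negative coefficient'' is unjustified and not cleanly true: borrows and carries both propagate toward the high-order end, the compensating coefficient may lie below $i_0$, and a single unit of carry need not restore an out-of-range digit to $[0,q-1]$ when several anomalies interact. (iii) Most importantly, the comparison after rotating can end in a tie at the leading position (the case $b_{i_4-1}-b_{i_4}-a_{i_4}=q+b_{m-2}-a_{m-1}$ is genuinely possible), and the paper needs a full recursion on the quantities $D_{(m-1-s)}$, together with Lemmas \ref{round up} and \ref{morethanzeroproof}, to show that equality at each successive position forces the $a$'s and $b$'s into a rigid pattern that eventually yields a strictly negative $D$ and hence the strict descent. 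Without this tie-breaking argument the conclusion $Mq^{j}\bmod(q^m-1)<M$ is not established.

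A secondary difference of route: you propose to first normalize \eqref{eq:MM} into a genuine $q$-adic expansion and then invoke Lemma \ref{lem1b21}, whereas the paper never computes the true digits --- it compares the two un-normalized coefficient strings \eqref{eq:Mm0} and \eqref{eq:MI1} position by position. Your normalization is legitimate in principle, but it adds rather than removes case analysis, because the final digit string (and therefore both the correct choice of $j$ and the direction of the lexicographic comparison) depends on exactly how the borrow and carry chains resolve, which is the part you have not pinned down. As it stands the proposal is a plausible plan whose decisive steps are missing.
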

\begin{proof}
If $a_i+b_i-b_{i-1}\geq 0$ for all $i\in[1,m-2]$ and there exists a positive integer $i_2\in[1,m-2]$ such that $a_{i_2}+b_{i_2}-b_{i_2-1}\geq q$, i.e.,
\begin{equation}\label{eq:geqq}
  \left\{\begin{array}{l}
           a_{m-2}+b_{m-2}-b_{m-3}\geq 0,\\
           \vdots\\
           a_{i_2+1}+b_{i_2+1}-b_{i_2}\geq 0,\\
           a_{i_2}+b_{i_2}-b_{i_2-1}\geq q,\\
         \end{array}
\right.
\end{equation}
then $b_{i_2-1}\leq \sum_{j=i_2}^{m-2}a_j+b_{m-2}-q\leq \sum_{j=i_2}^{m-2}a_j+a_{m-1}-q<0$, which is impossible. Hence, there exists $i_3\in[1,m-2]$ such that $a_{i_3}+b_{i_3}-b_{i_3-1}< 0$ if there exists $i_2\in[1,m-2]$ such that $a_{i_2}+b_{i_2}-b_{i_2-1}\geq q$.

Let $\Psi$ be a subset of $[1,m-2]$ such that $b_{i}+a_i-b_{i-1}< 0$ if $i\in \Psi$ and $a_i+b_{i}-b_{i-1}\geq 0$ if  $i\in[1,m-2]\setminus \Psi$. Let $i_4=max\{i:i\in \Psi\}$. If there exists $i$ such that $b_{i}+a_i-b_{i-1}\geq q$, we assume that $i_5=max\{i:a_i+b_i-b_{i-1}\geq q, i\in[1,m-2]\setminus \Psi\}$. If $i_5>i_4$, with a similar analysis as (\ref{eq:geqq}), it is easy to get that $b_{i_4-1}<0$, which is contradictory to $b_{i_4-1}\geq 0$.

If $b_{i_4}+a_{i_4}-b_{i_4-1}=-1$ and $i_5=i_4-1$, we know that
\begin{equation*}
  \left\{\begin{array}{l}
           a_{m-2}+b_{m-2}-b_{m-3}\geq 0,\\
           \vdots\\
           a_{{i_4+1}+b_{ i_4}+1}-b_{ i_4}\geq 0,\\
           a_{ i_4}+b_{i_4}-b_{ i_4-1}=-1,\\
           a_{ i_4-1}+b_{ i_4-1}-b_{ i_4-2}\geq q.
         \end{array}
\right.
\end{equation*}
Then we have $b_{i_4-2}\leq -q+\sum_{i=i_4-1}^{m-2}a_i+1\leq -q+\sum_{i= i_4-1}^{m-1}a_i\leq -1<0$, which is contradictory to $b_{ i_4-2}\geq 0$.

From above, in order to obtain the desired result, we only need to prove the case that $M$ is not a coset leader modulo $q^m-1$ if $b_{i_4}+a_{i_4}-b_{i_4-1}<-1$, or $i_5\neq i_4-1$, or there does not exist $i\in[1,m-2]$ such that $b_i+a_i-b_{i-1}\geq q$. In these cases, it is clear that $M$  can be expressed as
\begin{equation}\label{eq:Mm0}
M=(q-a_{m-1}+b_{m-2})q^{m-1}+(b_{m-3}-a_{m-2}-b_{m-2})q^{m-2}+\cdots+(b_0-a_1 -b_1)q-a_0-b_0-1.
\end{equation}
Then $q^{m-i_4-1}M \pmod{q^m-1}$ can be expressed as
\begin{equation}\label{eq:MI1}
\begin{split}
   q^{m-i_4-1}M\pmod{q^m-1}=& (b_{i_4-1}-a_{i_4}-b_{i_4})q^{m-1}+(b_{i_4-2}-a_{i_4-1}-b_{i_4-1})q^{m-2}+\cdots+\\
   &(-a_0-b_0)q^{m-i_4-1}+(q-a_{m-1}+b_{m-2})q^{m-i_4-2}+\\
   &\cdots+(b_{i_4}-a_{i_4+1}-b_{i_4+1})
\end{split}
\end{equation}
if $i_4\neq m-2$ and
\begin{equation*}
\begin{split}
   q^{m-i_4-1}M\pmod{q^m-1}=& (b_{i_4-1}-a_{i_4}-b_{i_4})q^{m-1}+(b_{i_4-2}-a_{i_4-1}-b_{i_4-1})q^{m-2}+\cdots+\\
   &(-a_0-b_0)q^{m-i_4-1}+(-a_{m-1}+b_{m-2})
\end{split}
\end{equation*}
if $i_4=m-2$.
We only prove the case that  $M$ is not a coset leader modulo $q^m-1$ if $i_4\neq m-2$. When $i_4=m-2$, the desired results can be shown similarly. We omit the details.

It is easy to see that $b_{i_4-1}-b_{i_4}-a_{i_4}\leq q+b_{m-2}-a_{m-1}$.  If $b_{i_4-1}-b_{i_4}-a_{ i_4}< q+b_{m-2}-a_{m-1}$, then $q^{m-i_4-1}M\pmod{q^m-1}<M$.

If $b_{i_4-1}-b_{i_4}-a_{i_4}= q+b_{m-2}-a_{m-1}$, it is clear that
\begin{equation}\label{eq:da0716}
b_{i_4-1}=q-1,\,\,b_{i_4}=0,\,\,a_{i_4}=\lfloor\frac{q-1}{m}\rfloor,\,\,b_{m-2}=0\,\, \text{and}\,\, m\nmid (q-1)
\end{equation}
since $a_{m-1}=\lceil \frac{q-1}{m}\rceil.$
Then there exists a positive integer $\gamma_3$ such that $\lceil \frac{(\gamma_3-1)m}{t_2}-1\rceil< i_4 < \lceil \frac{\gamma_3m}{t_2}-1\rceil$.
Let $1\leq s\leq i_4$ and
\begin{equation*}
   D_{(m-1-s)}  =(b_{i_4-s-1}-a_{i_4-s}-b_{i_4-s})-(b_{m-s-2}-b_{m-s-1}-a_{m-s-1}).
\end{equation*}
Clearly, from (\ref{eq:da0716}) we know that
\begin{equation*}
\begin{split}
   D_{(m-2)} & =(b_{i_4-2}-a_{i_4-1}-b_{i_4-1})-(b_{m-3}-b_{m-2}-a_{m-2}) \\
    & =b_{i_4-2}-(q-1)-b_{m-3}-a_{i_4-1}+a_{m-2}.
\end{split}
\end{equation*}
From Lemma~\ref{round up}, it is obvious that
\begin{equation}\label{0720}
-a_{i_4-1}+a_{m-2}\in \{-1,0,1\}.
\end{equation}
If $a_{m-2}-a_{i_4-1}=1$, then we have $a_{i_4-1}=\lfloor\frac{q-1}{m}\rfloor$ and $a_{m-2}=\lceil\frac{q-1}{m}\rceil$. Since $a_{i_4}=a_{i_4-1}=\lfloor\frac{q-1}{m}\rfloor$ and $a_{m-1}=a_{m-2}=\lceil\frac{q-1}{m}\rceil$, we obtain that $N_{\gamma_3}^1\geq 3$ and $N_{t_2}^1=1$, which is contradictory to Lemma \ref{morethanzeroproof}. Then from (\ref{0720}) we know that $a_{m-2}-a_{i_{4}-1} \in \{-1,0\}$. Hence, we obtain that $ D_{(m-2)}\leq 0$. If $D_{(m-2)}< 0$, then $q^{m-i_4-1}M\pmod{q^m-1}<M$.

If $D_{(m-2)}= 0$, combined with (\ref{eq:da0716}), we obtain that
\begin{equation}\label{eq:m2722}
b_{i_4-2}=q-1,\,\, b_{m-3}=0\,\, \text{and}\,\, a_{m-2}-a_{i_4-1}=0.
\end{equation}
With a similar analysis on $D_{(m-2)}$, we obtain $ D_{(m-3)}\leq 0$. If $ D_{(m-3)}< 0$, then  $q^{m-i_4-1}M\pmod{q^m-1}<M$.

 If $ D_{(m-3)}= 0$, combined with (\ref{eq:m2722}) we obtain that $b_{i_4-3}=q-1$, $b_{m-4}=0$ and $a_{m-3}-a_{i_4-2}=0$.  Continue this work, we always have $q^{m-i_4-1}M\pmod{q^m-1}<M$,
or
\begin{equation}\label{eq:07222}
  \left\{\begin{array}{l}
           b_{i_4-1}=b_{i_4-2}=\cdots=b_{0}=q-1,\\
           b_{m-2}= b_{m-3}=\cdots=b_{m-i_{4}-1}=0,\\
           \vdots\\
          a_{m-2}-a_{i_4-1}=a_{m-3}-a_{i_4-2}=a_{m-i_4}-a_1=0.
         \end{array}
\right.
\end{equation}
If (\ref{eq:07222}) holds, then
\begin{equation*}
\begin{split}
  D_{(m-i_4-1)}&=-(b_0+a_0)-(b_{m-i_{4}-2}-b_{m-i_{4}-1}-a_{m-i_{4}-1)} \\
    & =-(q-1)-\lfloor \frac{q-1}{m}\rfloor- b_{m-i_{4}-2}+a_{m-i_{4}-1}\leq -(q-2)<0.
\end{split}
\end{equation*}
Hence, $q^{m-i_4-1}M \pmod{q^m-1}<M$. Combining with all the cases, we obtain that $M$ is not a coset leader modulo $q^m-1$.
The desired conclusion then follows.
\end{proof}

\begin{proposition}\label{pro:detal}
Let $q\geq 3$ be a prime power and $m\geq 4$ be an integer. Then $\delta_1=\theta$ is the largest coset leader modulo $n$, where $\theta$ is given in Lemma \ref{detal}.
\end{proposition}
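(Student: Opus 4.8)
The plan is to prove maximality by contradiction, leveraging the three preceding lemmas together with the multiplicative correspondence between coset leaders modulo $n=\frac{q^m-1}{q-1}$ and modulo $q^m-1$ that already underlies the proof of Lemma~\ref{detal}. First I would record that correspondence explicitly: for any integer $\delta'$ with $0<\delta'<n$ and any $i$, writing $\delta' q^i = kn+r$ with $0\le r<n$ gives $(q-1)\delta' q^i = k(q^m-1)+(q-1)r$ with $0\le (q-1)r<q^m-1$, so that
\[
(q-1)\bigl(\delta' q^i \bmod n\bigr)=\bigl((q-1)\delta' q^i \bmod (q^m-1)\bigr).
\]
Hence $\delta' q^i \bmod n\ge \delta'$ for all $i$ if and only if $(q-1)\delta' q^i \bmod (q^m-1)\ge (q-1)\delta'$ for all $i$; that is, $\delta'$ is a coset leader modulo $n$ if and only if $(q-1)\delta'$ is a coset leader modulo $q^m-1$. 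This is the analogue for $q-1$ of Lemma~\ref{lemma-1219}.

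Next I would set up the contradiction. By Lemma~\ref{detal}, $\theta$ is a coset leader modulo $n$, so it suffices to rule out any coset leader $\delta'$ with $\theta<\delta'<n$. For such a $\delta'$, put $\mu=\delta'-\theta$. Since $\theta=\frac{q^m-\sum_{t=1}^{q-1}q^{\lceil mt/(q-1)-1\rceil}-1}{q-1}$ and $n=\frac{q^m-1}{q-1}$, one checks $n-\theta=\frac{\sum_{t=1}^{q-1}q^{\lceil mt/(q-1)-1\rceil}}{q-1}$, so the range $\theta<\delta'<n$ translates exactly into $0<\mu<\frac{\sum_{t=1}^{q-1}q^{\lceil mt/(q-1)-1\rceil}}{q-1}$, which is precisely the range imposed on $\mu$ before \eqref{eq:M}. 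Moreover $(q-1)\delta'=(q-1)\theta+(q-1)\mu=q^m-\sum_{t=1}^{q-1}q^{\lceil mt/(q-1)-1\rceil}-1+\mu(q-1)=M$, so the integer attached to $\delta'$ under the correspondence is exactly the $M$ analysed above.

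Finally I would invoke the case analysis. Expanding $\mu=\sum_{i=0}^{m-2}b_iq^i$ and using Lemma~\ref{round up} to expand $\sum_{t=1}^{q-1}q^{\lceil mt/(q-1)-1\rceil}=\sum_{i=0}^{m-1}a_iq^i$ yields the representation \eqref{eq:MM} of $M$ in terms of the quantities $a_i+b_i-b_{i-1}$. Lemmas~\ref{eq:0713}, \ref{eq:071301} and \ref{eq:071302} together show that $M$ is not a coset leader modulo $q^m-1$ in every case: the first two treat $b_{m-2}>0$ and $b_{m-2}=0$ when all the relevant digit-sums lie in $[0,q)$, and the third treats the situation where some $a_i+b_i-b_{i-1}$ falls outside $[0,q)$. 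By the correspondence of the first paragraph, $\delta'=M/(q-1)$ is then not a coset leader modulo $n$, contradicting the choice of $\delta'$. Thus no coset leader exceeds $\theta$, and $\delta_1=\theta$.

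Since the heavy lifting, namely that each admissible $M$ fails to be a coset leader, is already carried out in the three preceding lemmas, the proposition itself is a short synthesis; the point requiring care is confirming that the hypotheses of Lemmas~\ref{eq:0713}, \ref{eq:071301} and \ref{eq:071302} genuinely exhaust all $\mu$ in the range $0<\mu<\frac{\sum_{t=1}^{q-1}q^{\lceil mt/(q-1)-1\rceil}}{q-1}$ — in particular that the units-position condition on $a_0+b_0$ is properly accounted for alongside the middle-digit conditions — so that every candidate $M$ is covered. Granting this exhaustiveness, the argument closes cleanly via the scaling correspondence.
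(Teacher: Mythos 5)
Your proposal is correct and follows essentially the same route as the paper: reduce to coset leaders modulo $q^m-1$ via the scaling by $q-1$, identify $(q-1)\delta'$ with the quantity $M$ of \eqref{eq:M}, and invoke Lemmas \ref{eq:0713}--\ref{eq:071302} to exclude every candidate above $\theta$. Your write-up is in fact more careful than the paper's (which states the correspondence only for $\theta$ and does not comment on the exhaustiveness of the three lemmas' hypotheses), but the argument is the same.
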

\begin{proof}
If $\delta$ is a coset leader modulo $n$, then $(q-1)\delta$ must be a coset leader modulo $q^m-1$ since
\begin{equation*}
		\theta q^i \pmod{ n} \geq\theta \ \ \Leftrightarrow \ \ \theta (q-1)q^i \pmod{q^m-1} \geq\theta(q-1)
\end{equation*}
for any $1\leq i\leq m-1$. If $\delta>\delta_1$, we know that  $(q-1)\delta$ can be written as $M$, where $M$ is given in (\ref{eq:M}). From Lemmas \ref{eq:0713}-\ref{eq:071302}, we obtain that $M$ is not a coset leader modulo $q^m-1$. Hence, $\delta$ is not a coset leader modulo $n$ if $\delta>\delta_1$, i.e., $\delta_1$ is the largest coset leader modulo $n$. The desired results then follows.
\end{proof}

\begin{lemma}\label{071303}
Let $\delta_1$ be given as in Proposition \ref{pro:detal}, then $|C_{\delta_1}|=\frac{m}{\gcd(m,q-1)}$.
\end{lemma}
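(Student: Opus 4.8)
The plan is to reduce the computation of $|C_{\delta_1}|$ modulo $n$ to the period of the base-$q$ digit string of $(q-1)\delta_1$ modulo $q^m-1$, and then to read off that period from the explicit digit pattern supplied by Lemma \ref{round up}. First I would record the equivalence already used throughout this section: for every $\ell$,
\[
\delta_1 q^\ell \equiv \delta_1 \pmod n \iff (q^m-1)\mid (q-1)\delta_1(q^\ell-1),
\]
which follows from $a\mid b \Leftrightarrow (q-1)a\mid (q-1)b$ together with $(q-1)n=q^m-1$. Hence $|C_{\delta_1}|$ modulo $n$ equals the size of the $q$-cyclotomic coset of $\theta':=(q-1)\delta_1=(q-1)\theta$ modulo $q^m-1$. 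By Lemma \ref{detal} and Lemma \ref{round up}, $\theta'=q^m-1-\sum_{i=0}^{m-1}a_iq^i=\sum_{i=0}^{m-1}(q-1-a_i)q^i$, so $\theta'$ has the length-$m$ base-$q$ representation with digits $c_i=q-1-a_i\in[0,q-1]$ (and $\theta'<q^m-1$ since at least one large $a_i$ forces some $c_i<q-1$). Because multiplication by $q$ modulo $q^m-1$ is a cyclic shift of this length-$m$ digit string, $|C_{\theta'}|_{q^m-1}$ is exactly the cyclic period of $(c_0,\dots,c_{m-1})$, equivalently of $(a_0,\dots,a_{m-1})$.

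Next I would describe the pattern precisely. Set $g=\gcd(m,q-1)$ and write $q-1=mt_1+t_2$ as in Lemma \ref{round up}, noting $g=\gcd(m,t_2)$. If $t_2=0$, then $m\mid(q-1)$, all $a_i$ coincide, the digit string is constant, its period is $1=m/g$, and we are done. Assume $t_2\neq 0$. From the proof of Lemma \ref{round up}, $a_i$ equals the larger value $\lceil\frac{q-1}{m}\rceil$ precisely when $0<m-v_i\le t_2$, where $v_i=(it_2)\bmod m$; equivalently $a_i$ is large iff $v_i\ge m-t_2$. Thus $(a_i)$ is the indicator sequence of $\{i:(it_2)\bmod m\ge m-t_2\}$, and in particular $a_i$ depends only on $(it_2)\bmod m$.

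To pin down the period, write $m=gm'$ and $t_2=gt_2'$ with $\gcd(m',t_2')=1$, so $m'=m/g$. Since $m't_2=mt_2'\equiv 0\pmod m$, the residue $v_i$ is unchanged under $i\mapsto i+m'$, hence $m'$ is a period of $(a_i)$ and the minimal period $p$ divides $m'$. On the other hand, as $i$ runs over any block of $m'$ consecutive integers, $v_i=g\big((it_2')\bmod m'\big)$ runs over $\{0,g,2g,\dots,(m'-1)g\}$ once each (by $\gcd(m',t_2')=1$), so exactly $t_2'$ of these $a_i$ are large. If $p<m'$ were a period, the block of length $m'$ would split into $m'/p$ identical sub-blocks, each containing $t_2'p/m'$ large digits; integrality forces $m'\mid t_2'p$, whence $m'\mid p$ by $\gcd(m',t_2')=1$, contradicting $p<m'$. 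Therefore $p=m'=\frac{m}{\gcd(m,q-1)}$, which gives $|C_{\delta_1}|=\frac{m}{\gcd(m,q-1)}$.

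The main obstacle is this last step, namely excluding a proper sub-period. The counting argument is the crux: it converts the coprimality $\gcd(m',t_2')=1$ into the statement that the balanced distribution of large digits cannot repeat more often than every $m'$ positions. Everything preceding it—the passage from modulus $n$ to $q^m-1$, the identification of coset size with digit-string rotation, and the explicit description of which $a_i$ are large—is bookkeeping resting on Lemmas \ref{round up} and \ref{detal}.
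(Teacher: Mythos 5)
Your proposal is correct, and it reaches the conclusion by a route that differs in execution from the paper's, although both ultimately rest on the digit description in Lemma \ref{round up}. The paper sets $|C_{\delta_1}|=h$, translates $n\mid \delta_1(q^h-1)$ into the divisibility $\frac{q^m-1}{q^h-1}\,\big|\,\sum_{t=1}^{q-1}q^{\lceil mt/(q-1)-1\rceil}$, proves $h\le m/\gcd(m,q-1)$ by exhibiting an explicit factorization of this sum with the factor $\sum_{i_0}q^{mi_0/\gcd(m,q-1)}$, and proves the reverse divisibility by writing the cofactor $Q$ in base $q$ and equating digit sums ($rm/h=q-1$, hence $\frac{m}{\gcd(m,q-1)}\mid h$). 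You instead identify $|C_{\delta_1}|$ with the minimal cyclic period of the length-$m$ digit string of $(q-1)\delta_1$, equivalently of $(a_0,\dots,a_{m-1})$, observe from the proof of Lemma \ref{round up} that $a_i$ is the large value exactly when $(it_2)\bmod m\ge m-t_2$, and then get the period by an equidistribution-plus-counting argument: $m'=m/\gcd(m,t_2)$ is a period because $v_{i+m'}=v_i$, and no proper divisor $p$ of $m'$ can be a period because each block of length $p$ would have to contain $t_2'p/m'$ large digits, forcing $m'\mid p$ by $\gcd(m',t_2')=1$. The two lower-bound arguments are the same in spirit (count a quantity per period and invoke coprimality), but yours is carry-free by construction, whereas the paper's digit-sum step tacitly assumes the product $Q\cdot(1+q^h+\cdots+q^{m-h})$ produces no carries; your version also avoids the somewhat garbled indexing in the paper's factorization identity. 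What the paper's approach buys is that it never needs the fine structure of which $a_i$ are large, only their total $\sum a_i=q-1$; what yours buys is a cleaner, fully explicit verification at the cost of leaning harder on the internal details of Lemma \ref{round up}'s proof (the characterization via $v_i$) rather than only on its statement.
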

\begin{proof}
It is known that ord$_n(q)=m$,
then $|C_{\delta_1}|$ is a divisor of $m$. Assume that $|C_{\delta_1}|=h$, then
\begin{equation}\label{eq:qmodd1}
n\,|\,\delta_1(q^h-1).
\end{equation}
By definition, $\delta_1$ can be written as
$$\delta_1=n-\frac{\sum_{t=1}^{q-1}q^{\lceil\frac{mt}{q-1}-1\rceil}}{q-1}.$$
Then (\ref{eq:qmodd1}) holds if and only if
$\frac{q^m-1}{q^{h}-1}\,|\,\sum_{t=1}^{q-1}q^{\lceil\frac{mt}{q-1}-1\rceil},$ i.e.,
\begin{equation*}
q^{m-h}+q^{m-2h}+\cdots+q^{m-(i-1)h}+1\,|\,\sum_{t=1}^{q-1}q^{\lceil\frac{mt}{q-1}-1\rceil},
\end{equation*}
where $i=\frac{m}{h}$.
It is easy to check that
$$\sum_{t=1}^{q-1}q^{\lceil\frac{mt}{q-1}-1\rceil}=
\left(\sum_{i_0=0}^{\gcd(q-1,m)-1}q^{\frac{mi}{\gcd(q-1,m)}}\right)\left(\sum_{i_1=1}^{\frac{(q-1)}{\gcd(m,q-1)}}q^{\lceil\frac{mt}{q-1}-1\rceil}\right).$$
In addition,
$$q^{m-h}+q^{m-2h}+\cdots+q^{m-(i-1)h}+1=\sum_{i=0}^{\gcd(q-1,m)-1}q^{\frac{mi}{\gcd(q-1,m)}}$$
if $h=\frac{m}{\gcd(q-1,m)}$. Hence, (\ref{eq:qmodd1}) holds if $h=\frac{m}{\gcd(q-1,m)}$.
This means that
\begin{equation}\label{dadf102}
|C_{\delta_1}|=h\leq \frac{m}{\gcd(q-1,m)}.
\end{equation}

We now prove that $h\geq \frac{m}{\gcd(q-1,m)}$.   Assume that there exists a positive integer $Q$  such that
\begin{equation}\label{eqfsd}
Q(q^{m-h}+q^{m-2h}+\cdots+q^{m-(i-1)h}+1)=\sum_{t=1}^{q-1}q^{\lceil\frac{mt}{q-1}-1\rceil}.
\end{equation}
Let $Q=a_iq^i+a_{i-1}q^{i-1}+\cdots+a_1q+a_0$ and $a_i+a_{i-1}+\cdots+a_0=r$. If (\ref{eqfsd}) holds, then $\frac{rm}{h}=q-1$, which implies that
$rm=h(q-1)$. Hence, we have
$$r\cdot  \frac{m}{\gcd(m,q-1)}=h \cdot \frac{q-1}{\gcd(m,q-1)}.$$
Since
$$\gcd\left(\frac{m}{\gcd(m,q-1)},\frac{q-1}{\gcd(m,q-1)}\right)=1,$$
we obtain that
 $\frac{m}{\gcd(m,q-1)}\,|\,h.$
Hence, from (\ref{dadf102}) we have $|C_{\delta_1}|= \frac{m}{\gcd(q-1,m)}$.
The desired conclusion then follows.
\end{proof}

\begin{remark}
When $q=3$, $\delta_1$ and $|C_{\delta_1}|$ have been given in \cite[Lemma 17]{Lid17}. Let $m\geq q$, $b\equiv m-1 \pmod {q-1}$. When $b=0$, $b=1$ or $b=q-2$, $\delta_1$ and $|C_{\delta_1}|$ have been given in \cite[Lemma 16]{Zhu19}. We general these results in Proposition \ref{pro:detal} and Lemma \ref{071303}.
\end{remark}

From Proposition \ref{pro:detal} and Lemma \ref{071303}, one can get the following theorem.

\begin{theorem}\label{theorem4}
 When $m\geq 3$ be an integer and $q\geq 3$ be a prime power, the BCH code $\mathcal{C}_{(q,n,\delta_1)}$ has parameters
$$\left[\frac{q^m-1}{q-1}, \frac{m}{\gcd(m,q-1)}+1, d\geq q^{m-1}-1-\frac{\sum_{t=1}^{q-2}q^{\lceil\frac{mt}{q-1}-1\rceil}-q+2}{q-1}\right].$$
\end{theorem}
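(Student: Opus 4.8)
The plan is to read off all three parameters from Proposition~\ref{pro:detal} and Lemma~\ref{071303}, treating the theorem as a packaging of those two results together with the BCH bound. The length $n=\frac{q^m-1}{q-1}$ is fixed by construction, so the content is the dimension and the distance bound. By definition the defining set of $\mathcal{C}_{(q,n,\delta_1)}$ with respect to $\beta$ is $T=C_1\cup C_2\cup\cdots\cup C_{\delta_1-1}$.

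For the dimension, I would first pin down $T$ exactly. By Proposition~\ref{pro:detal}, $\delta_1=\theta$ is the \emph{largest} coset leader modulo $n$. The key observation is that every nonzero $s\in\Z_n$ lying outside $C_{\delta_1}$ has a coset leader strictly smaller than $\delta_1$, hence at most $\delta_1-1$, so $s\in T$; conversely $0\notin T$ and $C_{\delta_1}\cap T=\emptyset$ since $\delta_1\notin\{1,\ldots,\delta_1-1\}$. This gives $T=\Z_n\setminus(\{0\}\cup C_{\delta_1})$ and therefore $|T|=n-1-|C_{\delta_1}|$. Consequently the dimension is
\[
k=n-\deg g_{(q,n,\delta_1)}=n-|T|=1+|C_{\delta_1}|.
\]
Substituting $|C_{\delta_1}|=\frac{m}{\gcd(m,q-1)}$ from Lemma~\ref{071303} yields $k=\frac{m}{\gcd(m,q-1)}+1$, as claimed.

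For the minimum distance I would invoke the BCH bound. Since $T\supseteq\{1,2,\ldots,\delta_1-1\}$ is a block of $\delta_1-1$ consecutive integers, the BCH bound gives $d\geq\delta_1$. Lemma~\ref{detal} (restated through Proposition~\ref{pro:detal}) identifies $\delta_1=\theta=q^{m-1}-1-\frac{\sum_{t=1}^{q-2}q^{\lceil\frac{mt}{q-1}-1\rceil}-q+2}{q-1}$, which is precisely the lower bound asserted in the statement. Combining this with the length and dimension computed above completes the argument.

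The proof is essentially bookkeeping: all the delicate work already resides in Lemma~\ref{detal}, Lemmas~\ref{eq:0713}--\ref{eq:071302}, Proposition~\ref{pro:detal}, and Lemma~\ref{071303}, which is why the theorem can be declared ``straightforward'' from them. The one step that deserves explicit care is the identity $T=\Z_n\setminus(\{0\}\cup C_{\delta_1})$, since this is exactly where the \emph{maximality} of $\delta_1$ is used: without it one could not conclude that the cosets $C_1,\ldots,C_{\delta_1-1}$ already exhaust every nonzero residue apart from $C_{\delta_1}$. Everything else is a direct substitution and an application of the BCH bound.
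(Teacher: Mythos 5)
Your proof is correct and follows the same route the paper intends: the paper simply states that the theorem follows from Proposition~\ref{pro:detal} and Lemma~\ref{071303}, and your write-up supplies exactly the bookkeeping that claim presupposes (the identity $T=\Z_n\setminus(\{0\}\cup C_{\delta_1})$ from the maximality of $\delta_1$, the dimension count $k=1+|C_{\delta_1}|$, and the BCH bound $d\ge\delta_1$). No gaps.
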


\begin{example}
Let $(q,m)=(3,4)$. Then the code $\mathcal{C}_{(q,n,\delta_1)}$ in Theorem \ref{theorem2} has parameters $[40,3,\geq 25]$. This code is the best cyclic code according to \cite[P. 305]{Dingbook15} when the equality holds.
\end{example}

In the following, we present a sufficient and necessary condition for $\mathcal{C}_{(q,n,\delta)}$ being a dually-BCH code. We first give a key lemma.

	\begin{lemma} \label{proposition-(q^m-1)/(q-1)}
		Let $\delta'$ be the coset leader of $C_{n-\delta_1}$ modulo $n$. Then the following hold.
		\begin{enumerate}
			\item $\delta_1 \in T^\bot$ is a coset leader modulo $n$ if $2 \le \delta \le \delta'$.
			\item $\delta' \in T^\bot$ is a coset leader modulo $n$ if $\delta' < \delta \le \delta_1$.
		\end{enumerate}
	\end{lemma}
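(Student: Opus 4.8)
The plan is to reduce both assertions to a single membership criterion for the defining set, expressed entirely in terms of coset leaders. Write $\mathrm{CL}(x)$ for the coset leader of the $q$-cyclotomic coset modulo $n$ containing $x$. Since $T=C_1\cup C_2\cup\cdots\cup C_{\delta-1}$, a nonzero $x\in\mathbb{Z}_n$ lies in $T$ exactly when some member of its coset falls in $[1,\delta-1]$, i.e. when $\mathrm{CL}(x)\le\delta-1$; equivalently $x\notin T$ if and only if $\mathrm{CL}(x)\ge\delta$. Combining this with $T^{\bot}=\mathbb{Z}_n\setminus T^{-1}$ and $T^{-1}=\{n-i:i\in T\}$, one gets the key criterion: for $y\in\{1,\ldots,n-1\}$, $y\in T^{\bot}$ if and only if $n-y\notin T$, i.e. if and only if $\mathrm{CL}(n-y)\ge\delta$. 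Establishing this criterion would be the first step.

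For part (1) I would apply the criterion with $y=\delta_1$. By Proposition~\ref{pro:detal}, $\delta_1$ is already a coset leader, so only the membership $\delta_1\in T^{\bot}$ needs checking. By the criterion this is equivalent to $\mathrm{CL}(n-\delta_1)\ge\delta$, and $\mathrm{CL}(n-\delta_1)=\delta'$ holds by the very definition of $\delta'$. Hence $\delta_1\in T^{\bot}$ precisely when $\delta\le\delta'$, which is exactly the hypothesis $2\le\delta\le\delta'$.

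For part (2) the crux is to compute $\mathrm{CL}(n-\delta')$. Since $\delta'$ is the coset leader of $C_{n-\delta_1}$, we may write $\delta'\equiv(n-\delta_1)q^{a}\pmod n$ for some $a$, and then $n-\delta'\equiv-(n-\delta_1)q^{a}\equiv\delta_1 q^{a}\pmod n$, using $-(n-\delta_1)\equiv\delta_1\pmod n$. Thus $n-\delta'$ lies in the coset $C_{\delta_1}$, and because $\delta_1$ is a coset leader we obtain $\mathrm{CL}(n-\delta')=\delta_1$. Applying the criterion with $y=\delta'$ then gives $\delta'\in T^{\bot}$ if and only if $\delta\le\delta_1$, which covers the hypothesis $\delta'<\delta\le\delta_1$; and $\delta'$ is a coset leader by construction.

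All the computations here are routine; the single step that requires care, and the only place something could go wrong, is the identity $\mathrm{CL}(n-\delta')=\delta_1$. Its content is the \emph{double-negation} phenomenon: applying the map $s\mapsto n-s$ to a coset leader and then taking coset leaders sends $C_{\delta_1}$ to $C_{n-\delta_1}$ and back to $C_{\delta_1}$. Once this observation is secured, both statements drop out immediately from the coset-leader membership criterion for $T$ and $T^{\bot}$.
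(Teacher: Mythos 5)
Your proof is correct and follows essentially the same route as the paper: both arguments rest on the identity $C_{n-\delta'}=C_{\delta_1}$ (your ``double-negation'' step) together with the observation that a coset $C_s$ meets $T=C_1\cup\cdots\cup C_{\delta-1}$ exactly when its leader is at most $\delta-1$, so that $y\in T^{\bot}$ iff $\mathrm{CL}(n-y)\ge\delta$. The paper states this in terms of coset containment ($C_{\delta'}\nsubseteq T \Rightarrow C_{\delta_1}\subseteq T^{\bot}$, and symmetrically), which is the same argument you give via the membership criterion.
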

	\begin{proof}
Since $\delta'$ be the coset leader of $C_{n-\delta_1}$ modulo $n$, we have $C_{n-\delta'}=C_{\delta_1}.$ If $2 \le \delta \le \delta'$, i.e., $C_{\delta'} \nsubseteq T$, then $C_{n-\delta'}=C_{\delta_1} \nsubseteq T^{-1}$, $C_{\delta_1} \subseteq T^{\bot}$.
		If $\delta' < \delta \le \delta_1$, i.e., $C_{\delta_1} \nsubseteq T$, then $C_{n-\delta_1}=C_{\delta'} \nsubseteq T^{-1}$ and $C_{\delta'} \subseteq T^{\bot}$.
			\end{proof}

Denote $m=r(q-1)+s$ and $\nu=\lceil \frac {(s-1)(q-1)} s \rceil$, where $r \ge 0$ and $0 \le s \le q-2$. Note that $$\delta_1 = q^{m-1}-1-\frac{\sum\limits_{t=1}^{q-2}q^{\lceil \frac{mt}{q-1}-1 \rceil}-q+2}{q-1} =\frac{q^m-\sum\limits_{t=1}^{q-1}q^{\lceil \frac{mt}{q-1}-1 \rceil}-1}{q-1}.$$  Then
\begin{equation}\label{eq:0711}
n-\delta_1=\frac{\sum_{t=1}^{q-1}q^{\lceil \frac{mt}{q-1}-1 \rceil}}{q-1}.
\end{equation}

 Let $\delta'$ and $\delta''$ be the coset leaders of $C_{n-\delta_1}$ modulo $n$ and $C_{(q-1)(n-\delta_1)}$ modulo $(q-1)n$, respectively.
It is similar with (\ref{eq:deta418}), we have $(q-1) \mid \delta''$ and $\delta''=\delta'(q-1)$. From (\ref{eq:0711}) we have
$(q-1)(n-\delta_1)=\sum_{t=1}^{q-1}q^{\lceil \frac{mt}{q-1}-1 \rceil}$. From Lemma \ref{round up}, one can see that the $q$-adic expansion of $\delta''$ has the form $(\textbf{0}_r, 1, \cdots)_q$ if $r\geq 1$ and $\delta''$ has the form $(\lfloor \frac{q-1}{m}\rfloor, \cdots)_q$ if $r=0$.
where $\textbf{0}_r=(\underbrace{0,0,\cdots, 0}_r)_q$.
Then
 \begin{equation}\label{eqn-delta'} \delta'=\frac {\delta''} {q-1}> \frac {q^{m-r-1}-1} {q-1}.\end{equation}

With the preparations above, we now give a sufficient and necessary condition for $\mathcal{C}_{(q,n,\delta)}$ being a dually-BCH code.

\begin{theorem}\label{theorem5}
Let $n=\frac{q^m-1}{q-1}$, where $q \ge 3$ is a prime power and $m \ge 4$ is a positive integer.
Then $\mathcal C_{(q,n,\delta)}$ is a dually-BCH code if and only if
$\delta_1+1 \le \delta \le n$, where $\delta_1$ is given in Proposition \ref{pro:detal}.
\end{theorem}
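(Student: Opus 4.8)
The plan is to translate the dually-BCH property into a statement about the \emph{shape} of the dual defining set $T^{\bot}$, and then to contradict that shape for small $\delta$ by means of Lemma~\ref{proposition-(q^m-1)/(q-1)}. Since $0\in T^{\bot}$ and $T^{\bot}$ is a union of complete $q$-cyclotomic cosets, $\mathcal C_{(q,n,\delta)}^{\bot}$ is a BCH code with respect to $\beta$ if and only if $T^{\bot}=C_0\cup C_1\cup\cdots\cup C_{J-1}$ for some $J\ge 1$, i.e. $T^{\bot}$ is an initial segment of cosets. Writing $I(\delta)$ for the smallest positive integer with $\{0,1,\ldots,I(\delta)-1\}\subseteq T^{\bot}$ and $I(\delta)\notin T^{\bot}$ (as in Lemma~\ref{lemma-break-point-(q^m-1)/(q-1)}; note $I(\delta)$ is forced to be a coset leader, since otherwise its leader would already lie in $T^{\bot}$ and drag $I(\delta)$ into $T^{\bot}$), this initial-segment condition is equivalent to $T^{\bot}=\{0,1,\ldots,I(\delta)-1\}$. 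Hence $\mathcal C_{(q,n,\delta)}$ fails to be dually-BCH exactly when $T^{\bot}$ contains some coset leader strictly larger than $I(\delta)$.

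For the ``if'' direction, suppose $\delta_1+1\le\delta\le n$. By Proposition~\ref{pro:detal}, $\delta_1$ is the largest coset leader modulo $n$, so $T=C_1\cup\cdots\cup C_{\delta-1}$ already contains every nonzero coset; thus $T=\mathbb{Z}_n\setminus\{0\}$, $T^{-1}=\mathbb{Z}_n\setminus\{0\}$, and $T^{\bot}=\{0\}=C_0$. This is the defining set of the BCH code with $b=0$ and designed distance $2$, so $\mathcal C_{(q,n,\delta)}^{\bot}$ is BCH and $\mathcal C_{(q,n,\delta)}$ is dually-BCH.

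For the ``only if'' direction I would fix $2\le\delta\le\delta_1$ and exhibit a coset leader of $T^{\bot}$ exceeding $I(\delta)$, using Lemma~\ref{proposition-(q^m-1)/(q-1)}. In the range $2\le\delta\le\delta'$ the lemma gives $\delta_1\in T^{\bot}$; since the expression for $\delta_1$ in Proposition~\ref{pro:detal} yields $\delta_1>\frac{q^{m-1}-1}{q-1}$ while Lemma~\ref{lemma-break-point-(q^m-1)/(q-1)} gives $I(\delta)\le\frac{q^{m-1}-1}{q-1}$ (its maximal value, attained at $t=1$), we obtain $\delta_1>I(\delta)$ and are done. In the range $\delta'<\delta\le\delta_1$ the lemma instead gives $\delta'\in T^{\bot}$, and it remains to prove $\delta'>I(\delta)$. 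If $\delta>\frac{q^{m-1}-1}{q-1}$ then $I(\delta)=1<\delta'$ (one checks $\delta'\ge 2$, because $\delta'=1$ would force $(q-1)(n-\delta_1)$ to be a single term $(q-1)q^{j}$, impossible since every digit in its expansion is at most $\lceil\frac{q-1}{m}\rceil<q-1$ by Lemma~\ref{round up}). Otherwise $\frac{q^{t}-1}{q-1}<\delta\le\frac{q^{t+1}-1}{q-1}$ for some $1\le t\le m-2$, so $I(\delta)=\frac{q^{m-t}-1}{q-1}$, and the goal becomes the inequality $\delta'>\frac{q^{m-t}-1}{q-1}$.

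The hard part is precisely this last inequality. The crude bound $\delta'>\frac{q^{m-r-1}-1}{q-1}$ from \eqref{eqn-delta'} forces only $t\ge m-r-1$, hence $I(\delta)\le\frac{q^{r+1}-1}{q-1}$, which closes the argument when $m\ge 2r+2$ but can be too weak for small $m$ and small $q$ (already for $q=3$, $m=4$ one has $r=2$ and this estimate does not suffice, even though the inequality $\delta'>I(\delta)$ itself still holds). To handle the remaining configurations I would return to the exact $q$-adic expansion of $\delta''=(q-1)\delta'$ supplied by Lemma~\ref{round up}: beyond its leading block $(\mathbf{0}_r,1,\ldots)_q$ one must locate its \emph{second} nonzero digit, whose position is governed by the quantity $\nu=\lceil\frac{(s-1)(q-1)}{s}\rceil$, in order to sharpen the lower bound on $\delta'$ and to pin $t$ down more tightly. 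Carrying out this digit-level comparison of cyclic shifts, rather than relying on the coarse powers-of-$q$ estimate, is the technical core of the proof; once $\delta'>I(\delta)$ is established in every subcase, the initial-segment criterion is violated and the theorem follows.
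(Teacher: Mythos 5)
Your framework coincides with the paper's: reduce the dually-BCH property to the initial-segment condition $T^{\bot}=\{0,1,\ldots,I(\delta)-1\}$, dispose of $\delta_1+1\le\delta\le n$ by noting $T^{\bot}=\{0\}$, and for $2\le\delta\le\delta_1$ use Lemma~\ref{proposition-(q^m-1)/(q-1)} to place $\delta_1$ (resp.\ $\delta'$) in $T^{\bot}$ and compare it with $I(\delta)$. The genuine gap is that you stop exactly at what you yourself call ``the technical core'': in the configurations where the crude estimate $\delta'>\frac{q^{m-r-1}-1}{q-1}$ together with $I(\delta)\le\frac{q^{r+1}-1}{q-1}$ does not close the argument (i.e.\ when $m<2r+2$), you only announce a digit-level analysis of $\delta''$ via Lemma~\ref{round up} and $\nu$ without carrying it out. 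As written, the ``only if'' direction is therefore not proved.

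That said, the missing piece is far smaller than you believe. Since $r=\lfloor\frac{m}{q-1}\rfloor$, for every prime power $q\ge 4$ one has $r\le\frac{m}{3}\le\frac{m-2}{2}$ when $m\ge 6$, and $r\le 1=\lfloor\frac{m-2}{2}\rfloor$ when $m\in\{4,5\}$; hence $m\ge 2r+2$ holds for \emph{all} $q>3$ and $m\ge 4$, and your crude bound already finishes every case with $q>3$ --- this is precisely the paper's two-case argument built on Equation~\eqref{eqn-delta'}. The inequality $m\ge 2r+2$ fails only for $q=3$ (where $m=2r+s$ with $s\le 1$), and the paper does not re-prove that case either: it simply invokes \cite[Theorem 30]{GDL21}. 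So to complete your proof you need only (i) observe that $m\ge 2r+2$ whenever $q\ge 4$ and $m\ge 4$, and (ii) cite the known ternary result, rather than undertake the digit-level comparison you sketch.
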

\begin{proof}
When $q=3$, the result has been given in \cite[Theorem 30]{GDL21}, we only prove the result for $q>3$ in the following.

		It is clear that $0 \notin T$ and $1 \in T $, so $0 \notin T^{-1}$ and  $n-1 \in T^{-1}$. Furthermore, we have $0 \in T^{\bot}$ and $n-1 \notin T^{\bot}$,
which means that $C_0$ must be the initial cyclotomic coset of $T^\bot$. Consequently, there must be an integer $J \ge 1$
such that $T^\bot=C_0 \cup C_1 \cup \cdots \cup C_{J-1}$ if $\mathcal C_{(q,n,\delta)}$ is a dually-BCH code.
		
When $\delta_1+1 \le \delta \le n$, it is easy to see that $T^\bot = \{0\}$ and $\mathcal C_{(q,n,\delta)}^\bot$ is a BCH code with respect to $\beta$.
It remains to show that $\mathcal C_{(q,n,\delta)}^\bot$ is not a BCH code with respect to $\beta$ when $2 \le \delta \leq \delta_1$.
To this end, we show that there is no integer $J \ge 1$ such that $T^\bot=C_0 \cup C_1 \cup \cdots \cup C_{J-1}$.
Recall that $m=r(q-1)+s$, we have the following two cases.

\noindent{\bf Case 1:} $2\le \delta \le \frac{q^{r+1}-1}{q-1}$. It is easy to see that $m-r-1\geq r+1$. Then from Equation \eqref{eqn-delta'} and Proposition \ref{proposition-(q^m-1)/(q-1)} that $\delta_1\in T^\bot$
is the coset leader of $C_{\delta_1}$.
It follows from Lemma \ref{lemma-break-point-(q^m-1)/(q-1)} that $$I_{\max}:=\max\left\{I(\delta): 2\le \delta \le \frac{q^{r+1}-1}{q-1}\right\}=I(2)=\frac{q^{m-1}-1}{q-1}=(0,\underbrace{1, \dots, 1}_{m-1})_q.$$
Note that $\delta_1 = q^{m-1}-1-\frac{\sum\limits_{t=1}^{q-2}q^{\lceil \frac{mt}{q-1}-1 \rceil}-q+2}{q-1}$. It is easy to see that $(q-1)\delta_1 > q^{m-1}-1$ and
$\delta_1>\frac{q^{m-1}-1}{q-1}$. It then follows that there is no integer $J \ge 1$ such that $T^\bot=C_0 \cup C_1 \cup \cdots \cup C_{J-1}$, i.e., $\mathcal C_{(q,n,\delta)}^\bot$ is not a BCH code with respect to $\beta$.

\noindent{\bf Case 2:} If $\frac{q^{r+1}-1}{q-1} < \delta < q^{m-1}-\frac{\sum\limits_{t=1}^{q-2}q^{\lceil \frac{mt}{q-1}-1 \rceil}-q+2}{q-1}$. It then follows from Proposition \ref{proposition-(q^m-1)/(q-1)}
that $\delta' \in T^\bot$ is the coset leader of $C_{\delta'}$.
It follows from Lemma \ref{lemma-break-point-(q^m-1)/(q-1)} that $$I_{\max}:=\max\left\{I(\delta): \frac{q^{r+1}-1}{q-1} < \delta < q^{m-1}-\frac{\sum\limits_{t=1}^{q-2}q^{\lceil \frac{mt}{q-1}-1 \rceil}-q+2}{q-1}\right\}=\frac{q^{m-r-1}-1}{q-1}.$$
We deduce from Equation \eqref{eqn-delta'} that
$\delta'>\frac{q^{m-r-1}-1}{q-1}$. It then follows from Lemma \ref{lemma-break-point-(q^m-1)/(q-1)}
that there is no integer $J \ge 1$ such that $T^\bot=C_0 \cup C_1 \cup \cdots \cup C_{J-1}$, i.e., $\mathcal C_{(q,n,\delta)}^\bot$ is not a BCH code with respect to $\beta$.		
	\end{proof}

\begin{remark}
In \cite{GDL21}, the authors gave the range of $\delta$ for BCH codes $\mathcal{C}_{(3,n,\delta})$ being dually-BCH codes  and showed that it looks much harder to give a characterisation of $\mathcal{C}_{(q,n,\delta})$ being dually-BCH codes, where $q>3$ is a prime power. Theorem \ref{theorem5} finished this work.
\end{remark}

\section{Conclusion}\label{sec-finals}
Let $n=\frac{q^m-1}{q+1}$ for $m\geq 4$ being even and $q$ being a prime power, or $n=\frac{q^m-1}{q-1}$ for $m\geq 4$ being a positive integer and $q$ being an odd prime power.
The main contributions of this paper are the following:
\begin{itemize}
\item Sufficient and necessary conditions for BCH codes $\mathcal{C}_{(q,n,\delta})$ being dually-BCH codes were given, where $2\leq \delta\leq n$. Lower bounds on the minimum distances of their dual codes are developed for $n=\frac{q^m-1}{q+1}$. In this sense, we extended the results in \cite{GDL21}.
\item We determined the largest coset leader modulo $\frac{q^m-1}{q-1}$, which is very useful to completely solve Open Problem 45 in \cite{Li2017}.
We also determined the largest coset leader modulo $\frac{q^m-1}{q+1}$, so the conjecture in \cite{WLP19} were proved.
\end{itemize}

\end{document}